\definecolor{cblue}{rgb}{0.16, 0.32, 0.75}
\definecolor{cred}{rgb}{0.7, 0.11, 0.11}
\newcommand\mrm\mathrm
\theoremstyle{plain}
\newtheorem{theorem}{Theorem}
\newtheorem{lemma}{Lemma}
\newtheorem{corollary}{Corollary}
\newtheorem{prop}{Proposition}
\theoremstyle{definition}
\theoremstyle{remark}
\newtheorem{remark}{Remark}
\DeclareMathOperator{\one}{\mathds{1}}
\newcommand{\denseD}{\mathcal{D}_{0}}
\newcommand{\fockspace}{\mathcal{F}}
\newcommand\finite{\mathrm{fin}}
\newcommand\finitePhotonFock{\fockspace_{\finite}}
\newcommand\finiteFock\finitePhotonFock
\newcommand\creation{a^\dagger}
\newcommand\ad\creation
\newcommand\adag\creation
\newcommand\annihilation{a}
\newcommand\an\annihilation
\newcommand\photonnumber{\hat{n}}
\newcommand\nn\photonnumber
\newcommand{\numberOfExcitations}{\mathcal N}
\newcommand\numExc\numberOfExcitations
\newcommand{\atomNumber}{N}
\newcommand{\atomNumberUpState}{\atomNumber{}_+}
\newcommand\Nup\atomNumberUpState
\newcommand{\atomNumberDownState}{\atomNumber{}_-}
\newcommand\Ndown\atomNumberDownState
\newcommand\spin[1]{S^{#1}}
\newcommand\s{\spin} 
\newcommand\Splus{\spin+}
\newcommand\Sminus{\spin-}
\newcommand\Sp\Splus
\newcommand\Sm\Sminus
\newcommand\SX{S_x}
\newcommand\Sx{\SX}
\newcommand\SY{S_y}
\newcommand\Sy{\SY}
\newcommand\SZ{S_z}
\newcommand\Sz{\SZ}
\newcommand\photonnumberalt{\boldsymbol{\nn}}
\newcommand\nnalt\photonnumberalt
\newcommand\creationalt{\boldsymbol{a^\dagger}}
\newcommand\adalt\creationalt
\newcommand\adagalt\creationalt
\newcommand\annihilationalt{\boldsymbol{a}}
\newcommand\analt\annihilationalt
\newcommand\spinalt[1]{\boldsymbol{S^{#1}}}
\newcommand\salt\spinalt
\newcommand\SXalt{\boldsymbol{S_x}}
\newcommand\Sxalt\SXalt
\newcommand\SYalt{\boldsymbol{S_y}}
\newcommand\Syalt\SYalt
\newcommand\SZalt{\boldsymbol{S_z}}
\newcommand\Szalt\SZalt
\newcommand\pauli[1]{\sigma_{#1}}
\newcommand\pauliI[2]{\pauli{#1}^{(#2)}} 
\newcommand\paulii\pauliI
\newcommand\pauliPlus{\pauli{+}}
\newcommand\pauliMinus{\pauli{-}}
\newcommand\pauliPlusI[1]{\pauliI{+}{#1}}
\newcommand\pauliMinusI[1]{\pauliI{-}{#1}}
\newcommand{\tcpar}[2]{z_{#1}^{#2}}
\newcommand{\tcparvec}[1]{\underbar{z}^{#1}}
\newcommand{\tcpsi}[3]{\psi_{#1,#2}^{#3}}
\newcommand{\tcpsinormed}[3]{\tilde{\psi}_{#1,#2}^{#3}}
\newcommand{\tceig}[3]{E_{#1,#2}^{#3}}
\newcommand{\M}{\boldsymbol{\hat M}}
\newcommand{\e}{\mathrm{e}}
\newcommand{\iu}{\mathrm{i}\mkern1mu}
\let\Re\undefined
\DeclareMathOperator{\Re}{Re}
\let\Im\undefined
\DeclareMathOperator{\Im}{Im}
\newcommand{\dom}{\mathrm{Dom}}
\newcommand{\dd}[1]{\mathop{\mathrm{d}#1}}
\newcommand{\adj}{^\ast}
\newcommand\close\overline
\newcommand{\primed}{^\prime}
\newcommand{\HH}{\mathcal{H}}
\newcommand{\N}{\mathds{N}} 
\newcommand{\R}{\mathds{R}} 
\newcommand{\C}{\mathds{C}} 
\newcommand\Ltwo[1]{L^2(#1)}
\DeclarePairedDelimiter\of()
\DeclarePairedDelimiter\cof{\{}{\}}
\DeclarePairedDelimiter\eof{[}{]}
\DeclarePairedDelimiterX\comm[2]{[}{]}{
    \ifblank{#1}{\;\cdot\;}{#1}, \ifblank{#2}{\;\cdot\;}{#2}
}
\newcommand\commutator\comm
\DeclarePairedDelimiterX\innerp[2]{\langle}{\rangle}{
    \ifblank{#1}{\;\cdot\;}{#1},\; \ifblank{#2}{\;\cdot\;}{#2}
}
\DeclarePairedDelimiterX\braket[2]{\langle}{\rangle}{
    \ifblank{#1}{\;\cdot\;}{#1}\,\delimsize\vert\, \ifblank{#2}{\;\cdot\;}{#2}
}
\DeclarePairedDelimiterX\ket[1]{\lvert}{\rangle}{#1}
\newcommand{\emptyplaceholder}{\;\cdot\;} 
\DeclarePairedDelimiterX\norm[1]\lVert\rVert{
    \ifblank{#1}{\emptyplaceholder}{#1}
}
\DeclarePairedDelimiterXPP\pnorm[1]{}\lVert\rVert{_{p}}{
    \ifblank{#1}{\emptyplaceholder}{#1}
}
\DeclarePairedDelimiterXPP\twonorm[1]{}\lVert\rVert{_{2}}{
    \ifblank{#1}{\emptyplaceholder}{#1}
}
\DeclarePairedDelimiterXPP\opnorm[1]{}\lVert\rVert{_{\mathrm{op}}}{
    \ifblank{#1}{\emptyplaceholder}{#1}
}
\DeclarePairedDelimiterX\abs[1]\lvert\rvert{#1}
\DeclarePairedDelimiterX\aof[1]\langle\rangle{#1}
\providecommand\given{}
\DeclarePairedDelimiterXPP\Prob[1]{\mathcal{P}}(){}{
   \renewcommand\given{\nonscript\:\delimsize\vert\nonscript\:\mathopen{}}
\providecommand\given{}
\newcommand\SetSymbol[1][]{%
   \nonscript\:#1\vert
   \allowbreak
   \nonscript\:
   \mathopen{}}
\DeclarePairedDelimiterX\Set[1]\{\}{%
\renewcommand\given{\SetSymbol[\delimsize]}
#1
}
\title{Quantifying the rotating-wave approximation of the Dicke model}
\author{Leonhard Richter}
\email{leonhard.richter@fau.de}
\author{Daniel Burgarth}
\author{Davide Lonigro}
\affiliation{Department Physik, Friedrich-Alexander-Universität Erlangen-Nürnberg, Staudtstra{\ss}e 7, 91058 Erlangen, Germany}
\date{07.08.2025}
\begin{document}

\maketitle

\begin{abstract}\sloppy 
We analytically find quantitative, non-perturbative bounds to the validity of the rotating-wave approximation (RWA) for the multi-atom generalization of the quantum Rabi model: the Dicke model. 
Precisely, we bound the norm of the difference between the evolutions of states generated by the Dicke model and its rotating-wave approximated counterpart, that is, the Tavis--Cummings model. 
The intricate role of the parameters of the model in determining the bounds is discussed and compared with numerical results. 
Our bounds are intrinsically state-dependent and, in particular, capture a nontrivial dependence on the total angular momentum of the initial state; this behaviour also seems to be confirmed by accompanying numerical results. 
\end{abstract}

\section{Introduction}

The rotating-wave approximation (RWA), in its numerous and diverse incarnations, consists in neglecting or averaging out fastly oscillating terms in a differential equation---the latter being, for closed quantum systems, the Schr\"odinger equation. More often than not, such a procedure yields simpler models for which analytical or semi-analytical methods are available~\cite{Rabi_Zacharias_Millman_Kusch_1938,Rabi_1937,Garraway_2011,heibBoundingRotatingWave2025,Tavis_Cummings_1968,Bogoliubov_Bullough_Timonen_1996,larson2021jaynes}.

In particular, the RWA is an extremely common tool in the study of the interaction between quantum mechanical matter and boson fields, the benchmark case being the quantum Rabi model---the quantum counterpart of the Rabi model, developed by Jaynes and Cummings in the 1960s, which describes the interaction between a single two-level system (a spin) and a monochromatic boson field~\cite{Jaynes_Cummings_1963}. The Schrödinger equation corresponding to this model is generated by the unbounded linear operator given by the following expression (all tensor products being implicit):
\begin{equation}
    H_{\mrm{Rabi}}=\frac{\omega_0}{2}\sigma_z+\omega a^\dag a+\lambda(\sigma_++\sigma_-)(a+a^\dag),
\end{equation}
with $\omega_0,\omega$ respectively being the difference between the energies of the two spin levels, and the frequency of the field; $a,a^\dag$ being the annihilation and creation operators of the field; and $\sigma_\pm$ being the ladder operators implementing transitions between the two spin levels. Here, the RWA consists in replacing the model above with the following one: 
\begin{equation}
    H_{\mrm{JC}}=\frac{\omega_0}{2}\sigma_z+\omega a^\dag a+\lambda(\sigma_+a+\sigma_-a^\dag),
\end{equation}
which is usually denoted as the Jaynes--Cummings model; that is, the terms $\sigma_+a^\dag$ and $\sigma_-a$ are suppressed. The resulting model is far easier to solve since it exhibits a symmetry---conservation of the energy of the non-interacting system---which would be otherwise broken by the neglected terms~\cite{Jaynes_Cummings_1963}.

The heuristic justification of the RWA in the quantum Rabi model goes as follows. Assuming the resonance condition $\omega=\omega_0$ for simplicity, in the interaction picture corresponding to the free Hamiltonian $H_0=\omega\sigma_z/2+\omega a^\dag a$ the interaction terms $\sigma_+a^\dag$ and $\sigma_-a$ acquire a time-dependent oscillatory term $\mathrm{e}^{\pm2\mathrm{i}\omega t}$.
In spite of some hand-waving arguments for these terms to vanish in the limit of large frequencies \(\omega \to \infty\) being known for long time~\cite{Jaynes_Cummings_1963,klimovGrouptheoreticalApproachQuantum2009}
and of the ubiquity of the RWA in applications, a rigorous justification of the RWA for the quantum Rabi model---together with a quantitative evaluation of the error committed when performing the RWA for finite values of the parameters---is relatively recent.
This was obtained in Ref.~\cite{burgarthTamingRotatingWave2024} (based on methods developed in Ref.~\cite{burgarthOneBoundRule2022}); there, in particular, it was shown that the error committed when performing the RWA carries a nontrivial dependence, other than on the parameter $g=\lambda/\omega$, on the number $n$ of bosons of the \emph{specific} state whose evolution we are describing. In particular, the error committed when replacing $\mathrm{e}^{-\mathrm{i}tH_{\mrm{Rabi}}}$ with $\mathrm{e}^{-\mathrm{i}tH_{\mrm{JC}}}$ converges to zero as $\omega\to\infty$ for \emph{each} state; but, because of the dependence on the number of bosons $n$ (which might take arbitrarily large values), the supremum of the error on all possible states does not converge to zero. Mathematically, the RWA converges in the strong dynamical sense, but not in the uniform dynamical sense.

\begin{figure}[t]
    \centering
    \includeinkscape[pretex=\scriptsize, width=\textwidth]{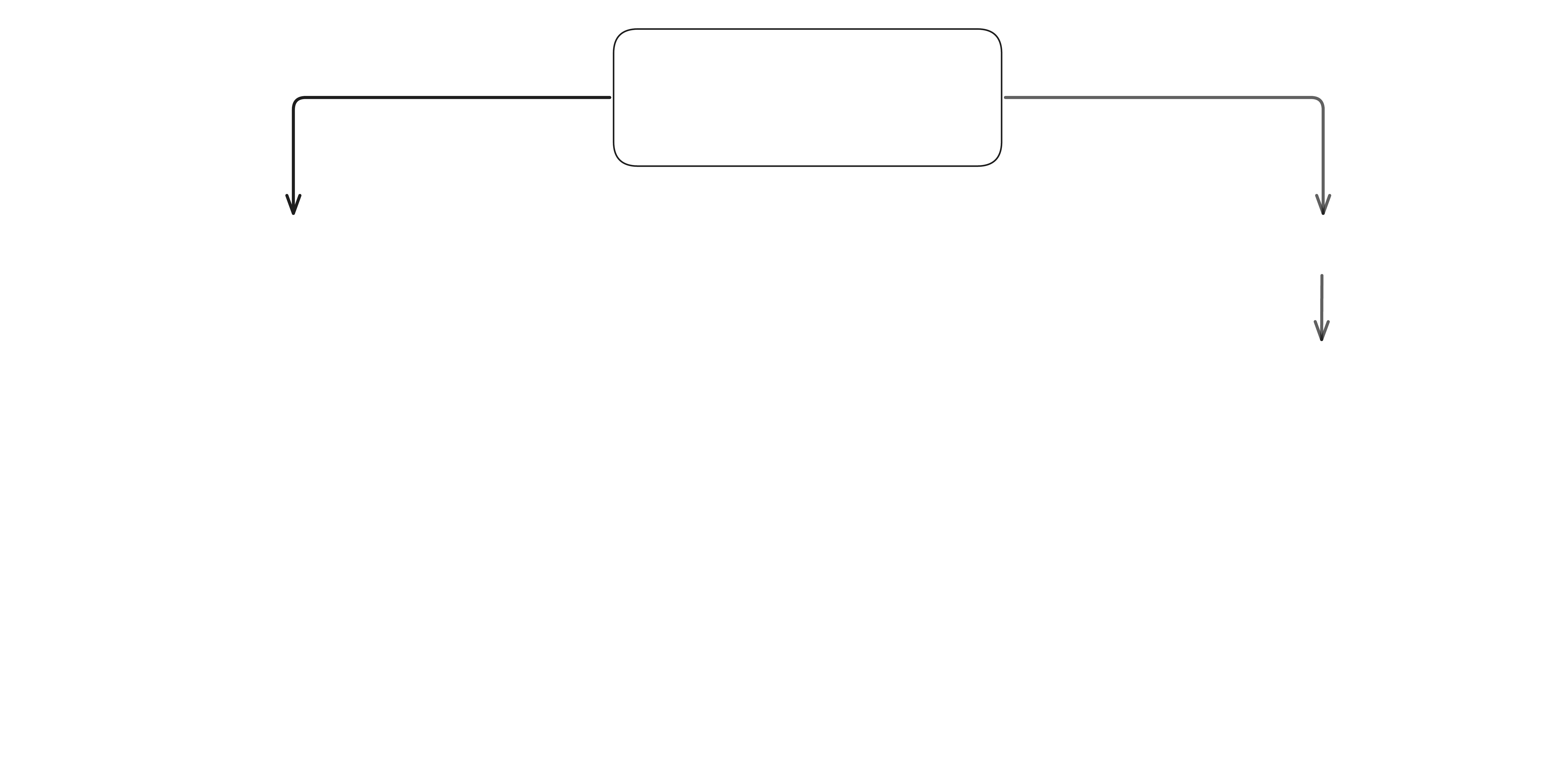}
    \caption{A schematic overview of some of many possible generalizations of the quantum Rabi model.}
    \label{fig:generalization-overview}
\end{figure}

In this paper we will push this analysis forward by studying the validity of the RWA for the multi-atom counterpart of the quantum Rabi model---the Dicke model:
\begin{equation}
    H_{\mrm{D}}=\frac{\omega_0}{2}\sum_{i=1}^N\sigma_z^{(i)}+\omega a^\dag a+\lambda\sum_{i=1}^N(\sigma^{(i)}_+ + \sigma^{(i)}_-)(a+a^\dag),
\end{equation}
which describes $N$ identical spins individually coupled with a monochromatic boson\linebreak field~\cite{dickeCoherenceSpontaneousRadiation1954}.
In the large $\omega$ regime, the dynamics induced by $H_{\mrm{D}}$ is expected to approach the one induced by the multi-atom counterpart of the Jaynes--Cummings model, which is usually denoted as the Tavis--Cummings model~\cite{Tavis_Cummings_1968}:
\begin{equation}
    H_{\mrm{TC}}=\frac{\omega_0}{2}\sum_{i=1}^N\sigma_z^{(i)}+\omega \ad \an+\lambda\sum_{i=1}^N(\sigma^{(i)}_+\an+\sigma^{(i)}_-\ad)
    .
\end{equation}
\Cref{fig:generalization-overview} shows a schematic overview of how the Dicke model relates to the Rabi model and a selection of other generalizations of it.
As we will show, the analysis of the RWA for the Dicke model will both offer mathematical and physical novelties, which justify the need for a separate discussion.
Let us briefly elaborate on both.

On the mathematical side, the multi-atom case exhibits additional (and nontrivial) intricacies with respect to the single-atom case. For one, the proof technique developed in Ref.~\cite{burgarthTamingRotatingWave2024} crucially relies on the existence of a subspace of the spin--boson Hilbert space which is invariant under the action of the evolution induced by $H_{\mrm{JC}}$ in the interaction picture. The choice performed in Ref.~\cite{burgarthTamingRotatingWave2024} relies on an explicit calculation of the latter dynamics, which, while possible for $N=1$, becomes quickly unpractical for $N>1$. The proof strategy developed in the present paper will be based on a \textit{different} choice of invariant domain which avoids the need for such a calculation and, as a result, is perfectly suited to the multi-atom case---as well as, in perspective, to further generalizations of the model (multimode fields, spin--spin interactions, etc.)

On the physical side, the presence of multiple spins interacting with the field offers fertile ground for (at least) two interesting, and strictly intertwined, questions of physical nature. First: the bounds developed in the paper will be (necessarily) state-dependent bounds. While in the single-atom case this essentially entails, as discussed, to understanding the dependence of the bounds on the number $n$ of photons in the state, in the presence of multiple atoms this also involves a rich dependence of the bounds on the spin part of the total spin--boson state. 

Secondly: while in the single-atom case the parameters dictating the validity of the RWA are two---the ratio $g=\lambda/\omega$, and the number $n$ of photons in the state---here a third actor comes into play: the number $N$ of atoms. How does the value of $N$ affect the precision of the RWA? 
Investigating this dependence numerically is unfeasible with regard to experimental setups in which, for example in solid state physics, \(10^{23}\) particles need to be considered~\cite{congDickeSuperradianceSolids2016a}. 
This question is particularly nontrivial if one considers the asymptotic regime $N\to\infty$ (thermodynamic limit), where a phase transition to a superradiant phase is known~\cite{Hepp_Lieb_1973}. Under which condition does the RWA still hold in this limit?

The scope of the present paper is precisely to analytically address these questions. We will find \emph{quantitative} bounds on the validity of the RWA for the Dicke model---thus, in the process, providing the first rigorous proof of validity of the approximation for the Dicke model with $N>1$ spins---and discuss the role of all parameters of the model in determining these bounds. 
We put particular focus on the dependence of the error introduced by the RWA with the parameters of the system and properties of the initial state.
Here, the total angular momentum \(S\) and the total number of excitations \(M\) in the initial state turn out to be the relevant quantities beyond the relation of the coupling strength \(\lambda\) to the frequency \(\omega\).
We find for states with finitely many photons that the difference between the Dicke dynamics and the Tavis--Cummings dynamics scales of order 
\begin{align}
    \frac{\lambda}{\omega} \abs*{\sin(\omega t)} \mathcal{O}\of*{S^{2} M ^{\nicefrac{1}{2}}} + 3\frac{\lambda^2}{\omega} \abs{t} \mathcal{O}\of*{S ^{\nicefrac{7}{2}} M }
\end{align}
by deriving an explicit error bound that can be easily evaluated.
A numerical analysis will also complement and validate our analytical results.

The manuscript is organized as follows. 
In \cref{ch:preliminaries} we define the Dicke and Tavis--Cummings models, as well as the invariant subspace we will employ in our proofs.
\Cref{ch:general-bound} builds the core of the present manuscript by stating the main theorem for general initial states, as well as showing its proof in detail. 
In \cref{sec:analysis}, we further analyse the spin-state dependence of the bound derived in \cref{ch:general-bound}, and compare the resulting error estimates with numerical results.
Final remarks and outlooks are presented in \cref{ch:conclusion}.
For readability, some proofs are omitted from the main text and included in the appendix.

\section{Preliminaries}\label{ch:preliminaries}

In this section, we provide basic definitions and preliminary results. \Cref{ch:finite-photon-subspace} introduces the subspace of states with finitely many photons, which will serve as the main setting for our analysis. The Dicke and Tavis–Cummings models are defined in \cref{ch:dicke-tc-hamiltonians}, along with the associated collective spin operators. Finally, \cref{ch:dicke-tc-interaction-picture} outlines the transition to the interaction picture.

\subsection{The finite photon subspace}\label{ch:finite-photon-subspace}
Let us consider the Hilbert space of square-integrable complex-valued functions \(\Ltwo\R\).
On the dense subspace of rapidly decreasing functions, the Schwartz space \(\mathcal{S}(\R)\), we define the creation 
and annihilation operators \(\ad,\an\) by
\cite{reedIIFourierAnalysis1975}
\begin{equation}
    \ad = \frac{1}{\sqrt 2}\of*{ x-\odv{}{x} },\qquad\an = \frac{1}{\sqrt 2}\of*{ x+\odv{}{x} }.
\end{equation}
These operators are known to satisfy the bosonic canonical commutation relations (\textit{CCR})
\begin{equation}\label{eq:CCR}
    \comm\an\ad = \an\ad - \ad\an = \one,
\end{equation}
and leave the Schwartz space invariant.

A complete orthonormal set for \(\Ltwo\R\) is given by the \emph{Hermite functions} \(\phi_n\), which are the Fock states in this representation of creation and annihilation operators.
Physically, in this representation, the \(n\)-th Hermite function \(\phi_n\) describes a boson state with \(n\) photons.
By construction, the creation, annihilation, and number operators \(\nn = \ad\an\) act on them as follows:
\begin{align}
	\label{eq:creation-on-number-state}
    \ad\phi_n &= \sqrt{n+1}\phi_{n+1}    
    \\
	\label{eq:annihilation-on-number-state}
    \an\phi_n &= \sqrt{n}\phi_{n-1}  
    \\
    \nn\phi_n &= n\phi_n. \label{eq:number-operator-def}
\end{align}
Besides, the Hermite functions form a complete orthonormal set of eigenvectors for the number operator \(\nn\), and hence \(\nn\) is essentially self-adjoint on \(\mathcal{S}(\R)\).
Also, we note that the Schwartz space can be written as
\begin{equation}\label{eq:schwartz-n}
    \mathcal{S}(\R) = \Set*{\sum_{n=0}^{\infty} c_n\phi_n \given c_n\in\C,  \forall \alpha \geq0 \colon \abs{c_n}^2 n^\alpha \xrightarrow[n\to\infty]{} 0}
    ,
\end{equation}
that is, the space of (infinite) linear combinations of Hermite polynomials with coefficients that decay with $n$ more quickly than any power law.

In the following, instead of working with the Schwartz space, we will consider the space of finite linear combinations of Hermite polynomials:
\begin{equation}\label{eq:finite-photons}
    \finitePhotonFock 
    = \Set*{\sum_{n=0}^{n_{\mathrm{max}}} c_n\phi_n \given c_n\in\C, n_{\mathrm{max}}\in\N}
    .
\end{equation}
This can be interpreted as the space of all states of the field containing a finite---albeit state-dependent, and thus with arbitrarily large cardinality---number of photon excitations. While being a proper subspace of $\mathcal{S}(\R)$, this is still dense in the Hilbert space $L^2(\R)$ by construction, since the Hermite functions form a complete orthonormal set. This choice will prove to be convenient for our purposes.

\subsection{The Dicke and Tavis--Cummings models}\label{ch:dicke-tc-hamiltonians}

We can now define the Dicke Hamiltonian \(H_{\mrm{D}}\) and the Tavis--Cummings Hamiltonian \(H_{\mrm{TC}}\) (at resonance) as the operators with domain \(\denseD = (\C^{2})^{\otimes N} \otimes \finitePhotonFock \subset \HH = (\C^{2})^{\otimes N} \otimes \Ltwo{\R}\), acting on it as follows:
\begin{align}
    \label{eq:def-dicke-hamiltonian}
    H_{\mrm{D}} &= \underbrace{\omega \SZ\otimes\one + \omega \one\otimes\nn}_{H_0} + \underbrace{\vphantom{\omega} \lambda\of{\s+ + \s-} \otimes (\an +\ad)}_{V_\mathrm{D}}     
    \\
    \label{eq:def-tavis-hamiltonian}
    H_{\mrm{TC}} &= \underbrace{\omega \SZ\otimes\one + \omega \one\otimes\nn}_{H_0} + \underbrace{\vphantom{\omega} \lambda\of{\s+\otimes\an + \s-\otimes\ad}}_{V_{\mathrm{TC}}},
\end{align}
where \(\omega\) is the frequency of the monochromatic field, \(\lambda\) is the strength of the coupling between the spins and the field, and
\begin{equation}
    S_{x} = \frac{1}{2} \sum_{i=1}^N \pauliI{x}{i}
    , \quad 
    S_{y} = \frac{1}{2} \sum_{i=1}^N \pauliI{y}{i}
    ,\quad 
    S_{z} = \frac{1}{2} \sum_{i=1}^N \pauliI{z}{i}
\end{equation}
denote collective spin operators, where \(\pauliI{w}{i} = \one ^ {\otimes (i-1)} \otimes \pauli{w} \otimes \one ^ {\otimes(N-i)}\) for \(w\in\{x,y,z\}\) are Pauli operators acting on the \(i\)-th spin.
They satisfy the usual angular momentum commutation relations \(\comm{S_{i}}{S_{j}} = \iu \tensor{\varepsilon}{_{k,l}^{m}}\s{}_m\).
Similarly,
\begin{equation}
    \s\pm 
    = \sum_{i=1}^N \pauliI{\pm}{i}
    = \Sx \pm \iu \Sy
\end{equation}
are collective spin-raising and lowering operators with 
\begin{equation}
    \pauliPlus = \begin{pmatrix}0&1\\0&0\end{pmatrix}, \quad \pauliMinus = \begin{pmatrix}0&0\\1&0\end{pmatrix}.
\end{equation}
Although the collective spin raising and lowering operators do not satisfy \emph{canonical anticommutation relations} for \(N>1\), as
\begin{equation}\label{eq:SplusSminusCommutator}
    \s+\s- + \s-\s+ = 2 \of*{\Sx^2+\Sy^2},
\end{equation}
they do satisfy, together with \(\SZ\), the commutation relations of the ladder algebra:
\begin{equation}\label{eq:ladder-algebra}
    \comm{\SZ}{\s\pm} = \pm\s\pm
    ,\quad 
    \commutator{\s+}{\s-} = 2\SZ.
\end{equation}

In both \cref{eq:def-dicke-hamiltonian,eq:def-tavis-hamiltonian}, the free Hamiltonian \(H_0\) describes the energy of \(N\in\N\) non-interacting spins and a monochromatic bosonic field with frequency \(\omega\); the models differ in the respective interaction terms \(V_D\) and \(V_{TC}\). Both Hamiltonians are essentially self-adjoint on \(\denseD\); with a slight abuse of notation, we will hereafter denote their unique self-adjoint extensions (their closures) by the same symbols \(H_{\mrm{D}}\) and \(H_{\mrm{TC}}\), respectively. 

We stress that the resonance assumption---that is, the fact that the spin and boson frequencies are assumed to be the same---is only taken for the sake of simplicity, and it could be relaxed at the price of more cumbersome calculations within the proof of \cref{thm:general-bound}; in any case,
the resonant case is the idealised situation in which the RWA is assumed to work best~\cite{burgarthTamingRotatingWave2024,Košata_Leuch_Kästli_Zilberberg_2022}.

One reason why it is useful to employ the RWA, i.~e. to consider \(H_{\mrm{TC}}\) instead of \(H_{\mrm{D}}\), is that the former is simpler to study analytically. 
For example, the Tavis--Cummings model is integrable: there is a complete orthogonal set of eigenstates for \(H_{\mrm{TC}}\) corresponding to the solution of a system of coupled multinomial (not differential) equations \cite{Bogoliubov_Bullough_Timonen_1996}.
As these eigenstates will be important in the following, we briefly introduce them.

To this purpose, let us begin by introducing \emph{Dicke states}, which occupy an important role in the study of the Dicke model, see e.g.~\cite{Qu_Zhang_Ni_Shan_David_Mølmer_2024,dickeCoherenceSpontaneousRadiation1954,Wu_Zhang_Wu_Su_Liu_Oxborrow_Shan_Mølmer_2024,Hepp_Lieb_1973,Tóth_2007,Gross_Haroche_1982}.
These states, denoted by \(\ket{S,m}\), are defined as the simultaneous eigenstates of the total collective spin operator \({\spin{}}^2 = \SX^2 + \SY^2+ \SZ^2\) and the collective spin operator in the \(z\)-direction, \(\SZ\), similar to the usual angular momentum eigenbasis.
The corresponding eigenvalues are
\begin{align}
    \label{eq:dicke-states-spin-number}
    {\spin{}}^2\ket{S,m} 
    &= S(S+1)\ket{S,m}
    \\
    \label{eq:dicke-states-magnetic-number}
    \SZ\ket{S,m}
    &= m\ket{S,m},
\end{align}
for \(S = N/2, N/2-1, \dots,1/2\text{ or }0\) (depending on whether $N$ is odd or even) and \(m=-S, -S+1, \dots, S-1, S\).
Both the Dicke and Tavis--Cummings Hamiltonians, \(H_D\) and \(H_{\text{TC}}\), commute with \({\spinalt{}}^2 = {\spin{}}^2 \otimes \one\), so that their eigenstates are characterised by a precise value of the total spin quantum number \(S\).
As we will see in \cref{ch:preliminaries}, \(H_{\mrm{TC}}\) additionally commutes with \(\M = \Szalt/2 + \nnalt\), which corresponds to the total number of excitations in the system.

Correspondingly, we denote by \(\tcpsi{S}{M}{\sigma}\) the joint eigenstates of \(H_{\mrm{TC}}\), the total collective angular momentum operator \({\spinalt{}}^2\), and \(\M\) with
\begin{align}
    H_{\mrm{TC}} \tcpsi{S}{M}{\sigma} 
    &= \tceig{S}{M}{\sigma} \tcpsi{S}{M}{\sigma}
    \\
    {\spinalt{}}^2 \tcpsi{S}{M}{\sigma} 
    &= S(S+1) \tcpsi{S}{M}{\sigma}
    \\
    \M \tcpsi{S}{M}{\sigma}
    &= (M - S) \tcpsi{S}{M}{\sigma}
    ,
\end{align}
where \(M \in \N\) is an integer, \(S\) given as in \cref{eq:dicke-states-spin-number} for the Dicke states, and \(\sigma = 1,2,\dots, K_{S,M}\), with \(K_{S,M} = \min\Set{2S,\,M}+1\), is an additional index.
An explicit expression of these states in terms of Dicke states was found by Bogoliubov et al. using the \emph{quantum inverse method}, which is related to the \emph{Bethe ansatz} \cite{korepinQuantumInverseScattering1993}.
They are given by 
\begin{equation}
    \label{eq:TC-eigenstates}
    \tcpsi{S}{M}{\sigma} 
    = \prod_{i=1}^{M} \of*{\tcpar{i}{\sigma}\adalt - \salt+} \ket{S,-S}\otimes\phi_0
\end{equation}
with \(\salt+ = \spin+\otimes\one\) and \(\adalt = \one\otimes\ad\), and where, in the fully resonant case, 
\(\tcparvec{\sigma} = (\tcpar{1}{\sigma}, \dots, \tcpar{M}{\sigma})\) is one of the \(K_{S,M}\) solutions of the Bethe equations
\begin{align}
    \label{eq:TC-bethe-equations}
    \frac{2S}{\tcpar{n}{\sigma}} - \tcpar{n}{\sigma} - \sum_{\substack{j=1\\j\neq n}}^{M} \frac{2}{\tcpar{n}{\sigma} - \tcpar{j}{\sigma}} 
    = 0,
    \quad
    \forall n = 1, \dots, M
\end{align}
and \(\sigma\) labels different solutions for the same values of \(S\) and \(M\). The corresponding eigenvalues are given by
\begin{equation}
    \label{eq:TC-eigenenergies}
    \tceig{S}{M}{\sigma} 
    = \omega(M-S) - \lambda \sum_{i=1}^{M} \tcpar{i}{\sigma}
    .
\end{equation}
Importantly, each of these eigenstates is a member of the finite photon subspace: \(\tcpsi{S}{M}{\sigma} \in \denseD\) for every value of \(S,M,\sigma\). This readily follows from the fact that each of these states results from applying \(M<\infty\) creation operators \(\adalt\) to \(\phi_0\). 

We note that, although closed-form solutions of the Bethe equations~\eqref{eq:TC-bethe-equations} are not known for $M>2$, they are still far more amenable to numerical treatment than computing the full evolution operator generated by the Tavis–Cummings Hamiltonian. As such, while some of our error bounds require solving the Bethe equations numerically, this remains significantly more efficient than simulating the full quantum dynamics.

\subsection{The Dicke and Tavis--Cummings Hamiltonians in the interaction picture}\label{ch:dicke-tc-interaction-picture}

We need to study our problem in the rotating frame (interaction picture) with respect to the free Hamiltonian \(H_0\). In this representation, the two operators \(H_{\mrm{D}}\) and \(H_{\mrm{TC}}\) become
\begin{align}
    H_1(t) &= \mrm{e} ^ {\iu t H_0} \of*{H_{\mrm{D}}-H_0} \mrm{e} ^ {-\iu t H_0} \label{eq:def-H1}\\
	H_2(t) &= \mrm{e} ^ {\iu t H_0} \of*{H_{\mrm{TC}}-H_0} \mrm{e} ^ {-\iu t H_0} \label{eq:def-H2}
    .
\end{align}

Recall that a vector \(\psi\in\HH\) is said to be \textit{analytic} for an operator \(A\colon \dom(A) \subseteq \HH \to \HH\) if the two following conditions hold: \(\psi\in\dom(A^k)\) for any \(k\in\N\), and
\begin{equation}\label{eq:analytic-vector-condition}
    \sum_{k=0}^\infty \frac{t^k}{k!}\norm*{A^k \psi} < \infty,
\end{equation}
for some \(t\in\R\)~\cite{nelsonAnalyticVectors1959}.
Importantly, when acting on an analytic vector \(\psi\) for self-adjoint \(A\), the unitary dynamics \(\mrm{e} ^ {-\iu t A} \psi\) is simply given by a power series for any \(t\in\R\) such that \cref{eq:analytic-vector-condition} holds~\cite[prop. 9.25]{morettiSpectralTheoryQuantum2017}, that is,
\begin{equation}\label{eq:power-series}
    \mrm{e} ^ {-\iu t A} \psi = \sum_{k=0}^\infty \frac{(- \iu t)^k}{k!} A^k \psi
    .
\end{equation}
Coming back to our problem, every vector in \(\denseD\) is analytic for all relevant operators in the following calculations. Specifically: the analyticity with respect to \(H_{\mrm{D}}\) and \(H_{\mrm{TC}}\) can be shown directly by standard methods~\cite[Sec. X.6]{reedIIFourierAnalysis1975};
the analyticity of \(\denseD\) for \(H_0\) follows trivially from the fact that vectors in \(\denseD\) are eigenvectors of \(H_0\).
For the same reason, \cref{eq:analytic-vector-condition} holds for any \(t\in\R\) in case of \(A\) being \(H_0\).

For \(H_1(t)\) acting on \(\psi\in\denseD\), we explicitly have 
\begin{equation}\label{eq:H1}
    H_1(t)\psi = \lambda\of*{\spin+\otimes \an + \spin-\otimes\ad + \mrm{e} ^ {\iu 2\omega t}\spin+\otimes \ad + \mrm{e} ^ {-i2\omega t}\spin-\otimes \an}\psi\\
\end{equation}
and, as \(\mrm{e} ^ {\iu H_0 t}\) acts on \(\psi\) as a power series, we get
\begin{equation}\label{eq:H2}
    H_2(t) \psi = H_2 \psi = \lambda\of{\s+\otimes\an + \s-\otimes\ad} \psi,
\end{equation}
the latter, in particular, being time-independent (see also \cref{thm:TC-H0-commutation}).

We also define the operators
\begin{align}
    \label{eq:def-U1}
    U_1(t) &= \mrm{e} ^ {\iu H_0 t} \mrm{e} ^ {-\iu t H_{\mrm{D}}}
    \\
    \label{eq:def-U2}
    U_2(t) &= \mrm{e} ^ {\iu H_0 t} \mrm{e} ^ {-\iu t H_{\mrm{TC}}}
\end{align}
that, as will be shown later in \cref{thm:strong-derivatives}, describe the unitary dynamics generated by \(H_1(t)\) and \(H_2\), respectively.
To compare \(\mathrm{e}^{-\iu t H_{\mathrm{D}}}\) and \(\mathrm{e}^{-\iu t H_{\mathrm{TC}}}\), we focus on their interaction-picture counterparts \(U_1(t)\) and \(U_2(t)\). Since the transformation to the interaction picture is unitary and common to both evolutions, the difference between \(U_1(t)\) and \(U_2(t)\) coincides with that between the original Schrödinger-picture evolutions, allowing us to carry out the comparison in the interaction picture.

\section{Error bound for the RWA of the Dicke model}\label{ch:general-bound}

We can now proceed to state and prove the main result of the present manuscript.
\begin{theorem}\label{thm:general-bound}
	Let $\mrm{e} ^ {-\iu t H_{\mrm{D}}}$, $\mrm{e} ^ {-\iu t H_{\mrm{TC}}}$ be the evolution groups generated by the Tavis--Cummings and Dicke Hamiltonians, \(H_{\mrm{D}}\) and \(H_{\mrm{TC}}\). 
    For every eigenstate \(\tcpsi{S}{M}{\sigma}\) of the Tavis--Cummings Hamiltonian given as in \cref{eq:TC-eigenstates}, the following estimate holds for any $t>0$: 
	\begin{gather}
		\label{eq:general-bound}
		\norm[\big]{ \of[\big]{\mrm{e} ^ {-\iu t H_{\mrm{TC}}} - \mrm{e} ^ {-\iu t H_{\mrm{D}}}}\tcpsi{S}{M}{\sigma} } 
        \leq \frac{\lambda}{\omega}\abs*{\sin(\omega t)} \min_\pm f^{\pm}_C(\tcpsi{S}{M}{\sigma}) 
        + 3\abs{t}\frac{\lambda^2}{\omega} f_L(\tcpsi{S}{M}{\sigma})
	\shortintertext{with \(f_{C},f_{L}\colon\:\denseD\to\R\) given by}
    \label{eq:def-fC-fL}
	\begin{aligned}
     &f^{\pm}_C(\psi) = \eof*{
        2\norm[\big]{\of{ {\spinalt{}}^2 - \Sz^2} \psi} 
        + 2 \norm[\big]{\of{\salt\pm}^2\psi}
            }^{\nicefrac{1}{2}}
        \norm[\big]{\of{\nnalt + 2 }\psi}^{\nicefrac{1}{2}}
	   \\
		&\!\begin{multlined}
			f_L(\psi) = 
			\biggl[	
				\norm[\big]{\of{\spinalt+ \spinalt-}^2\psi}^{\nicefrac{1}{2}} 
				+ \norm[\big]{\of{\spinalt- \spinalt+}^2\psi}^{\nicefrac{1}{2}}
				+ {\norm[\big]{\of{\spinalt+}^2\of{\spinalt-}^2 \psi}}^{\nicefrac{1}{2}}  
				\\
				+ {\norm[\big]{\of{\spinalt-}^2\of{\spinalt+}^2 \psi}}^{\nicefrac{1}{2}}
			\biggr]
			\cdot\norm[\big]{\of{\nnalt +4}^2 \psi}^{\nicefrac{1}{2}}
		\end{multlined}
		\end{aligned}
	\end{gather}
	and where \(\spinalt\pm = \spin\pm\otimes\one\) and \(\nnalt = \one\otimes\nn\).
	Consequently, for every \(\psi\in\HH\) the RWA for the Dicke Hamiltonian converges in the limit $\omega\to\infty$:
    \begin{equation}\label{eq:dicke-tc-convergence}
        \norm[\big]{ \of[\big]{\mrm{e} ^ {-\iu t H_{\mrm{TC}}} - \mrm{e} ^ {-\iu t H_{\mrm{D}}}}\psi } \xrightarrow[\omega \to \infty]{} 0
        ,
	\end{equation}
  the convergence being uniform in $t$ for every compact set of times.
 \end{theorem}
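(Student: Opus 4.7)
The plan is to pass to the interaction picture with respect to \(H_0\), apply Duhamel's formula, and exploit that \(\tcpsi{S}{M}{\sigma}\) is simultaneously an eigenvector of \(H_0\) (via conservation of \(\M\)) and \(H_{\mrm{TC}}\)---hence also of \(H_2 = H_{\mrm{TC}} - H_0\), with eigenvalue \(E_2 = -\lambda\sum_i \tcpar{i}{\sigma}\). Unitarity of \(\mrm{e}^{-\iu tH_0}\) reduces the left-hand side of~\eqref{eq:general-bound} to \(\norm{(U_2(t) - U_1(t))\tcpsi{S}{M}{\sigma}}\); Duhamel for the interaction-picture propagators yields
\begin{equation*}
(U_2(t) - U_1(t))\psi = \iu\, U_1(t)\int_0^t U_1^*(s)\bigl(H_1(s) - H_2\bigr)U_2(s)\psi\,\mrm{d}s,
\end{equation*}
and the eigenvalue relation \(U_2(s)\tcpsi{S}{M}{\sigma} = \mrm{e}^{-\iu E_2 s}\tcpsi{S}{M}{\sigma}\) collapses \(U_2(s)\) into a scalar phase. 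Substituting the explicit counter-rotating perturbation \((H_1(s) - H_2)\psi = \lambda(\mrm{e}^{2\iu\omega s}A_+ + \mrm{e}^{-2\iu\omega s}A_-)\psi\), with the shorthand \(A_+ := \spin+\otimes\ad\) and \(A_- := \spin-\otimes\an\), reduces the task to bounding the norm of an integral over the two oscillating exponentials \(\mrm{e}^{\pm 2\iu\omega s}\).

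Next, I would integrate each contribution by parts using the antiderivative \((\mrm{e}^{\pm 2\iu\omega s} - 1)/(\pm 2\iu\omega)\), chosen to vanish at \(s = 0\). Summing the two boundary terms at \(s = t\) and exploiting \(\mrm{e}^{\pm 2\iu\omega t} - 1 = \pm 2\iu\mrm{e}^{\pm\iu\omega t}\sin(\omega t)\) extracts the common prefactor \(\sin(\omega t)/\omega\) multiplying \((\mrm{e}^{\iu\omega t}A_+ + \mrm{e}^{-\iu\omega t}A_-)\tcpsi{S}{M}{\sigma}\)---the outer \(U_1^*(t)\) and scalar phases being unitary and dropping upon taking norms. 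Expanding
\begin{equation*}
\norm{(\mrm{e}^{\iu\omega t}A_+ + \mrm{e}^{-\iu\omega t}A_-)\psi}^2 = \langle\psi, (A_-A_+ + A_+A_-)\psi\rangle + 2\Re\bigl[\mrm{e}^{2\iu\omega t}\langle\psi, A_+^2\psi\rangle\bigr],
\end{equation*}
bounding the first summand via the ladder identity \(\spin+\spin- + \spin-\spin+ = 2({\spinalt{}}^2 - \Sz^2)\) and Cauchy--Schwarz by \(2\norm{({\spinalt{}}^2 - \Sz^2)\psi}\norm{(\nnalt+2)\psi}\), and bounding the cross term by \(2|\langle\psi, A_\pm^2\psi\rangle| \leq 2\norm{(\spin\pm)^2\psi}\norm{(\nnalt+2)\psi}\) (using either \(A_+^2\) or its conjugate partner \(A_-^2\)), delivers the two options \(f_C^\pm\); the \(\min_\pm\) in the theorem reflects precisely this binary choice.

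The remainder of the integration by parts is bounded by \(\abs{t}/\omega\) times the supremum of \(\norm{\partial_s[U_1^*(s)A_\pm U_2(s)\psi]}\). The eigenvector property gives \(\partial_s[U_1^*(s)A_\pm U_2(s)\psi] = \iu\mrm{e}^{-\iu E_2 s}U_1^*(s)[H_1(s) - E_2]A_\pm\psi\), and the decomposition
\begin{equation*}
[H_1(s) - E_2]A_\pm\psi = [H_2, A_\pm]\psi + \lambda\bigl(\mrm{e}^{\pm 2\iu\omega s}A_\pm^2 + \mrm{e}^{\mp 2\iu\omega s}A_\mp A_\pm\bigr)\psi
\end{equation*}
supplies the extra factor of \(\lambda\) that produces the announced \(\lambda^2/\omega\) scaling. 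Direct computation via~\eqref{eq:ladder-algebra} and~\eqref{eq:CCR} gives \([H_2, A_+] = \lambda((\spin+)^2\otimes\one - 2\Sz\otimes(\ad)^2)\) (and analogously for \([H_2, A_-]\)), and the \(\Sz\)-contributions are reabsorbed via \(2\Sz = \spin+\spin- - \spin-\spin+\). Each of the surviving operator norms \(\norm{A_\pm^2\psi}\), \(\norm{A_\mp A_\pm\psi}\) and \(\norm{(\spin\pm\spin\mp)\otimes(\ad)^2\psi}\) is then bounded by the tensor Cauchy--Schwarz inequality \(\norm{(X\otimes Y)\psi}^2 \leq \norm{(X^*X\otimes\one)\psi}\cdot\norm{(\one\otimes Y^*Y)\psi}\) together with \(a^2(\ad)^2 = (\nn+1)(\nn+2) \leq (\nn+4)^2\), collapsing the zoo of terms into the four contributions \(\norm{(\spinalt\pm\spinalt\mp)^2\psi}^{\nicefrac{1}{2}}\) and \(\norm{(\spinalt\pm)^2(\spinalt\mp)^2\psi}^{\nicefrac{1}{2}}\) of \(f_L\) paired with \(\norm{(\nnalt+4)^2\psi}^{\nicefrac{1}{2}}\) (the shift \(+4\) absorbing up to two \(\ad\)-applications). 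The constant \(3\) appears by collecting three residual contributions at this stage.

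Finally, the strong convergence~\eqref{eq:dicke-tc-convergence} extends to arbitrary \(\psi\in\HH\) by density: the Bogoliubov eigenstates~\eqref{eq:TC-eigenstates} form a total set in \(\denseD\), which is itself dense in \(\HH\), and a standard \(\varepsilon/3\) argument---combining the explicit \(\omega\to\infty\) convergence on eigenstates with the common unitarity of \(\mrm{e}^{-\iu tH_{\mrm{D}}}\) and \(\mrm{e}^{-\iu tH_{\mrm{TC}}}\) to control the tail---promotes the bound to all states, with the explicit \(t\)-dependence of the right-hand side giving uniformity on compact time intervals. The hardest part, besides the algebraic bookkeeping of the commutators and operator norms, is to rigorously justify all the formal manipulations (IBP, differentiation under the integral, commutator expansions) within the invariant dense domain \(\denseD\) of~\Cref{ch:preliminaries}; this particular choice of invariant domain (rather than the dynamics-adapted one of~\cite{burgarthTamingRotatingWave2024}) is precisely what makes the argument extend cleanly to \(N > 1\).
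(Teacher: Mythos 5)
Your proposal is correct and takes essentially the same route as the paper's proof: pass to the interaction picture, integrate by parts twice using the antiderivative that vanishes at \(s=0\) (this is exactly the paper's integrated action \(S_{21}(s)\)), bound the boundary term via the ladder algebra and the tensor Cauchy--Schwarz inequality to obtain \(\min_\pm f^{\pm}_C\), bound the remainder through \(\comm{H_2}{A_\pm}\) plus the quadratic terms \(A_\pm^2,\,A_\mp A_\pm\) with the bosonic polynomial estimate \(\norm{(\nnalt+4)^2\psi}\) to obtain the \(f_L\)-term, and conclude with the density-plus-unitarity argument. The only organizational difference is that you collapse \(U_2(s)\) to the scalar phase \(\mrm{e}^{-\iu E_2 s}\) \emph{before} the second integration by parts, so all remainder estimates act on fixed vectors (and your grouping in fact yields coefficients at most \(3\), slightly sharper than the paper's uniform \(3=\sqrt{3\cdot 3}\)); the paper instead first proves the analogous bound for arbitrary \(\psi\in\denseD\) (\cref{thm:intermediate-bound}, which it reuses for a posteriori error estimates and for the worst-case bound of \cref{thm:worst-case-bound}) and specializes to eigenstates only at the end.
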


This provides a quantitative estimate on the error on the evolution of $\tcpsi{S}{M}{\sigma}$ committed when replacing the Dicke Hamiltonian with the Tavis--Cummings one---that is, the error of the RWA.
We remark that, while the upper bound \cref{eq:general-bound} gives rise to a quantitative estimate for eigenstates of the Tavis--Cummings Hamiltonian, the fact that the RWA converges as $\omega\to\infty$ (\cref{eq:dicke-tc-convergence}) holds for \emph{all} states of the system. 
As we will see, this is ultimately a consequence of the fact that the eigenstates form an orthonormal basis.

The main ingredient of the proof of \cref{thm:general-bound} will be an integration-by-parts method (\cref{ch:partial-integration-method}) previously developed for bounded operators~\cite{burgarthOneBoundRule2022}, and subsequently applied to the Rabi and Jaynes--Cummings Hamiltonians~\cite{burgarthTamingRotatingWave2024}. 
The remainder of this section shall be devoted to this proof, which is split into multiple lemmas and subsections.
In \cref{ch:denseD-invariance}, preliminary results necessary for applying the integration-by-parts method are stated and proven;
the main method itself is revisited in \cref{ch:partial-integration-method};
\cref{ch:calculating-general-bound} concludes the proof of \cref{thm:general-bound} with the explicit calculation of the bound;
finally, \cref{ch:worst-case-bound} presents an alternative to \cref{thm:general-bound} with a similar proof.

\subsection{Preliminary results}\label{ch:denseD-invariance}

As detailed in~\cite{burgarthTamingRotatingWave2024}, in order to evaluate the difference between two dynamics $U_1(t)$, $U_2(t)$ on some ``nice'' space of vectors without incurring in domain issues, it is useful to prove the invariance of said space under one of the two dynamics.
Here, we show invariance of \(\denseD\) under \(U_2(t)\), by a symmetry argument.

To this end, we will begin by pinpointing a crucial difference between the two Hamiltonians, ultimately caused by the lack of the \emph{counter-rotating} terms \(\mrm{e} ^ {\iu 2\omega t}\spin+\otimes \ad\) and \(\mrm{e} ^ {-i2\omega t}\spin-\otimes \an\) of \(H_1(t)\) in \(H_2\): the evolution induced by \(H_2\) preserves the energy of the \emph{non-interacting} system, that is:

\begin{lemma}\label{thm:TC-H0-commutation}
    The dynamics generated by the Tavis--Cummings Hamiltonian \(H_{\mrm{TC}}\)~\eqref{eq:def-tavis-hamiltonian} commutes with the free Hamiltonian \(H_0\):
    \begin{equation}\label{eq:TC-H0-commutation}
        \comm{\mrm{e} ^ {-\iu t H_{\mrm{TC}}}}{H_0}\psi = 0, \quad \forall t\in \R
    \end{equation}
    for any \(\psi\in\denseD\).
\end{lemma}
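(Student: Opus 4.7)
The plan is to reduce the claimed commutation of the unitary group with $H_0$ to an algebraic commutation on the dense invariant domain $\denseD$, and then lift the identity to the full evolution via the analytic-vector formula~\cref{eq:power-series}.

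First, I would compute $[H_0,H_{\mrm{TC}}]$ on $\denseD$. Since $H_{\mrm{TC}}=H_0+V_{\mrm{TC}}$ with $V_{\mrm{TC}}=\lambda(\spin+\otimes\an+\spin-\otimes\ad)$, it suffices to compute $[H_0,V_{\mrm{TC}}]$. Using the ladder algebra~\cref{eq:ladder-algebra}, i.e.\ $[\SZ,\spin\pm]=\pm\spin\pm$, together with $[\nn,\an]=-\an$ and $[\nn,\ad]=\ad$ derived from the CCR~\cref{eq:CCR}, a short computation gives
\begin{equation*}
    [H_0,\spin+\otimes\an]\psi = \omega\spin+\otimes\an\,\psi-\omega\spin+\otimes\an\,\psi = 0,
\end{equation*}
and analogously $[H_0,\spin-\otimes\ad]\psi=0$, so $[H_0,H_{\mrm{TC}}]\psi=0$ on $\denseD$. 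This is precisely the algebraic manifestation of the fact that $V_{\mrm{TC}}$ contains only the \emph{co-rotating} terms, which preserve $\M=\SZ/2+\nn$ and hence commute with $H_0=\omega(\SZ+\nn)$.

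Second, I would promote this to the unitary group. The key observation is that $\denseD$ is invariant under both $H_0$ (diagonal in the Dicke-times-Fock basis) and $H_{\mrm{TC}}$ (each term of $H_{\mrm{TC}}$ maps finite-photon states to finite-photon states). By induction on $k$, one therefore obtains $[H_0,H_{\mrm{TC}}^k]\psi=0$ on $\denseD$ for every $k\in\N$. Since every $\psi\in\denseD$ is analytic for $H_{\mrm{TC}}$, the power-series representation~\cref{eq:power-series} yields
\begin{equation*}
    \mrm{e}^{-\iu t H_{\mrm{TC}}}\psi = \sum_{k=0}^\infty\frac{(-\iu t)^k}{k!}H_{\mrm{TC}}^k\psi.
\end{equation*}
Noting that $H_0\psi\in\denseD$ is also analytic for $H_{\mrm{TC}}$ (again because $\denseD$ is $H_0$-invariant), the corresponding series for $\mrm{e}^{-\iu t H_{\mrm{TC}}}(H_0\psi)$ converges as well; closedness of $H_0$ then permits passing the unbounded operator through the sum termwise, giving
\begin{equation*}
    H_0\,\mrm{e}^{-\iu t H_{\mrm{TC}}}\psi = \sum_{k=0}^\infty\frac{(-\iu t)^k}{k!}H_0 H_{\mrm{TC}}^k\psi = \sum_{k=0}^\infty\frac{(-\iu t)^k}{k!}H_{\mrm{TC}}^k H_0\psi = \mrm{e}^{-\iu t H_{\mrm{TC}}}H_0\psi,
\end{equation*}
which is the desired~\cref{eq:TC-H0-commutation}.

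The main obstacle I anticipate is this last interchange of the unbounded $H_0$ with the infinite series: one cannot simply commute $H_0$ past a sum without a convergence argument, and a priori there is no guarantee that $\mrm{e}^{-\iu t H_{\mrm{TC}}}\psi\in\dom(H_0)$. The argument sketched above circumvents this by working entirely within $\denseD$, using the \emph{joint} analyticity of $\psi$ and $H_0\psi$ for $H_{\mrm{TC}}$, and invoking closedness of $H_0$ only to identify the limit. All other steps reduce to routine manipulations of the ladder algebra and the CCR.
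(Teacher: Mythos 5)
Your algebraic step is correct and coincides with the paper's: on \(\denseD\) one has \(\comm{H_0}{\spin+\otimes\an}\psi=\omega\spin+\otimes\an\,\psi+\omega\spin+\otimes\comm{\nn}{\an}\psi=0\) and likewise for \(\spin-\otimes\ad\), and the induction to \(\comm{H_0}{H_{\mrm{TC}}^k}\psi=0\) is legitimate because \(\denseD\) is invariant under both operators. The gap is in the lifting step. The representation \cref{eq:power-series} for \(\mrm{e}^{-\iu t H_{\mrm{TC}}}\psi\) is only valid for those \(t\) for which the analyticity condition \cref{eq:analytic-vector-condition} holds, and for \(H_{\mrm{TC}}\) --- unlike for \(H_0\) --- this is only a bounded interval of times: on a state with at most \(n\) photons one has \(\norm{H_{\mrm{TC}}^k\psi}\lesssim C^k\,(n+k)!/n!\), so the exponential series has a finite radius of convergence. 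Your argument therefore proves \cref{eq:TC-H0-commutation} only for \(\abs{t}\) small, while the lemma claims it for all \(t\in\R\). The extension by the group property is not immediate, because at this stage you do not know that \(\mrm{e}^{-\iu t_0 H_{\mrm{TC}}}\psi\in\denseD\) --- that is precisely \cref{thm:denseD-invariance-under-UTC}, which is deduced \emph{from} the present lemma, so invoking it here would be circular. To close the gap you would need to check that the radius of analyticity is uniform on \(\denseD\) (it is, since the constant \(C\) above does not depend on \(n\)), extend the identity \(H_0\,\mrm{e}^{-\iu t H_{\mrm{TC}}}\psi=\mrm{e}^{-\iu t H_{\mrm{TC}}}H_0\psi\) from \(\denseD\) to all of \(\dom(H_0)\) using that \(\denseD\) is a core for \(H_0\) together with closedness of \(H_0\), and only then iterate over time steps of length at most \(t_0\). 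A second, smaller omission: you need the series for \(\psi\) and for \(H_0\psi\) to converge for the \emph{same} \(t\), which again follows from the uniform radius but should be said.

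The paper avoids all of this by expanding the \emph{other} exponential: every \(\psi\in\denseD\) is a finite linear combination of eigenvectors of \(H_0\), so \(\mrm{e}^{-\iu s H_0}\psi\) is given by a power series for all \(s\in\R\) that moreover stays in a fixed finite-dimensional subspace of \(\denseD\); hence \(H_{\mrm{TC}}\,\mrm{e}^{-\iu s H_0}\psi=\mrm{e}^{-\iu s H_0}H_{\mrm{TC}}\psi\) holds for all \(s\) with no convergence issues, and the statement about \(\mrm{e}^{-\iu t H_{\mrm{TC}}}\) then follows from the spectral theorem (a unitary commuting with an essentially self-adjoint operator on an invariant core commutes with its closure, hence with every bounded function of it, cf.\ the reference to \cite{morettiSpectralTheoryQuantum2017} in the paper's proof). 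I recommend either reorganizing your proof along these lines or supplying the uniform-radius and extension arguments explicitly.
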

\begin{proof}
    Let \(H_{\mrm{TC}}\) and \(H_0\) be given as in \cref{eq:def-dicke-hamiltonian}. 
    Let \(\psi\in \denseD\). 
    As \(\denseD\) consists of finite linear combinations of eigenvectors of \(H_0\), \(\psi\) is an analytic vector for \(H_0\): as such, the condition~\eqref{eq:analytic-vector-condition} holds for any \(t\in\R\).
    Therefore, the action of the exponential \(\mrm{e} ^ {\iu t H_0}\) on $\psi$ is directly given by the power series~\cite[Prop. 9.25]{morettiSpectralTheoryQuantum2017}
    \begin{equation}\label{eq:H-0-power-series}
        \mrm{e} ^ {\iu t H_0}\psi = \sum_{k=0}^\infty \frac{(it)^k}{k!} H_0^k \psi, \quad \forall \psi\in\dom(A).
    \end{equation}
    Furthermore, a direct calculation using the commutation relations \eqref{eq:CCR}--\eqref{eq:ladder-algebra} shows that the two Hamiltonians themselves commute on \(\psi \in \denseD\):
    \begin{equation}
        \comm{H_{\mrm{TC}}}{H_0}\psi = 0.
    \end{equation}
    Besides, \(H_0\denseD\subseteq\denseD\) and \(H_{\mrm{D}}\denseD\subseteq\denseD\).
    By these properties, we have
    \begin{align}
		H_{\mrm{TC}}\mrm{e} ^ {-\iu s H_0}\psi 
        &= \sum_{k=0}^\infty \frac{(-\iu s)^k}{k!} H_{\mrm{TC}} H_0^k\psi \\
		&= \sum_{k=0}^\infty \frac{s^k}{k!} H_0^k H_{\mrm{TC}} \psi \\
		&= \mrm{e} ^ {-\iu sH_0} H_{\mrm{TC}} \psi,
	\end{align}
    i.~e. \(H_0\) is a unitary symmetry of the Tavis--Cummings model. 
    This is equivalent to \(\comm{\mrm{e} ^ {-\iu t H_{\mrm{TC}}}}{U_0(s)} = 0\), which in turn is equivalent to \cref{eq:TC-H0-commutation} due to~\cite[Thm. 9.41]{morettiSpectralTheoryQuantum2017}.
\end{proof}

The symmetry proven in \cref{thm:TC-H0-commutation} will now allow us to show invariance of \(\denseD\) under \(\mrm{e} ^ {-\iu t H_{\mrm{TC}}}\):
\begin{lemma}\label{thm:denseD-invariance-under-UTC}
    The subspace \(\denseD\subseteq \HH\) of states with finitely many photons is invariant under the dynamics \(\mrm{e} ^ {-\iu t H_{\mrm{TC}}}\):
    \begin{equation}\label{eq:denseD-invariance-TC}
        \mrm{e} ^ {-\iu t H_{\mrm{TC}}}\denseD \subseteq \denseD,
    \end{equation}
    for any \(t\in\R\).
\end{lemma}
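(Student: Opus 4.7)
My plan is to exploit the unitary symmetry of the Tavis--Cummings evolution under $H_0$ established in \cref{thm:TC-H0-commutation}. The strategy is to decompose $\HH$ into the eigenspaces of $H_0$, show that these are finite-dimensional and contained in $\denseD$, and then use the fact that a unitary commuting with $H_0$ preserves each such eigenspace.

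First I would analyse the spectral structure of the free Hamiltonian $H_0 = \omega(\SZ\otimes\one + \one\otimes\nn)$. Since $\SZ$ has eigenvalues in $\{-N/2,\dots,N/2\}$ and $\nn$ has eigenvalues in $\N$, the operator $H_0$ has pure point spectrum with eigenvalues of the form $\omega k$ (with $k$ ranging over a discrete subset of $\frac{1}{2}\Z$ determined by the parity of $N$). The corresponding eigenspace $\HH_k$ is spanned by the vectors $\ket{s_1,\dots,s_N}\otimes\phi_n$, taken in the computational spin basis, that satisfy $\tfrac{1}{2}\sum_i s_i + n = k$. For each choice of spin basis element, the photon number $n\geq 0$ is uniquely determined by $k$, so $\dim\HH_k\leq 2^N$ and trivially $\HH_k\subseteq\denseD$.

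Next, I would verify that $\denseD$ coincides with the algebraic direct sum of the $\HH_k$. On the one hand, any $\psi=\sum_{n=0}^{n_{\max}}\chi_n\otimes\phi_n\in\denseD$ decomposes, upon expanding each $\chi_n$ in the computational spin basis, into a finite linear combination of joint $H_0$-eigenvectors and thus lies in the algebraic direct sum. Conversely, any finite linear combination of vectors from the $\HH_k$ involves only finitely many photon excitations and so belongs to $\denseD$.

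Finally, the commutation relation of \cref{thm:TC-H0-commutation} together with \cite[Thm.~9.41]{morettiSpectralTheoryQuantum2017} yields that $\mrm{e}^{-\iu t H_{\mrm{TC}}}$ commutes with every spectral projection of $H_0$, and hence leaves each finite-dimensional eigenspace $\HH_k$ invariant. By linearity it also preserves their algebraic direct sum, which is precisely $\denseD$. The only point that requires some care is the identification of $\denseD$ with the algebraic direct sum of the $\HH_k$; once this bookkeeping is in place, the conclusion is an immediate consequence of the symmetry already established, and no analytic subtleties arise since only finite sums are involved.
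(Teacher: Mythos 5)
Your proposal is correct and follows essentially the same route as the paper's proof: both exploit the $H_0$-symmetry of \cref{thm:TC-H0-commutation} to conclude that $\mrm{e}^{-\iu t H_{\mrm{TC}}}$ preserves the eigenspaces of $H_0$, and both rest on the observation that every $H_0$-eigenvector lies in $\denseD$ because $\SZ$ is bounded, so each eigenspace involves only finitely many photon numbers. Your version merely makes the direct-sum bookkeeping and the appeal to the spectral projections more explicit than the paper's terser vector-by-vector argument.
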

\begin{proof}
     Consider an eigenvector \(\psi_0 = \ket{\underline s}\otimes\phi_n \) of \(H_0\) with eigenvalue \(\lambda_{\underline s, n}\), where \(\ket{\underline s}\) is an eigenvector of \(\SZ\) and \(\phi_n\) is the $n$th Hermite function, i.~e. an eigenvector of \(\nn\).
    By definition, \(\psi_0\in\denseD\).
    By \cref{thm:TC-H0-commutation} we then have
	\begin{equation}
		H_0 \mrm{e} ^ {-\iu t H_{\mrm{TC}}} \psi_0 = \mrm{e} ^ {-\iu t H_{\mrm{TC}}} H_0 \psi_0 = \lambda_{\underline s, n} \mrm{e} ^ {-\iu t H_{\mrm{TC}}}\psi_0
	\end{equation}
	and hence \(\mrm{e} ^ {-\iu t H_{\mrm{TC}}}\psi_0\) is an eigenvector of \(H_0\).
	On the other hand, every eigenvector of \(H_0\) is in \(\denseD\) because \(\Sz\) is bounded which therefore concludes the proof.
\end{proof}
\begin{remark}
We can compare this lemma with~\cite[Lemma 1.1]{burgarthTamingRotatingWave2024}, where the invariance of $\C^2\otimes\mathcal{S}(\R)$ under the evolution generated by the Jaynes--Cummings Hamiltonian is proven. In such a case, an explicit computation of the dynamics, which readily becomes cumbersome for $N>1$ spins, was needed. Here, our choice of a smaller (but still dense) domain enabled us to prove invariance with a simpler argument which did not require such a calculation. As such, we expect this choice to be also suitable for more complicated models involving the RWA, cf. the discussion in \cref{ch:conclusion}. 
\end{remark}

The properties shown in \cref{thm:TC-H0-commutation} and \cref{thm:denseD-invariance-under-UTC} readily translate in the interaction picture---that is, from $\mathrm{e}^{-\iu tH_{\mrm{TC}}}$ to $U_2(t)$:

\begin{corollary}\label{thm:denseD-invariance}
    The subspace \(\denseD\subseteq \HH\) of states with finitely many photons is invariant under the dynamics \(U_2(t)\) as defined in \cref{eq:def-U2}:
    \begin{equation}\label{eq:denseD-invariance}
        U_2(t)\denseD \subseteq \denseD,
    \end{equation}
    for any \(t\in\R\).
\end{corollary}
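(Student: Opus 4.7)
The plan is to reduce the claim to the two pieces already in hand. By definition, $U_2(t) = \mrm{e}^{\iu t H_0}\mrm{e}^{-\iu t H_{\mrm{TC}}}$, so it suffices to show that both factors map $\denseD$ into itself; the invariance under $\mrm{e}^{-\iu t H_{\mrm{TC}}}$ is exactly \cref{thm:denseD-invariance-under-UTC}, and the invariance under $\mrm{e}^{\iu t H_0}$ should reduce to a trivial observation.

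For the second factor, I would argue as follows. Any $\varphi \in \denseD = (\C^2)^{\otimes N}\otimes\finitePhotonFock$ is by construction a finite linear combination of product vectors $\ket{\underline{s}}\otimes\phi_n$, each of which is a joint eigenvector of $H_0 = \omega \Sz\otimes\one + \omega\one\otimes\nn$ when $\ket{\underline{s}}$ is chosen from the computational $\Sz$-eigenbasis. Such a $\varphi$ is therefore trivially analytic for $H_0$ in the sense of \cref{eq:analytic-vector-condition}, and \cref{eq:power-series} yields $\mrm{e}^{\iu t H_0}\varphi$ as a finite linear combination of the same eigenvectors, each multiplied only by a phase; in particular, $\mrm{e}^{\iu t H_0}\varphi \in \denseD$.

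Composing these two observations gives the corollary: starting from $\psi\in\denseD$ and applying first \cref{thm:denseD-invariance-under-UTC} and then the eigenbasis argument yields $U_2(t)\psi \in \denseD$. I do not expect any obstacle to arise here; the whole point of replacing the Schwartz space with $\finitePhotonFock$ in \cref{ch:finite-photon-subspace} is precisely to make $H_0$ act diagonally on a basis of $\denseD$, so that passing to the interaction picture never takes us out of the domain. If desired, one could equivalently write $U_2(t)\psi = \mrm{e}^{-\iu t H_{\mrm{TC}}}\mrm{e}^{\iu t H_0}\psi$ by invoking the commutation proved in \cref{thm:TC-H0-commutation}, and conclude in exactly the same way.
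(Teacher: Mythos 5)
Your proposal is correct and follows exactly the paper's argument: factor $U_2(t) = \mrm{e}^{\iu t H_0}\mrm{e}^{-\iu t H_{\mrm{TC}}}$, invoke \cref{thm:denseD-invariance-under-UTC} for the second factor, and observe that $\mrm{e}^{\iu t H_0}$ preserves $\denseD$ because $\denseD$ is spanned by eigenvectors of $H_0$. The only difference is that you spell out the eigenbasis argument in more detail than the paper does.
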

\begin{proof}
    As \(\denseD\) is spanned by eigenvectors for \(H_0\), \(\mrm{e} ^ {-\iu t H_0}\) leaves \(\denseD\) invariant.
    Due to \cref{thm:denseD-invariance-under-UTC}, \(\mrm{e} ^{-\iu t H_{\mrm{TC}}}\) does so as well.
    Hence, \(U_2(t) = \mrm{e} ^{\iu t H_0} \mrm{e} ^ {-\iu t H_{\mrm{TC}}}\) leaves \(\denseD\) invariant.
\end{proof}

\begin{corollary}\label{thm:U2-H0-commutation}
    The dynamics  \(U_{2}(t)\) generated by the Tavis--Cummings Hamiltonian in the rotating frame with respect to \(H_0\) commutes with the free Hamiltonian \(H_0\):
    \begin{equation}\label{eq:U2-H0-commutation}
        \comm{U_{2}(t)}{H_0}\psi = 0, \quad \forall t\in \R
    \end{equation}
    for any \(\psi\in\denseD\)
    .
\end{corollary}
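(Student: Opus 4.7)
The plan is to derive the corollary by directly combining the two preceding results, \cref{thm:TC-H0-commutation} (commutation of $\mrm{e}^{-\iu t H_{\mrm{TC}}}$ with $H_0$ on $\denseD$) and \cref{thm:denseD-invariance-under-UTC} (invariance of $\denseD$ under $\mrm{e}^{-\iu t H_{\mrm{TC}}}$), together with the trivial fact that $\mrm{e}^{\iu t H_0}$ commutes with $H_0$ via the functional calculus. All of this is essentially bookkeeping, and there should be no genuine obstacle once the domains are tracked carefully.

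Concretely, I would fix $\psi\in\denseD$ and expand
\begin{equation*}
    U_2(t) H_0\psi = \mrm{e}^{\iu t H_0}\,\mrm{e}^{-\iu t H_{\mrm{TC}}}\,H_0\psi.
\end{equation*}
Since $\denseD$ is spanned by eigenvectors of $H_0$, we have $H_0\psi\in\denseD$, so \cref{thm:TC-H0-commutation} applies to the vector $H_0\psi$ and yields $\mrm{e}^{-\iu t H_{\mrm{TC}}}H_0\psi = H_0\,\mrm{e}^{-\iu t H_{\mrm{TC}}}\psi$. This substitution requires that $\mrm{e}^{-\iu t H_{\mrm{TC}}}\psi\in\dom(H_0)$; this is guaranteed by \cref{thm:denseD-invariance-under-UTC}, which places $\mrm{e}^{-\iu t H_{\mrm{TC}}}\psi$ in $\denseD$.

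With $\mrm{e}^{-\iu t H_{\mrm{TC}}}\psi\in\denseD\subseteq\dom(H_0)$, the functional calculus for the self-adjoint operator $H_0$ (or, equivalently, the analyticity of elements of $\denseD$ for $H_0$, already exploited in the proof of \cref{thm:TC-H0-commutation}) allows to commute $\mrm{e}^{\iu t H_0}$ past $H_0$, giving
\begin{equation*}
    \mrm{e}^{\iu t H_0}\,H_0\,\mrm{e}^{-\iu t H_{\mrm{TC}}}\psi = H_0\,\mrm{e}^{\iu t H_0}\,\mrm{e}^{-\iu t H_{\mrm{TC}}}\psi = H_0\,U_2(t)\psi.
\end{equation*}
Chaining the two identities yields $U_2(t)H_0\psi = H_0 U_2(t)\psi$ for every $\psi\in\denseD$, which is exactly \cref{eq:U2-H0-commutation}.

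The only point that requires care is the order in which the two commutations are applied, so as never to apply $H_0$ to a vector not known to lie in its domain; the invariance \cref{thm:denseD-invariance-under-UTC} is precisely what closes this potential gap. Hence I would expect the proof to fit in a few lines and to present no real difficulty beyond this domain check.
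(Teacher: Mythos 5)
Your argument is correct and is exactly the one the paper intends: the paper's proof simply states that the corollary follows directly from the definition of \(U_2(t)\) in \cref{eq:def-U2} together with \cref{thm:TC-H0-commutation}, and your write-up fills in precisely those steps (including the domain check via \cref{thm:denseD-invariance-under-UTC}). The only minor imprecision is the phrase ``applies to the vector \(H_0\psi\)''---the identity \(\mrm{e}^{-\iu t H_{\mrm{TC}}}H_0\psi = H_0\,\mrm{e}^{-\iu t H_{\mrm{TC}}}\psi\) is \cref{thm:TC-H0-commutation} applied to \(\psi\) itself---but this does not affect the validity of the proof.
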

\begin{proof}
It follows directly from \cref{eq:def-U2} and \cref{thm:TC-H0-commutation}.
\end{proof}

Furthermore, \cref{thm:denseD-invariance-under-UTC} allows us to show the relation between \(H_2\) and \(U_2(t)\) anticipated in \cref{ch:dicke-tc-interaction-picture}: \(-\iu H_2\) is the strong derivative of \(U_2(t)\):
\begin{lemma}\label{thm:strong-derivatives}
    For the dynamics in the interaction picture, \(U_1(t)\) and \(U_2(t)\), as defined per \cref{eq:def-U1,eq:def-U2}, the following properties hold for all \(\psi\in\denseD\):
    \begin{align}
    	\odv{}{t} U_1(t)\adj\psi &= \iu  U_1(t)\adj H_1(t) \psi 
        \\
        \label{eq:strong-derivative-U2}
    	\odv{}{t} U_2(t)\psi &= -\iu H_2(t) U_2(t)\psi
    \end{align}
    where \(H_1(t)\) and \(H_2(t)\) are given by \cref{eq:def-H1,eq:def-H2}.
\end{lemma}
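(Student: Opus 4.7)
The plan is to compute both strong derivatives directly from the definitions $U_1(t) = \mrm{e}^{\iu t H_0}\mrm{e}^{-\iu t H_{\mrm{D}}}$ and $U_2(t) = \mrm{e}^{\iu t H_0}\mrm{e}^{-\iu t H_{\mrm{TC}}}$ via a product rule, using the power-series representation~\eqref{eq:power-series} of each exponential on analytic vectors. Since every $\psi\in\denseD$ is analytic for each of $H_0$, $H_{\mrm{D}}$, and $H_{\mrm{TC}}$ (as recalled in \cref{ch:dicke-tc-interaction-picture}), each map $t\mapsto \mrm{e}^{-\iu tA}\psi$ is strongly differentiable on $\R$ with derivative $-\iu A\,\mrm{e}^{-\iu tA}\psi$, obtained by term-by-term differentiation of the absolutely convergent series. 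The other key inputs are the invariance statements $\mrm{e}^{-\iu t H_0}\denseD\subseteq\denseD$ (from the fact that $\denseD$ is spanned by eigenvectors of $H_0$) and $\mrm{e}^{-\iu t H_{\mrm{TC}}}\denseD\subseteq\denseD$ (\cref{thm:denseD-invariance-under-UTC}), which guarantee that successive exponentials remain in a domain where the next operator acts safely.

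For $U_2(t)$, I would set $\phi(t)=\mrm{e}^{-\iu t H_{\mrm{TC}}}\psi\in\denseD$ so that $U_2(t)\psi=\mrm{e}^{\iu t H_0}\phi(t)$, and apply the product rule to get $\tfrac{d}{dt}U_2(t)\psi=\iu H_0\mrm{e}^{\iu t H_0}\phi(t)-\iu\,\mrm{e}^{\iu t H_0}H_{\mrm{TC}}\phi(t)$. Since $H_0$ commutes with $\mrm{e}^{-\iu t H_{\mrm{TC}}}$ on $\denseD$ by \cref{thm:TC-H0-commutation}, I can regroup the two terms as $-\iu\,\mrm{e}^{\iu t H_0}(H_{\mrm{TC}}-H_0)\mrm{e}^{-\iu t H_{\mrm{TC}}}\psi$; inserting the identity $\mrm{e}^{-\iu t H_0}\mrm{e}^{\iu t H_0}$ between $(H_{\mrm{TC}}-H_0)$ and $\mrm{e}^{-\iu t H_{\mrm{TC}}}$ identifies the right-hand side with $-\iu H_2(t)U_2(t)\psi$, matching~\eqref{eq:strong-derivative-U2}.

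For $U_1(t)\adj=\mrm{e}^{\iu t H_{\mrm{D}}}\mrm{e}^{-\iu t H_0}$ the strategy is similar, although $H_0$ and $H_{\mrm{D}}$ do not commute. Writing $\tilde\phi(t)=\mrm{e}^{-\iu t H_0}\psi$, which again lies in $\denseD$ by diagonal action on the Hermite/spin basis, the product rule yields $\tfrac{d}{dt}U_1(t)\adj\psi=\iu H_{\mrm{D}}\mrm{e}^{\iu t H_{\mrm{D}}}\tilde\phi(t)-\iu\,\mrm{e}^{\iu t H_{\mrm{D}}}H_0\tilde\phi(t)$. Using the power-series representation of $\mrm{e}^{\iu t H_{\mrm{D}}}$ on the analytic vector $\tilde\phi(t)\in\denseD$ to move $H_{\mrm{D}}$ past the exponential, the two terms combine into $\iu\,\mrm{e}^{\iu t H_{\mrm{D}}}(H_{\mrm{D}}-H_0)\mrm{e}^{-\iu t H_0}\psi$; inserting $\mrm{e}^{-\iu t H_0}\mrm{e}^{\iu t H_0}=\one$ yields $\iu\,U_1(t)\adj H_1(t)\psi$, as claimed.

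The argument is not conceptually difficult, but the main obstacle is bookkeeping of domains: at each step one must verify that every vector to which an unbounded operator is applied still lies in $\denseD$ (so the application is meaningful), and that every swap of a power series with a derivative, a sum, or another operator is legitimate. All of these issues are absorbed into the preparatory results---the invariance of $\denseD$ under the relevant unitary groups and Hamiltonians, combined with analyticity of every vector in $\denseD$---so once these are invoked the derivation collapses into a routine product-rule calculation.
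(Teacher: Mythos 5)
Your proposal is correct and follows essentially the same route as the paper: the product rule for the strong derivative, with the domain bookkeeping handled by the invariance of \(\denseD\) under \(\mrm{e}^{-\iu t H_0}\) and \(\mrm{e}^{-\iu t H_{\mrm{TC}}}\), followed by insertion of \(\mrm{e}^{-\iu t H_0}\mrm{e}^{\iu t H_0}\) to identify \(H_1(t)\) and \(H_2(t)\) (the paper writes out only the \(U_2\) case and declares \(U_1\adj\) analogous, as you do in more detail). One cosmetic remark: the regrouping \(\iu H_0\mrm{e}^{\iu t H_0}\phi(t)=\iu\,\mrm{e}^{\iu t H_0}H_0\phi(t)\) only requires \(H_0\) to commute with its own exponential on \(\denseD\) (and likewise \(H_{\mrm{D}}\) with \(\mrm{e}^{\iu t H_{\mrm{D}}}\) on \(\dom(H_{\mrm{D}})\), which follows from the spectral theorem without invoking analyticity radii), not \cref{thm:TC-H0-commutation}.
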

\begin{proof}
	Let \(\psi \in\denseD\).
	Because of \cref{thm:denseD-invariance-under-UTC}, \(\mrm{e} ^ {-\iu t H_{\mrm{TC}}}\psi\in\denseD\subseteq\dom(H_0)\).
	Therefore, we can use the product rule for the strong derivative, thus obtaining
	\begin{align}
		\odv{}{t} U_2(t)\psi
		&= \odv{}{t} \of{\mrm{e} ^ {\iu tH_0} \mrm{e} ^ {- \iu t H_{\mrm{TC}}}\psi}\\
		&= \odv{}{t}\of{\mrm{e} ^ {\iu t H_0}} \mrm{e} ^ {-\iu t H_{\mrm{TC}}}\psi + \mrm{e} ^ {\iu tH_0}\odv{}{t}\of{ \mrm{e} ^ {-\iu t H_{\mrm{TC}}}}\psi\\
		&= -\iu \mrm{e} ^ {\iu t H_0} (H_{\mrm{TC}}-H_0) \mrm{e} ^ {-\iu t H_{\mrm{TC}}}\psi\\
		&= -\iu \mrm{e} ^ {\iu t H_0} (H_{\mrm{TC}}-H_0) \mrm{e} ^ {-\iu t H_0} \mrm{e} ^ {\iu t H_0} \mrm{e} ^ {-\iu t H_{\mrm{TC}}}\psi\\
		&= -\iu H_2(t) U_2(t)\psi.
	\end{align} 
    As \(\denseD\) is spanned by eigenvectors of \(H_0\), \(\mrm{e} ^ {-\iu t H_0}\) leaves \(\denseD\) invariant and \(\mrm{e} ^ {-\iu t H_0}\psi\in\denseD\subseteq\dom(H_D)\).
    Therefore, a similar calculation proves the claim.
\end{proof}

As \(H_2(t) = H_2\) is time-independent and essentially self-adjoint, \cref{eq:strong-derivative-U2} implies that \(U_2(t)\) is generated by the self-adjoint closure of \(H_2\). With a slight abuse of notation, we will thus write \(U_2(t) = \mrm{e} ^ {-\iu tH_2}\).

\subsection{Integration-by-parts method}\label{ch:partial-integration-method}

The invariance proven in \cref{thm:denseD-invariance} will be crucial for the application of the integration-by-parts method.
The idea of this method is to obtain the difference \(U_2(t)-U_1(t)\) as the boundary term occurring from some integration by parts, in such a way to express it in terms of quantities that vanish in the limit $\omega\to\infty$. As the difference between the Hamiltonians $H_1(t)$ and $H_2$ does \textit{not} converge strongly in this limit, the idea is to apply integration by parts \emph{twice}. Let us elaborate on that.

Let \(\psi\in\denseD\).
Due to \cref{thm:denseD-invariance}, we have \(U_2(t)\psi\in\denseD\).
Then, by using \cref{thm:strong-derivatives,thm:denseD-invariance}, an integration by parts yields
\begin{align}
    \iu\of{U_2(t)-U_1(t)}\psi 
    &=	\iu\of{U_1\adj(t) U_2(t)-\one}\psi\\
    &= \iu U_1(t)\bigl[U_1\adj(s)U_2(s)\psi \bigr]^{s=t}_{s=0} \\
    &= \iu U_1(t) \int_0^t \odv{}{s} U_1\adj(s)U_2(s) \psi\dd s \\
    &=  U_1(t) \int_0^t U_1\adj(s) \bigl( H_2-H_1(s) \bigr) U_2(s) \psi\dd s. \label{eq:partial-integration-part1}
\end{align}
Taking norms and applying the triangle inequality at the right-hand side, we obtain bound
\begin{equation}
    \norm{\of{U_2(t)-U_1(t)}\psi } \leq \int_0^t \norm{\of{H_2-H_1(s)} U_2(s) \psi}\dd s
\end{equation}
which essentially depends on the norm of the difference of the Hamiltonians applied to $\psi$. However, the latter does not converge to \(0\) in the case under consideration (and it does not do so in many cases), as the difference between the two Hamiltonians does not vanish as $\omega\to\infty$.

The integration-by-parts method circumvents this problem by performing a second integration by parts, which introduces the \emph{integrated action} of the difference between the Hamiltonians
again defined on $\denseD$: that is, for all $\psi\in\denseD$,
\begin{equation}\label{eq:integrated-action-def}
	S_{21}(t)\psi \coloneqq \int_0^t \of{H_2 - H_1(s)}\psi \dd s,
\end{equation}
the integral being understood in the Bochner sense~\citetext{\citealp{bochnerIntegrationFunktionenDeren1933}; \citealp[p.~115--117,~119]{reedMethodsModernMathematical1972}; \citealp[ch.~V.5]{yosidaFunctionalAnalysis1995}}.

The next lemma gives an explicit expression for \(S_{21}(t)\) and summarises some of its relevant properties.
\begin{lemma}\label{thm:integrated-action}
	The integrated action \(S_{21}(t)\) defined by \cref{eq:integrated-action-def} takes the form
	\begin{equation}\label{eq:integrated-action}
		S_{21}(t)\psi = -\frac{\lambda}{\omega}\sin\of{\omega t}\of*{\mrm{e} ^ {\iu\omega t} \spin+\otimes\ad + \mrm{e} ^ {-i\omega t}\spin-\otimes\an} \psi,
	\end{equation} 
	for any \(\psi\in\denseD\).
	Furthermore, 
	\begin{gather}
		\label{eq:integrated-action-invariance} 
		S_{21}(t)\denseD \subseteq \denseD			
	\shortintertext{and}
		\label{eq:integrated-action-derivative}
		\odv{}{t} S_{21}(t)\psi = (H_2-H_1(t))\psi 	
	\end{gather}
 	for any \(\psi\in\denseD\) and \(t\in\R\).
\end{lemma}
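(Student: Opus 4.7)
The plan is to compute $S_{21}(t)\psi$ by direct evaluation of the Bochner integral: since the two Hamiltonians differ only by the counter-rotating terms, the integrand simplifies dramatically. First I would subtract~\eqref{eq:H1} from~\eqref{eq:H2} to obtain, for $\psi\in\denseD$,
\begin{equation}
    \of{H_2 - H_1(s)}\psi = -\lambda\of*{\mrm{e}^{\iu 2\omega s}\spin+\otimes\ad + \mrm{e}^{-\iu 2\omega s}\spin-\otimes\an}\psi.
\end{equation}
Since $\psi\in\denseD$ and $\denseD$ is invariant under each of $\spin\pm\otimes\ad$ and $\spin\pm\otimes\an$, the integrand is a $\denseD$-valued continuous function of $s$ whose norm is bounded by a constant (depending on $\psi$) on $[0,t]$, so the Bochner integral is well defined.

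Next I would evaluate the elementary scalar integrals
\begin{equation}
    \int_0^t \mrm{e}^{\pm\iu 2\omega s}\dd s = \frac{\mrm{e}^{\pm\iu 2\omega t}-1}{\pm\iu 2\omega} = \frac{\mrm{e}^{\pm\iu\omega t}\sin(\omega t)}{\omega},
\end{equation}
using $\mrm{e}^{\pm\iu 2\omega t}-1 = \mrm{e}^{\pm\iu\omega t}(\mrm{e}^{\pm\iu\omega t}-\mrm{e}^{\mp\iu\omega t}) = \pm 2\iu\,\mrm{e}^{\pm\iu\omega t}\sin(\omega t)$. Inserting these into the integral and pulling the (constant) operator factors out of the Bochner integral yields \cref{eq:integrated-action} directly.

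For the invariance property~\eqref{eq:integrated-action-invariance}, the explicit formula~\eqref{eq:integrated-action} expresses $S_{21}(t)$ as a linear combination of operators of the form $\spin\pm\otimes\ad$ and $\spin\pm\otimes\an$ with scalar coefficients. Each of these maps $\denseD = (\C^2)^{\otimes N}\otimes\finitePhotonFock$ to itself, since $\ad,\an$ each shift photon indices by one and the spin operators act on the (finite-dimensional) spin factor.

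Finally, to establish the derivative identity~\eqref{eq:integrated-action-derivative}, I would invoke the fundamental theorem of calculus for Bochner integrals~\cite[ch.~V.5]{yosidaFunctionalAnalysis1995}: if $s\mapsto f(s)\in\HH$ is norm-continuous, then $\odv{}{t}\int_0^t f(s)\dd s = f(t)$ in the strong sense. Norm-continuity of $s\mapsto (H_2-H_1(s))\psi$ is immediate from the explicit form above (the $s$-dependence sits entirely in the scalars $\mrm{e}^{\pm\iu 2\omega s}$). The only mild subtlety is checking that the Bochner integral commutes with the unbounded operators appearing in $S_{21}(t)$ as expected — but since $S_{21}(t)$ has already been obtained in closed form acting on $\denseD$, this can equivalently be verified by differentiating~\eqref{eq:integrated-action} term by term and matching the product-rule output against $(H_2 - H_1(t))\psi$, which is a routine trigonometric identity check and forms the main (minor) obstacle of the argument.
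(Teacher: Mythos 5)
Your proposal is correct and follows essentially the same route as the paper's proof: direct evaluation of the elementary scalar integrals, the same factorization $\mrm{e}^{\pm\iu 2\omega t}-1 = \pm 2\iu\,\mrm{e}^{\pm\iu\omega t}\sin(\omega t)$, invariance from the fact that $\spin\pm\otimes\ad$ and $\spin\pm\otimes\an$ preserve $\denseD$, and the fundamental theorem of calculus for the derivative identity. The "mild subtlety" you flag at the end is not actually needed, since once the closed form is in hand the FTC applies directly to the norm-continuous integrand, exactly as in the paper.
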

\begin{proof}
	Let \(\psi\in\denseD\).
	By \cref{eq:H1,eq:H2}, \(\of{H_2 - H_1(s)}\psi\) is integrable, and
	\begin{align}
		S_{21}(t) \psi
		&= -\lambda\int_0^t \of*{ \mrm{e} ^ {\iu 2\omega s}\spin+\otimes\ad + \mrm{e} ^ {-\iu 2\omega s}\spin-\otimes\an} \psi\dd s\\
		&= -\frac{\lambda}{\iu 2\omega}\of*{ \of{\mrm{e} ^ {\iu 2\omega t}-1}\spin+\otimes\ad - \of{\mrm{e} ^ {-\iu 2\omega t}-1}\spin-\otimes\an }\psi\\
		&= -\frac{\lambda}{\iu 2\omega}
		\underbrace{\of*{\mrm{e} ^ {\iu\omega t} - \mrm{e} ^ {-\iu\omega t}}}_{= 2 \iu\sin\of{\omega t}}
		\of*{ \mrm{e} ^ {\iu\omega t}\spin+\otimes\ad + \mrm{e} ^ {-\iu\omega t}\spin-\otimes\an }\psi
        ,
	\end{align}
	which proves \cref{eq:integrated-action}.
	The invariance~\eqref{eq:integrated-action-invariance} holds because \(\spin+\otimes\ad\) and \(\spin-\otimes\an\) leave \(\denseD\) invariant and  \(\denseD\) is a vector space.
	Finally, \cref{eq:integrated-action-derivative} follows from the fundamental theorem of calculus.
\end{proof}

\begin{lemma}[integration-by-parts method]\label{thm:partial-integration}
	For any \(\psi\in\denseD\), we have
	\begin{gather} \label{eq:partial-integration}
	\begin{split}
		\iu\of{U_2(t)-U_1(t)} \psi &= S_{21}(t)U_2(t)\psi\\
		&\quad+ \iu \int_0^t U_1(t)U_1\adj(s)
		\bigl( S_{21}(s)H_2 - H_1(s)S_{21}(s) \bigr) U_2(s)\psi \dd s.
	\end{split}
	\raisetag{3.5em}
	\end{gather}
\end{lemma}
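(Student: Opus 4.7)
The plan is to start from the first integration-by-parts identity~\eqref{eq:partial-integration-part1} already established in the preceding discussion, namely $\iu\of{U_2(t)-U_1(t)}\psi = U_1(t)\int_0^t U_1\adj(s)\of{H_2-H_1(s)}U_2(s)\psi\dd{s}$, and perform a \emph{second} integration by parts inside the integrand. The crucial observation is that \cref{thm:integrated-action} identifies $H_2 - H_1(s)$ as the strong $s$-derivative of $S_{21}(s)$ on $\denseD$, so the integrand can be treated as $U_1\adj(s)\of{\odv{}{s}S_{21}(s)}U_2(s)\psi$, a form tailor-made for integration by parts.

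Concretely, I would apply the Leibniz rule for strong derivatives to the map $s\mapsto U_1\adj(s)\, S_{21}(s)\, U_2(s)\psi$. The three derivatives needed are all available: $\odv{}{s}U_1\adj(s) = \iu U_1\adj(s)\, H_1(s)$ and $\odv{}{s}U_2(s) = -\iu H_2\, U_2(s)$ from \cref{thm:strong-derivatives}, together with $\odv{}{s}S_{21}(s) = H_2-H_1(s)$ from \cref{thm:integrated-action}. Solving the resulting three-term product rule for $U_1\adj(s)\of{H_2-H_1(s)}U_2(s)\psi$ isolates it as the total derivative $\odv{}{s}\bigl[U_1\adj(s)\, S_{21}(s)\, U_2(s)\psi\bigr]$ plus the commutator-like correction $\iu\, U_1\adj(s)\bigl(S_{21}(s) H_2 - H_1(s) S_{21}(s)\bigr)U_2(s)\psi$.

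Integrating the resulting identity from $0$ to $t$, the boundary term at $s=0$ vanishes because $S_{21}(0)=0$ by definition~\eqref{eq:integrated-action-def}, while the boundary term at $s=t$ contributes $U_1\adj(t)\, S_{21}(t)\, U_2(t)\psi$. Multiplying the whole identity on the left by $U_1(t)$ and using unitarity in the form $U_1(t)U_1\adj(t)=\one$, the right-hand side reassembles to $S_{21}(t)U_2(t)\psi + \iu\int_0^t U_1(t)U_1\adj(s)\bigl(S_{21}(s)H_2 - H_1(s)S_{21}(s)\bigr)U_2(s)\psi\dd{s}$, which is exactly~\eqref{eq:partial-integration}.

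The main obstacle is purely of a domain-theoretic flavour: the Leibniz rule has to be legitimately applicable at every step, which requires that each intermediate vector (in particular $S_{21}(s)U_2(s)\psi$ and $H_2\, U_2(s)\psi$) lies in $\denseD$, so that the strong derivatives from \cref{thm:strong-derivatives,thm:integrated-action} are meaningful and no ill-defined composition of unbounded operators creeps into the calculation. This is precisely what \cref{thm:denseD-invariance} together with the invariance~\eqref{eq:integrated-action-invariance} are tailored to guarantee, so no technical input beyond \cref{ch:denseD-invariance} should be necessary.
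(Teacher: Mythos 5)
Your proposal is correct and matches the paper's own argument essentially step for step: both differentiate $s\mapsto U_1\adj(s)\,S_{21}(s)\,U_2(s)\psi$ via the product rule, recognize the integrand of \cref{eq:partial-integration-part1} inside that derivative (with the correction term $\iu\,U_1\adj(s)\bigl(S_{21}(s)H_2 - H_1(s)S_{21}(s)\bigr)U_2(s)\psi$), and integrate, with the boundary term at $s=0$ vanishing since $S_{21}(0)=0$. The domain considerations you flag are exactly the ones the paper invokes from \cref{thm:integrated-action} and \cref{thm:denseD-invariance}.
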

\begin{proof}
	Let \(\psi\in\denseD\).
	Due to \cref{thm:integrated-action}, in addition to \(U_2(t)\psi\in\denseD\), we also have \(S_{21}(t)U_2(t)\psi\in\denseD\) and \(S_{21}(t) H_2 U_2(t)\psi\in\denseD\).
	By noticing that the derivative
	\begin{align}
		\begin{split}
			\odv{}{s} \bigl( U_1\adj(s)S_{21}(s)U_2(s) \bigr) \psi&= \iu U_1\adj(s) \of*{H_1(s)S_{21}(s) - S_{21}(s)H_2}U_2(s)\psi \\
			&\quad + U_1\adj(s)\of*{H_2-H_1(s)}U_2(s)\psi
		\end{split}
	\end{align}
	contains the integrand of \cref{eq:partial-integration-part1}, we can perform another integration by parts:
	\begin{align}
	\begin{split}
		\iu\of{U_2(t)-U_1(t)}\psi 
		&= U_1(t)\int_0^t \bigg[ \odv{}{s}\of*{U_1\adj(s)S_{21}(s)U_2(s)\psi}
        \\
		&\quad + \iu U_1\adj(s) \of*{ S_{21}(s)H_2 -H_1(s)S_{21}(s) }U_2(s)\psi \bigg]\dd s 
	\end{split}\\
	\begin{split}
	&= S_{21}(s)U_2(s)\psi 
    \\
	&\quad +\iu \int_0^t U_1(t)U_1\adj(s) \of{ S_{21}(s)H_2 - H_1(s)S_{21}(s) }U_2(s)\psi,
	\end{split}
	\raisetag{3.5em}
	\end{align}
	which concludes the proof.
\end{proof}
This property can be used, as we will see, to bound the difference between the two dynamics $U_2(t)$ and $U_1(t)$ in terms of the difference of their \textit{actions}, rather than the Hamiltonians themselves. This is indeed useful in those cases in which the difference between two Hamiltonians is only small when \emph{averaged} in time. As we will see in \cref{ch:calculating-general-bound}, this indeed happens in our case.

\subsection{Calculating the bound}\label{ch:calculating-general-bound}
We will now proceed towards the proof of \cref{thm:general-bound} by utilizing \cref{thm:partial-integration}. Some further intermediate results will be needed.

To begin with, let us recall a useful inequality---hereafter referred to as the \emph{Cauchy--Schwarz trick}---that will be repeatedly employed in our computations to bound the norm of quantities involving tensor product of spin and bosonic operators in terms of their components separately.
Consider the tensor product \(A\otimes B\) of two densely defined operators \(A\colon\dom(A)\subseteq\HH_1\to\HH_1\) and \(B\colon\dom(B)\subseteq\HH_2\to\HH_2\) on Hilbert spaces \(\HH_1\) and \(\HH_2\). 
Then, as a direct consequence of the Cauchy--Schwarz inequality, we have
\begin{equation}\label{eq:Cauchy--Schwarz-trick}
	\norm{(A\otimes B)\psi} \leq \sqrt{\norm{A\adj A\otimes \one \psi} \norm{\one\otimes B\adj B \psi}}.
\end{equation}

We present now a general bound for arbitrary polynomials of creation and annihilation operators:
\begin{lemma}\label{thm:bosonic-polynomial-bound}
	Let $k,l\in\N$, and $p^{(k)}_l(a,a^\dag):\finitePhotonFock\subset\Ltwo{\R}\rightarrow\Ltwo{\R}$ defined by
	\begin{equation}
		p^{(k)}_l(\an,\ad) = 
		\lambda_1 m^{(k_1)}_1 (\an,\ad) 
		+ \dots 	
		+ \lambda_l m^{(k_l)}_l(\an,\ad),
	\end{equation}
	with \(\lambda_i\in \C\), and each of the \(l \in \N\) terms above being a monomial in $a$ and $a^\dag$ of order \(k_i\leq k\in\N\):
    \begin{equation}\label{eq:def-monomial}
        m^{(k_i)}_i(\an,\ad) = a^{\#_1} \dots a^{\#_{k_i}}
        ,
    \end{equation}
    where each \(a^{\#}\) stands for either \( \ad \) or \( \an \).
    Then, for any \(\phi \in \finiteFock\), the following estimate holds:
	\begin{equation} \label{eq:bosonic-polynomial-bound}
		\norm*{p^{(k)}_l(\an,\ad)\phi} 
		\leq l 
		\,\max_{1\leq i\leq l}\cof{\abs{\lambda_i}}
		\, \norm*{\sqrt{(\nn+1)\dots(\nn+k)}\phi}.
	\end{equation}
\end{lemma}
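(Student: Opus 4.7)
My plan is to reduce the polynomial estimate to a single monomial estimate via the triangle inequality, and then prove the monomial estimate by tracking the action on each Fock basis vector.

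For the first step, the triangle inequality together with $\abs{\lambda_i}\leq\max_j\abs{\lambda_j}$ gives
\begin{equation*}
    \norm*{p^{(k)}_l(\an,\ad)\phi} \leq l \max_{1\leq i\leq l}\abs{\lambda_i} \cdot \max_{1\leq i\leq l}\norm*{m^{(k_i)}_i(\an,\ad)\phi},
\end{equation*}
so it suffices to prove, for every monomial $m(\an,\ad)$ of order $k'\leq k$, the estimate $\norm{m(\an,\ad)\phi}\leq\norm{\sqrt{(\nn+1)\cdots(\nn+k)}\phi}$. Because every factor $\nn+j$ acts as multiplication by the positive integer $n+j$ on $\phi_n$, it is enough to show the sharper statement $\norm{m(\an,\ad)\phi}\leq\norm{\sqrt{(\nn+1)\cdots(\nn+k')}\phi}$ and extend by monotonicity.

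To prove the monomial bound, I would expand $\phi=\sum_n c_n\phi_n$, a \emph{finite} sum since $\phi\in\finiteFock$, and observe that a monomial of order $k'$ maps $\phi_n$ to $d_n\phi_{n+r}$, where the shift $r$ (the number of $\ad$'s minus the number of $\an$'s in the monomial) does not depend on $n$. Consequently, the outputs indexed by different $n$ remain mutually orthogonal, and Parseval gives $\norm{m(\an,\ad)\phi}^2=\sum_n\abs{c_n}^2\abs{d_n}^2$. Comparing this with $\norm{\sqrt{(\nn+1)\cdots(\nn+k')}\phi}^2=\sum_n\abs{c_n}^2(n+1)\cdots(n+k')$ reduces the lemma to the coefficient estimate $\abs{d_n}^2\leq(n+1)(n+2)\cdots(n+k')$.

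The heart of the argument, and what I expect to be the main technical point, is this coefficient estimate. I would process the operators from right to left: letting $n_j$ denote the Fock index after $j$ steps, we have $n_0=n$ and $n_j\leq n_{j-1}+1$, so $n_j\leq n+j$. The factor picked up at step $j$ is either $\sqrt{n_{j-1}+1}$ (when the operator is $\ad$) or $\sqrt{n_{j-1}}$ (when it is $\an$); both are bounded by $\sqrt{n+j}$. Taking the product over $j=1,\dots,k'$ yields $\abs{d_n}^2\leq\prod_{j=1}^{k'}(n+j)$. The edge case where the monomial annihilates $\phi_n$ (because $\an$ is applied to $\phi_0$ at some intermediate step) just gives $d_n=0$, so the bound holds trivially there. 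Combining everything closes the estimate.
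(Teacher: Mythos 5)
Your proposal is correct and follows essentially the same route as the paper's proof: reduce to a single monomial by the triangle inequality, bound the monomial's action on each Fock basis vector by $\sqrt{(n+1)\cdots(n+k)}$, and assemble via orthogonality of the images. If anything, your version is slightly more careful than the paper's, since you make explicit both the fixed index shift $r$ (which justifies the orthogonality/Parseval step) and the step-by-step coefficient estimate that the paper compresses into ``by the monotonicity of $\sqrt{n}$''.
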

\begin{proof}
	We first prove the bound for a monomial \(m^{(k)}(\an, \ad) = a^{\#_1}\dots a^{\#_k}\) of order \(k\) in the bosonic creation and annihilation operators and for \(\phi\in \finiteFock\). 
	For every \(\phi_n\), using \cref{eq:creation-on-number-state,eq:annihilation-on-number-state} we get 
	\begin{equation}
		\norm*{m^{(k)}(\an, \ad)\phi_n} \leq \norm*{\sqrt{(n+1)\dots(n+k)} \phi_n} 
	\end{equation}
	by the monotonicity of \(\sqrt{n}\). Then, for an arbitrary finite-photon state \(\phi = \sum_{n=0}^{N} c_n \phi_n \in \finiteFock \), \(c_n\in\C\), we have 
	\begin{align}
		\norm*{m^{(k)}(\an,\ad)\phi} 
		&= \sum_{n=0}^{N} \abs{c_n} \norm{m^{(k)}(\an,\ad)\phi_n}  
		\\
		&\leq \sum_{n=0}^{N} \abs{c_n} \sqrt{(n+1)\dots(n+k)} \norm*{\phi_n } 
		\\
		\label{eq:bosonic-monomial-bound}
		&= \norm*{\sqrt{(\nn+1)\dots(\nn+k)} \phi},
	\end{align}
	where, in the first and last line, we used the orthogonality of the \(\phi_n\).

	From this bound for monomials, \cref{eq:bosonic-polynomial-bound} directly follows by the triangle inequality, thus concluding the proof.
\end{proof}

We now state and prove a bound for the difference between the evolutions of arbitrary states in the finite-photon subspace.

\begin{prop}\label{thm:intermediate-bound}
    Let $\mrm{e} ^ {-\iu t H_{\mrm{D}}}$, $\mrm{e} ^ {-\iu t H_{\mrm{TC}}}$ be the evolution groups generated by the Tavis--Cummings and Dicke Hamiltonians, \(H_{\mrm{D}}\) and \(H_{\mrm{TC}}\), and let \(U_2(t) = \mrm{e}^{-\iu t H_2}\) be the evolution group generated by the Tavis--Cummings Hamiltonian in the interaction picture as in \cref{eq:H2}. 
    For every \(\psi\in\denseD\), the following estimate holds for any $t>0$: 
	\begin{gather}
		\label{eq:intermediate-bound}
		\norm[\big]{ \of[\big]{\mrm{e} ^ {-\iu t H_{\mrm{TC}}} - \mrm{e} ^ {-\iu t H_{\mrm{D}}}}\psi} 
        \leq \frac{\lambda}{\omega} \abs*{\sin(\omega t)} \min_\pm f^{\pm}_C(\psi_t) 
        + 3\frac{\lambda^2}{\omega} \int_{0}^{t }f_L( \psi_s ) \dd s
	\shortintertext{with \( f^\pm_{C},\, f_{L}\colon\:\denseD\to\R\) given by}
	\begin{aligned}
     &f^{\pm}_C(\psi) = \eof*{
        2\norm[\big]{\of{ {\spinalt{}}^2 - \Sz^2} \psi} 
        + 2 \norm[\big]{\of{\salt\pm}^2\psi}
            }^{\nicefrac{1}{2}}
        \norm[\big]{\of{\nnalt + 2 }\psi}^{\nicefrac{1}{2}}
	   \\
		&\!\begin{multlined}
			f_L(\psi) = 
			\biggl[	
				\norm[\big]{\of{\spinalt+ \spinalt-}^2 \psi}^{\nicefrac{1}{2}} 
				+ \norm[\big]{\of{\spinalt- \spinalt+}^2 \psi}^{\nicefrac{1}{2}}
				+ {\norm[\big]{\of{\spinalt+}^2\of{\spinalt-}^2 \psi}}^{\nicefrac{1}{2}}  
				\\
				+ {\norm[\big]{\of{\spinalt-}^2\of{\spinalt+}^2 \psi}}^{\nicefrac{1}{2}}
			\biggr]
			\cdot\norm[\big]{\of{\nnalt + 4}^2 \psi}^{\nicefrac{1}{2}}
		\end{multlined}
		\end{aligned}
	\end{gather}
	and where  \(\psi_t = U_2(t) \psi\), \(\spinalt\pm = \spin\pm\otimes\one\) and \(\nnalt = \one\otimes\nn\).
\end{prop}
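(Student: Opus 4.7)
My plan is to apply the integration-by-parts identity of \cref{thm:partial-integration} and then bound each of the two resulting contributions using the Cauchy--Schwarz trick~\eqref{eq:Cauchy--Schwarz-trick} together with the monomial bound of \cref{thm:bosonic-polynomial-bound}. Since the passage to the interaction picture is unitary, the left-hand side of~\eqref{eq:intermediate-bound} equals $\norm{(U_2(t)-U_1(t))\psi}$. Applying \cref{thm:partial-integration}, taking norms, and invoking the unitarity of $U_1(t)U_1\adj(s)$ together with the triangle inequality yields
\begin{equation*}
\norm{(U_2(t)-U_1(t))\psi} \leq \norm{S_{21}(t)\psi_t} + \int_0^t \norm{(S_{21}(s)H_2 - H_1(s)S_{21}(s))\psi_s}\,\dd s,
\end{equation*}
with $\psi_s = U_2(s)\psi \in \denseD$ by \cref{thm:denseD-invariance}; it thus suffices to bound the two terms on the right.

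For the \emph{boundary term}, substituting the explicit form~\eqref{eq:integrated-action} of $S_{21}(t)$ factors out $\frac{\lambda}{\omega}\abs{\sin(\omega t)}$ and leaves the norm of the self-adjoint operator $Y(t) = \mrm{e}^{\iu\omega t}\spin+\otimes\ad + \mrm{e}^{-\iu\omega t}\spin-\otimes\an$ applied to $\psi_t$. I expand $\norm{Y(t)\psi_t}^2 = \langle\psi_t, Y(t)^2\psi_t\rangle$: the diagonal part of $Y(t)^2$ equals $\spin+\spin-\otimes\nn + \spin-\spin+\otimes(\nn+1)$, whose expectation is bounded via~\eqref{eq:SplusSminusCommutator} and a Cauchy--Schwarz estimate on the scalar product by $2\norm{(\spinalt{}^2-\Sz^2)\psi_t}\cdot\norm{(\nnalt+2)\psi_t}$. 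The two mutually adjoint off-diagonal contributions $\mrm{e}^{2\iu\omega t}(\spin+)^2\otimes(\ad)^2$ and $\mrm{e}^{-2\iu\omega t}(\spin-)^2\otimes\an^2$ combine into a real expectation value which, by Cauchy--Schwarz on the scalar product---applied either by transferring the spin factor or the bosonic factor across---is bounded by either $2\norm{(\spinalt+)^2\psi_t}\cdot\norm{(\nnalt+2)\psi_t}$ or $2\norm{(\spinalt-)^2\psi_t}\cdot\norm{(\nnalt+2)\psi_t}$. Taking square roots yields $\frac{\lambda}{\omega}\abs{\sin(\omega t)}f_C^\pm(\psi_t)$, with the sign freedom producing the $\min_\pm$.

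For the \emph{integrand}, I would expand $S_{21}(s)H_2 - H_1(s)S_{21}(s)$ via~\eqref{eq:integrated-action},~\eqref{eq:H1},~\eqref{eq:H2}, and simplify the cross-terms using the ladder commutation relations~\eqref{eq:ladder-algebra}. The outcome is $-\frac{\lambda^2}{\omega}\sin(\omega s)$ times a polynomial in $\spin\pm, \Sz, \an, \ad$ of bosonic degree at most~$2$. Each of its monomials is then estimated through~\eqref{eq:Cauchy--Schwarz-trick}, dominating the bosonic factor by $\norm{(\nnalt+4)^2\psi_s}^{1/2}$ via \cref{thm:bosonic-polynomial-bound}. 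The monomials with a lone $\Sz$---arising from the commutator $[\spin+,\spin-]=2\Sz$---require an additional splitting via $2\Sz = \spin+\spin- - \spin-\spin+$ to match the spin-side expressions appearing in $f_L$. Regrouping the resulting contributions by spin-operator type, the four families $(\spinalt+\spinalt-)^2$, $(\spinalt-\spinalt+)^2$, $(\spinalt-)^2(\spinalt+)^2$, $(\spinalt+)^2(\spinalt-)^2$ each appear with multiplicity at most~$3$, which yields the prefactor~$3$; finally, bounding $\abs{\sin(\omega s)}\leq 1$ produces the stated estimate.

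The main obstacle is the combinatorial bookkeeping of the integrand: the raw expansion of $S_{21}(s)H_2 - H_1(s)S_{21}(s)$ produces roughly a dozen monomials whose contributions must be carefully regrouped (in particular, the $\Sz$-proportional terms must be split via $2\Sz = \spin+\spin- - \spin-\spin+$) so as to reduce exactly to the four spin-side norms appearing in $f_L$ with an overall coefficient bounded by~$3$.
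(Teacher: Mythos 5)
Your proposal is correct and follows essentially the same route as the paper's proof: the integration-by-parts identity of \cref{thm:partial-integration}, then the Cauchy--Schwarz trick~\eqref{eq:Cauchy--Schwarz-trick} and \cref{thm:bosonic-polynomial-bound}, with the same diagonal/off-diagonal treatment of the boundary term (the $\min_\pm$ arising from the choice of which tensor factor to transfer across the inner product) and the same expansion $2\Sz = \spin+\spin- - \spin-\spin+$ in the integrand. The only cosmetic difference is the bookkeeping of the prefactor $3$: the paper groups the integrand into four spin families each tensored with a three-term bosonic polynomial and applies the Cauchy--Schwarz trick family-wise (so that the degree-four polynomial with $3\cdot 3 = 9$ addends yields $\sqrt{9}=3$), whereas you apply it monomial-wise and count family multiplicities $(3,3,2,2)\leq 3$---both yield the same constant and the same four spin norms in $f_L$.
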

We refer to the appendix for the proof; this relies on repeated applications of \cref{thm:bosonic-polynomial-bound,thm:partial-integration,eq:Cauchy--Schwarz-trick}, as detailed at page~\pageref{proof:intermediate-bound}.

Beyond being a key step toward the proof of \cref{thm:general-bound}, this result is also of independent interest. It can be used for \emph{a posteriori} error estimates when doing numerical simulations of the dynamics within the RWA. 
Once the dynamics of an initial state are being calculated using the RWA, the bound in \cref{thm:intermediate-bound} can be evaluated giving an estimate for the error introduced by incorporating the RWA without the need for also simulating the more complex Dicke dynamics. 
In addition, \cref{thm:intermediate-bound} provides an analytic upper bound on the RWA error for arbitrary states with finitely many photon excitations. However, it is not directly applicable for \textit{a priori} estimates, as it still involves the evolution group \(U_2(t)\), for which no simple closed-form expression is available. To proceed, one can either specialize to states for which the action of \(U_2(t)\) simplifies—i.e. the eigenstates of the Tavis--Cummings Hamiltonian introduced in \cref{ch:dicke-tc-hamiltonians}, for which only a global phase remains—or estimate the norms in \cref{eq:intermediate-bound} using operator norm bounds. The latter, more conservative strategy is discussed in \cref{ch:worst-case-bound}, while the former is pursued below and leads to our main result, \cref{thm:general-bound}.

\begin{proof}[{Proof of {\Cref{thm:general-bound}}}]
    We specialize \cref{thm:intermediate-bound} to the case where \(\psi\) is some eigenstate of the Tavis--Cummings Hamiltonian, i.e. \(\psi = \tcpsi{S}{M}{\sigma}\) for some \(S,M,\) and \(\sigma\) (cf.~\cref{eq:TC-eigenstates}).
    Then, due to \cref{eq:def-U2}, \(\psi_t = U_2(t)\psi = \e^{+\iu \omega t (M-S)}\e^{-\iu t \tceig{S}{M}{\sigma}} \tcpsi{S}{M}{\sigma}\) and we have 
    \(f^{\pm}_C(\psi_t) = f^{\pm}_C(\tcpsi{S}{M}{\sigma})\) and \(f_L(\psi_t) = f_L(\tcpsi{S}{M}{\sigma})\) due to linearity of all relevant operators.
    In particular, the integrand in \cref{eq:intermediate-bound} becomes independent of \(s\), whence
    \begin{align}
        \int_{0}^{t }f_L(\psi_s) \dd s
        = \int_{0}^{t } f_L(\tcpsi{S}{M}{\sigma}) \dd s
        = \abs{t} f_L(\tcpsi{S}{M}{\sigma})
        ,
    \end{align}
    which shows the bound in \cref{eq:general-bound}.

    To conclude the proof, we need to show that the limit~\eqref{eq:dicke-tc-convergence} holds for any $\psi\in\HH$.
    For linear combinations $\psi=\sum_{k=1}^{k_{\mrm{max}}} c_k \tcpsi{S_k}{M_k}{\sigma_k}$ of finitely many eigenstates, this immediately follows from \cref{eq:general-bound} by the triangle inequality.
    For a general \(\psi \in \HH\), we use the fact that the set of all eigenstates \(\tcpsi{S}{M}{\sigma}\) is a complete for \(\HH\) \cite{Bogoliubov_Bullough_Timonen_1996,bogoliubovTimeEvolutionAtomic2017}: given any \(\epsilon > 0\) there exist
    \begin{equation}
        \psi_\epsilon \in \Set*{\sum_{k=1}^{k_{\mrm{max}}} c_k \tcpsi{S_k}{M_k}{\sigma_k} \given c_k \in \C, k_{\mrm{max}} \in \N}
    \end{equation}
    with
    \begin{equation}
        \norm*{\psi - \psi_\epsilon} \leq \frac{\epsilon}{4}
    \end{equation}
    and \(\omega > 0\) with
    \begin{equation}
        \norm[\big]{\of[\big]{\e^{-\iu H_{\mathrm{D}}} - \e^{-\iu H_{\mathrm{TC}}}} \psi_\epsilon} \leq \frac{\epsilon}{2}
        ,
    \end{equation}
    because \(\norm[\big]{\of[\big]{\e^{-\iu H_{\mathrm{D}}} - \e^{-\iu H_{\mathrm{TC}}}} \psi_\epsilon}\to 0\) as \(\omega \to \infty\);
    but then, applying the triangle inequality,
    \begin{align}
        \norm[\big]{ \of[\big]{\mrm{e} ^ {-\iu t H_{\mrm{TC}}} - \mrm{e} ^ {-\iu t H_{\mrm{D}}}}\psi } 
        &\leq \norm[\big]{ \of[\big]{\mrm{e} ^ {-\iu t H_{\mrm{TC}}} - \mrm{e} ^ {-\iu t H_{\mrm{D}}}}\psi_\epsilon }+\norm[\big]{ \of[\big]{\mrm{e} ^ {-\iu t H_{\mrm{TC}}} - \mrm{e} ^ {-\iu t H_{\mrm{D}}}}(\psi-\psi_\epsilon) }
        \\
        &\leq\frac{\epsilon}{2} + \frac{\epsilon}{2} 
        = \epsilon
        ,
    \end{align}
    where we used the unitarity of \(\mrm{e} ^ {-\iu t H_{\mrm{TC}}}\) and \(\mrm{e} ^ {-\iu t H_{\mrm{D}}}\); since \(\epsilon\) was chosen arbitrarily, this proves \cref{eq:dicke-tc-convergence} and thus concludes the proof.
\end{proof}

We conclude this section with two technical remarks regarding the proof of \cref{thm:general-bound}, deferring the analysis of our error bounds to \cref{sec:analysis}.

\begin{remark}\label{thm:remark-reduction-special-case}
    The order in which we apply the \textit{Cauchy--Schwarz trick}~\eqref{eq:Cauchy--Schwarz-trick} and the bound for bosonic polynomials, \cref{thm:bosonic-polynomial-bound}, is relevant.
    If we had applied \cref{thm:bosonic-polynomial-bound} before \cref{eq:Cauchy--Schwarz-trick} in \cref{eq:dicke-tc-general-bound-proof3}, the resulting bound would only depend on the bosonic part of the state as the collective spin operators would have been bounded by their operator norm \(\opnorm{\s\pm} = N\).
    This alternative procedure would lead to a bound that directly reduces to the one proven in ref.~\cite{burgarthTamingRotatingWave2024} for the special case \(N=1\) and is directly applicable to arbitrary initial states in \(\denseD\).
    Instead, the bound proven here also depends on the spin state, as the operators are separated by \cref{eq:Cauchy--Schwarz-trick} first.
\end{remark}

\begin{remark}
    As already stated, in this analysis we restricted our attention to the resonant case $\omega_0=\omega$. An extension to our results to the general case of a nonzero detuning between the two energies, $\Delta=\omega_0-\omega$, is indeed possible at the price of additional expressions. 
    In this case the bound takes the form 
    \begin{gather}
		\label{eq:general-bound-of-resonant}
		\norm[\big]{ \of[\big]{\mrm{e} ^ {-\iu t H_{\mrm{TC}}} - \mrm{e} ^ {-\iu t H_{\mrm{D}}}}\tcpsi{S}{M}{\sigma} } 
        \!\begin{multlined}[t]
        \leq \frac{\lambda}{\omega}\abs*{\sin(\omega t)} \min_\pm f^{\pm}_C(\tcpsi{S}{M}{\sigma}) 
        \\
        +\abs{t}\left(\frac{\lambda}{\omega} \frac{\Delta}{2} f_d(\tcpsi{S}{M}{\sigma}) + 3\frac{\lambda^2}{\omega} f_L(\tcpsi{S}{M}{\sigma})\right)
        \end{multlined}
	\shortintertext{with \(f_{C}^\pm,f_d,f_{L}\colon\:\denseD\to\R\) given by}
	\begin{aligned}
     &f^{\pm}_C(\psi) = \eof*{
        2\norm[\big]{\of{ {\spinalt{}}^2 - \Sz^2} \psi} 
        + 2 \norm[\big]{\of{\salt\pm}^2\psi}
            }^{\nicefrac{1}{2}}
        \norm[\big]{\of{\nnalt + 2 }\psi}^{\nicefrac{1}{2}}
	   \\
		&\!\begin{multlined}
			f_L(\psi) = 
			\biggl[	
				\norm[\big]{\of{\spinalt+ \spinalt-}^2\psi}^{\nicefrac{1}{2}} 
				+ \norm[\big]{\of{\spinalt- \spinalt+}^2\psi}^{\nicefrac{1}{2}}
				+ {\norm[\big]{\of{\spinalt+}^2\of{\spinalt-}^2 \psi}}^{\nicefrac{1}{2}}  
				\\
				+ {\norm[\big]{\of{\spinalt-}^2\of{\spinalt+}^2 \psi}}^{\nicefrac{1}{2}}
			\biggr]
			\cdot\norm[\big]{\of{\nnalt +4}^2 \psi}^{\nicefrac{1}{2}}
		\end{multlined}
        \\
        &f_d(\psi) = \eof*{\norm[\big]{\salt+\salt-\psi}^{\nicefrac{1}{2}} +\norm[\big]{\salt-\salt+\psi}^{\nicefrac{1}{2}}}\norm[\big]{(\nnalt+1)\psi}^{\nicefrac{1}{2}}
        .
		\end{aligned}
	\end{gather}
    The proof is analogous to that of \cref{thm:general-bound}, so we omit it here and focus our discussion on the resonant case.
\end{remark}

\subsection{Alternative worst-case bound}
\label{ch:worst-case-bound}
As mentioned in the discussion of \cref{thm:intermediate-bound}, an alternative strategy for removing the dependence on the evolution group \(U_2\) in \cref{eq:intermediate-bound} is available. Here, we present that approach. The idea is to estimate the norms on the right-hand side of \cref{eq:intermediate-bound}, i.~e. the expressions
\begin{align}
    \label{eq:necessary-estimate-1}
    &\norm[\big]{ \of{\salt\pm}^2 \of{\salt\mp}^2U_2(t)\psi }
    ,
    \\
    \label{eq:necessary-estimate-2}
    &\norm[\big]{\of[\big]{\salt\pm\salt\mp}^2U_2(t)\psi}
    ,
    \\
    \label{eq:necessary-estimate-3}
    &\norm[\Big]{\of[\big]{\salt\pm}^2 U_2(t)\psi }
    ,
    \\
    \label{eq:necessary-estimate-4}
    &\norm[\Big]{\of[\big]{{\salt{}}^2- \boldsymbol{S}_z^2}U_2(t)\psi }
    ,
    \\
    \label{eq:necessary-estimate-5}
    &\norm[\big]{\of{\nnalt + 2 } U_2(t)\psi}
    ,
    \shortintertext{and}
    \label{eq:necessary-estimate-6}
    &\norm[\big]{\of{\nnalt + 4}^2 U_2(t)\psi}
    ,
\end{align}
with time-independent quantities.
For the expressions in \eqref{eq:necessary-estimate-5} and \eqref{eq:necessary-estimate-6} we make use of the conservation of \(H_0\) under \(U_2(t)\).
Note that, in the following, we will use the following notation: 
for two symmetric operators \(A\colon \dom(A)\subset \HH \to \HH\) and \(B \colon \dom(B) \subset \HH \to \HH \), we write
\begin{equation}
    A \leq B \ratio\Leftrightarrow \innerp{\psi}{A\psi} \leq \innerp{\psi}{B\psi} ,\quad \forall \psi \in \dom(A)\cap \dom(B).
\end{equation}
    
\begin{lemma}\label{thm:number-and-U2}
	Let $\nnalt = \one\otimes\nn$, with \(\nn\) the number operator~\eqref{eq:number-operator-def}. 
    Then
	\begin{equation}\label{eq:number-and-U2}
		U_2\adj(t)\, \nnalt\, U_2(t) \leq \of*{\nnalt + N\one\otimes\one}.
	\end{equation}
 \end{lemma}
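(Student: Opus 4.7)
The plan is to exploit the conservation of the total excitation number \(\Szalt + \nnalt\) under the Tavis--Cummings dynamics, and to observe that \(\nnalt\) differs from this conserved quantity only by the \emph{bounded} operator \(\Szalt\). Once unitary conjugation by \(U_2(t)\) is applied, the conserved part is untouched, while the bounded remainder contributes at most an \(N\)-sized additive constant.

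First, I would verify the algebraic identity \([H_{\mrm{TC}},\,\SZ\otimes\one + \one\otimes\nn]\psi = 0\) for every \(\psi\in\denseD\). The free part \(H_0\) is itself a linear combination of \(\SZ\otimes\one\) and \(\one\otimes\nn\) and so commutes trivially. For the interaction term, the ladder algebra~\eqref{eq:ladder-algebra} and the CCR~\eqref{eq:CCR} give
\[
    [\s+\otimes\an,\,\SZ\otimes\one+\one\otimes\nn] = [\s+,\SZ]\otimes\an + \s+\otimes[\an,\nn] = -\s+\otimes\an + \s+\otimes\an = 0,
\]
and analogously for \(\s-\otimes\ad\). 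Since every vector in \(\denseD\) is analytic for \(H_{\mrm{TC}}\), the same power-series argument used in \cref{thm:TC-H0-commutation} lifts this to \([\mrm{e}^{-\iu tH_{\mrm{TC}}},\,\Szalt+\nnalt]\psi = 0\). Combining with the evident commutation of \(\Szalt\) and \(\nnalt\) with \(\mrm{e}^{\iu tH_0}\) yields \([U_2(t),\,\Szalt+\nnalt]\psi = 0\) on \(\denseD\).

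Using the invariance \(U_2(t)\denseD\subseteq\denseD\) from \cref{thm:denseD-invariance}, this commutation is equivalent to the operator identity
\[
    U_2\adj(t)\,\nnalt\,U_2(t) \;=\; \nnalt + \Szalt - U_2\adj(t)\,\Szalt\,U_2(t)
\]
on \(\denseD\). Since \(\Szalt = S_z\otimes\one\) is bounded self-adjoint with \(\opnorm{\Szalt} = N/2\), both \(\Szalt\) and \(-U_2\adj(t)\,\Szalt\,U_2(t)\) are dominated by \((N/2)\one\otimes\one\) in the sense of the quadratic-form ordering (unitary conjugation preserves the spectrum). Summing these two bounds proves the inequality \cref{eq:number-and-U2}; incidentally, this argument yields the sharper \(\nnalt + \Szalt + (N/2)\one\otimes\one\), which is of course still majorised by \(\nnalt + N\,\one\otimes\one\).

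The only real obstacle is the domain book-keeping required to move freely between commutation of \(H_{\mrm{TC}}\) with \(\Szalt+\nnalt\), commutation of the generated unitary group, and the resulting operator identity after conjugation; but all the relevant tools---invariance of \(\denseD\) under \(U_2(t)\), analyticity of vectors in \(\denseD\) for \(H_0\) and \(H_{\mrm{TC}}\), and the power-series representation of exponentials on analytic vectors---are already assembled in \cref{ch:denseD-invariance}, so no new analytic machinery is needed.
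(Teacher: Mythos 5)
Your proof is correct, and it rests on the same key identity as the paper's: both arguments reduce the claim to
\begin{equation*}
    U_2\adj(t)\,\nnalt\,U_2(t)\psi = \bigl(\nnalt + \Szalt - U_2\adj(t)\,\Szalt\,U_2(t)\bigr)\psi,
    \qquad \psi\in\denseD,
\end{equation*}
obtained from the conservation of the free energy. Note that your first step---proving that \(\Szalt+\nnalt\) commutes with the Tavis--Cummings dynamics---is literally the content of \cref{thm:TC-H0-commutation} and \cref{thm:U2-H0-commutation} after dividing by \(\omega\), since \(\Szalt+\nnalt = H_0/\omega\); you could simply cite those results instead of re-running the power-series argument. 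Where the two proofs genuinely differ is the final estimate on \(\Szalt - U_2\adj(t)\,\Szalt\,U_2(t)\): the paper decomposes \(\Szalt\) into the sum \(\tfrac12\sum_{i}\bigl(\pauliPlusI{i}\pauliMinusI{i}-\pauliMinusI{i}\pauliPlusI{i}\bigr)\otimes\one\) of (differences of) orthogonal projectors and applies the elementary inequalities \(P - A\adj P A\leq\one\) and \(U\adj P U - P\leq\one\) term by term, while you bound each of \(\Szalt\) and \(-U_2\adj(t)\,\Szalt\,U_2(t)\) directly by \(\opnorm{\Szalt}\,\one = (N/2)\,\one\) in the quadratic-form sense. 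Both routes yield the same constant \(N\); yours is shorter and additionally produces the slightly sharper operator bound \(\nnalt + \Szalt + (N/2)\,\one\otimes\one\), whereas the paper's projector decomposition is the kind of argument that would survive if \(S_z\) were replaced by a sum of projectors without a uniformly small norm. No gap in either the algebra or the domain book-keeping: the invariance \(U_2(t)\denseD\subseteq\denseD\) and the analyticity of vectors in \(\denseD\) are exactly what is needed to justify each step.
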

\begin{proof}
    Recall that, given an orthonormal projector \(P = P\adj = P^2\) and some densely defined operator \(A\) on a Hilbert space \(\HH\), we always have 
	\begin{align}
		P - A\adj P A &\leq \one, \label{eq:number-and-U2-part-1}
	\end{align}
	as a direct consequence of the inequalities \(0\leq P \leq \one\). 
	Similarly, for every unitary operator \(U\) on \(\HH\),
	\begin{align}
		U\adj P U - P &\leq \one. \label{eq:number-and-U2-part-2}
	\end{align}
    In our case, the matrices \(\pauliPlusI{i}\pauliMinusI{i}\) and \(\pauliMinusI{i}\pauliPlusI{i}\)
	are orthonormal projectors.
	As such, recalling that the free Hamiltonian \(H_0\) is a conserved quantity for \(U_{2}(t)\) by \cref{thm:U2-H0-commutation}, we have	
	\begin{align}
		U_2\adj(t) \nnalt U_2(t) 
		&= \frac{1}{\omega} U_2\adj(t) H_0 U_2(t) - U_2\adj(t) \SZalt U_2(t)	
		\\
		&= \frac{1}{\omega} H_0 - U_2\adj(t) \SZalt U_2(t)	
		\\
		&= \nnalt + \of*{\SZalt - U_2\adj(t) \SZalt U_2(t)}	
		\\
		&\!\begin{multlined}
			= \nnalt + \frac{1}{2}\sum_{i=1}^{N}
			\biggl[
				\underbrace{
					\pauliPlusI{i}\pauliMinusI{i}\otimes\one - U_2\adj(t) \pauliPlusI{i}\pauliMinusI{i}\otimes\one U_2(t)
				}_{\leq \one,\text{ by \cref{eq:number-and-U2-part-1}}}
			\\
				+ \underbrace{
				U_2\adj(t) \pauliMinusI{i}\pauliPlusI{i}\otimes\one U_2(t) -\pauliMinusI{i}\pauliPlusI{i}\otimes\one
				}_{\leq \one,\text{ by \cref{eq:number-and-U2-part-2}}} 
			\biggr]
		\end{multlined}
		\label{eq:number-and-U2-part-3}
		\\
		&\leq \nnalt + N \one\otimes\one ,
	\end{align}
	where, in \cref{eq:number-and-U2-part-3}, we have used \cref{eq:ladder-algebra}.
\end{proof}

\begin{prop}\label{thm:worst-case-bound}
    For every \(\psi\in\denseD\), the following estimate holds for any $t>0$:
	\begin{multline}
        \label{eq:worst-case-bound}
        \norm[\big]{ \of[\big]{\mrm{e} ^ {-\iu t H_{\mrm{TC}}} - \mrm{e} ^ {-\iu t H_{\mrm{D}}}}\psi} 
        \leq
        \frac{\lambda}{\omega}\abs*{\sin\of*{\omega t}} \sqrt{N^2 + N} \sqrt{\norm[\big]{\of*{\nnalt + N + 2}} \psi}
        \\
        + 3 \frac{\lambda^2}{\omega} \abs{t} N^2 \sqrt{\norm[\big]{ \of*{\nnalt + N + 4}^2\psi}}
        .
	\end{multline}
\end{prop}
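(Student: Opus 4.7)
The plan is to take \cref{thm:intermediate-bound} and replace every time-dependent norm of the form $\norm{A\,U_2(t)\psi}$ on its right-hand side by a time-independent expression involving only $\psi$, $\nnalt$, and powers of $N$. Once this substitution is made, the integrand in the $f_L$-part becomes independent of $s$, so that $\int_0^t \dd s$ collapses to $\abs{t}$ and produces the second term of \cref{eq:worst-case-bound}. The spin-operator factors and the bosonic-operator factors appearing in $f_C^\pm(\psi_t)$ and $f_L(\psi_t)$ are handled by two separate arguments.

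For the spin-dependent norms---those involving $\salt\pm$, $(\salt\pm)^2$, $\salt\pm\salt\mp$, ${\salt{}}^2-\Sz^2$, and their degree-four products---I would exploit the fact that these operators act trivially on the bosonic factor and are bounded on the $2^N$-dimensional spin space. Combined with unitarity of $U_2(t)$ this gives $\norm{A\,U_2(t)\psi}\leq \opnorm{A}\,\norm{\psi}$ for each such $A$, with operator norms being elementary polynomial expressions in $N$ obtainable by decomposing onto the Dicke sectors of total spin $S\leq N/2$ (e.g.~$\opnorm{\salt\pm}\leq N$). Summing the four contributions in $f_L$ and the two contributions in $f_C^\pm$ then yields the spin prefactors $N^2$ and $\sqrt{N^2+N}$ that appear in \cref{eq:worst-case-bound}.

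For the bosonic factors $\norm{(\nnalt+c)\psi_t}$ and $\norm{(\nnalt+c)^2\psi_t}^{\nicefrac{1}{2}}$, the key ingredient is \cref{thm:number-and-U2}, which provides $U_2\adj(t)\,\nnalt\,U_2(t)\leq\nnalt+N$. More precisely, from its proof one reads off the identity $U_2\adj(t)\,\nnalt\,U_2(t)=\nnalt+\Delta(t)$ with $\Delta(t)=\SZalt-U_2\adj(t)\,\SZalt\,U_2(t)$ self-adjoint and of operator norm at most $N$. Hence $\norm{(\nnalt+c)U_2(t)\psi}=\norm{(\nnalt+c+\Delta(t))\psi}$, and an analogous identity holds for the squared factor. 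Bounding the right-hand sides by the desired $\norm{(\nnalt+N+c)\psi}$- and $\norm{(\nnalt+N+c)^2\psi}^{\nicefrac{1}{2}}$-type expressions removes the time dependence and, combined with the spin bound, closes the proof of \cref{eq:worst-case-bound}.

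The main obstacle will be this last step: the operator inequality $U_2\adj(t)\,\nnalt\,U_2(t)\leq\nnalt+N$ does not automatically pass to squared quantities, because operator monotonicity fails under squaring. One must therefore use the explicit decomposition via $\Delta(t)$ and carefully control the cross-terms in $(\nnalt+c+\Delta(t))^2$, for instance via triangle-inequality estimates exploiting the positivity of $\nnalt+c$ and $\opnorm{\Delta(t)}\leq N$, in order to bound them by the corresponding $(\nnalt+N+c)^2$-type expressions. By contrast, the spin bounds are direct consequences of submultiplicativity of the operator norm on the finite-dimensional spin space and present no real technical difficulty.
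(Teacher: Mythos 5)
Your proposal is correct and follows essentially the same route as the paper's proof: insert into \cref{thm:intermediate-bound} operator-norm bounds for the spin factors and the $H_0$-conservation bound of \cref{thm:number-and-U2} for the bosonic factors, after which the integral collapses to $\abs{t}$. Two details are worth noting: recovering the exact prefactors $\sqrt{N^2+N}$ and $N^2$ requires the sharper values $\opnorm{\s\pm}=N/2$ and $\opnorm{{\spin{}}^2-\Sz^2}\le N^2/4+N/2$ used in the paper rather than the cruder $\opnorm{\s\pm}\le N$, and the squaring issue you flag is genuine---the paper simply asserts $\norm{\of{\nnalt+4}^2U_2(t)\psi}\le\norm{\of{\nnalt+4+N}^2\psi}$ as a consequence of \cref{eq:number-and-U2} without the cross-term analysis you describe, so your more careful treatment (which may slightly alter the constants) addresses a step the paper leaves implicit rather than constituting a gap on your side.
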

\begin{proof}
    We apply \cref{thm:intermediate-bound}.
    For this, we need to find suitable estimates for the expressions in \cref{eq:necessary-estimate-1,eq:necessary-estimate-2,eq:necessary-estimate-3,eq:necessary-estimate-4,eq:necessary-estimate-5,eq:necessary-estimate-6}.
    Due to \cref{eq:number-and-U2} we already have such estimates at hand for the expressions in \cref{eq:necessary-estimate-5,eq:necessary-estimate-6}:
    \begin{align}
        \label{eq:necessary-estimate-5-proven}
        \norm[\big]{\of{\nnalt + 2 } U_2(t) \psi}
        &\leq \norm[\big]{\of{\nnalt + 2 + N }\psi}
        \\
        \label{eq:necessary-estimate-6-proven}
        \norm[\big]{\of{\nnalt + 4}^2 U_2(t) \psi}
        &\leq \norm[\big]{\of{\nnalt + 4 + N}^2 \psi}
        .
    \end{align}
    For the remaining four expressions we shall use the properties \(\opnorm[\big]{\spin\pm} = {N}/{2} \), and \(\opnorm[\big]{{\spin{}}^2 - S_z^2} \leq {N^2}/{4} + {N}/{2}\).
    Thus, for \(\psi \in \denseD\), using the sub-multiplicativity of \(\opnorm{}\), the unitarity of \(U_2(t)\), and the triangle-inequality, we get
    \begin{align}
        \norm[\big]{ \of{\salt\pm}^2 \of{\salt\mp}^2U_2(t)\psi }
        &
        \leq \frac{N^4}{2^4} \norm*{\psi}
        ,
        \\
        \norm[\big]{\of[\big]{\salt\pm\salt\mp}^2U_2(t)\psi}
        &
        \leq \frac{N^4}{2^4} \norm*{\psi}
        ,
        \\
        \norm[\Big]{\of[\big]{\salt\pm}^2 U_2(t)\psi }
        &
        \leq \frac{N^2}{2^2}\norm*{\psi}
        ,
        \\
        \norm[\Big]{\of[\big]{{\salt{}}^2- \boldsymbol{S}_z^2}U_2(t)\psi }
        &
        \leq \of*{\frac{N^2}{4} + \frac{N}{2}}\norm*{\psi}
        .
    \end{align}
   Inserting these estimates together with \cref{eq:necessary-estimate-5-proven,eq:necessary-estimate-6-proven} into \cref{eq:intermediate-bound} gives \cref{eq:worst-case-bound} and thus concludes the proof.
\end{proof}
While simple and applicable to arbitrary states in \(\denseD\), \cref{thm:worst-case-bound} contains no information about the spin component of the initial state beyond the number of spins \(N\).
The only state-dependent quantity entering the bound is the number of field excitations, as the operator norms used in the estimates~\cref{eq:necessary-estimate-1,eq:necessary-estimate-2,eq:necessary-estimate-3,eq:necessary-estimate-4} involve a supremum over all spin states and thus reflect a worst-case scenario. 
To analyse the dependence of the RWA error on the spin component of the initial state, \cref{thm:general-bound} is preferable. However, for generalizing our results to other models with a structure similar to the Dicke model, the worst-case bound in \cref{thm:worst-case-bound} is better suited, as it does not rely on the existence of a complete set of eigenstates.

\section{Analysis of the error bound}\label{sec:analysis}

In this section, we analyse the error bound proven in \cref{ch:general-bound} by further evaluating the individual components of \cref{eq:general-bound} and examining their dependence on the model parameters and the initial state in \cref{ch:specific-bounds}. We then compare these analytical bounds with numerical results in \cref{ch:numerics}, focusing on how the observed behaviour reflects the identified parameter dependencies.

\subsection{Analytic discussion of the properties of the bound}
\label{ch:specific-bounds}
By writing the normalized eigenstates 
\begin{align}
    \tcpsinormed{S}{M}{\sigma} 
    = \frac{\tcpsi{S}{M}{\sigma}}{\norm{\tcpsi{S}{M}{\sigma}}}
    = \sum_{n=M-2S}^{M} \tilde c_{S,M,n}^\sigma \ket{S, M-n-S}\otimes\phi_n
\end{align}
of the Tavis--Cummings Hamiltonian in terms of products \(\ket{S,m}\otimes \phi_n\) of Dicke and Fock states, we can calculate the state dependences in \cref{eq:general-bound} explicitly and determine the dependency of the RWA error on the parameters of the system and initial state.
For this purpose, we need to apply the following estimate:
\begin{align}\label{eq:normalized-coefficients-approximation}
    \abs{\tilde c_{S,M,n} ^\sigma}^2 
    \leq 1
\end{align}
because the coefficients \(\tilde c_{S,M,n}\) depend on the particular solution \(\tcparvec{\sigma}\) of the Bethe equations~\eqref{eq:TC-bethe-equations} in a nontrivial way.

\begin{corollary}\label{thm:scaling-S-M-bound}
    Let \(\tcpsinormed{S}{M}{\sigma} = \tcpsi{S}{M}{\sigma}/\norm{\tcpsi{S}{M}{\sigma}}\) be a normalized eigenstates of the Tavis--Cummings Hamiltonian. Then the difference between the Dicke and Tavis--Cummings dynamics satisfies
    \begin{equation}
        \norm[\big]{ \of[\big]{\mrm{e} ^ {-\iu t H_{\mrm{TC}}} - \mrm{e} ^ {-\iu t H_{\mrm{D}}}}\tcpsinormed{S}{M}{\sigma}}
        \label{eq:scaling-S-M-bound}
        \leq 2\frac{\lambda}{\omega} \abs*{\sin(\omega t)}\of*{S+1}^2 \sqrt{M + 2}
        + 18\frac{\lambda^2}{\omega} \abs{t} \of*{S + 1}^{\nicefrac{7}{2}} \of*{M + 2}
        .
    \end{equation}
\end{corollary}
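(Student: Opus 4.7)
The plan is to specialize \cref{thm:general-bound} to the normalized eigenstate \(\tcpsinormed{S}{M}{\sigma}\) and to evaluate the two functionals \(f_C^\pm\) and \(f_L\) explicitly on it. The starting point is the expansion \(\tcpsinormed{S}{M}{\sigma} = \sum_{n=M-2S}^{M} \tilde c_{S,M,n}^\sigma\, \ket{S, M-n-S}\otimes\phi_n\) stated just before the corollary, which exhibits \(\tcpsinormed{S}{M}{\sigma}\) as a linear combination of at most \(K_{S,M} = \min\{2S,M\}+1 \leq 2(S+1)\) product basis vectors \(\ket{S,m}\otimes\phi_n\). Each of these is simultaneously an eigenvector of the diagonal operators \({\spinalt{}}^2\), \(\SZalt\), and \(\nnalt\), which is the structural feature that makes every required norm tractable to compute.

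Using the ladder-algebra identities~\cref{eq:ladder-algebra} together with \cref{eq:creation-on-number-state,eq:annihilation-on-number-state,eq:number-operator-def}, I would compute the action of every operator appearing inside \(f_C^\pm\) and \(f_L\) on such a product basis vector. The operator \({\spinalt{}}^2 - \SZalt^2\) acts diagonally with eigenvalue \(S(S+1)-m^2\); the composite ladder operators \((\salt\pm)^2\), \(\salt+\salt-\), \(\salt-\salt+\), \((\salt\pm)^2(\salt\mp)^2\) and \((\salt\pm\salt\mp)^2\) act as shifts of \(m\) weighted by products of factors \(\sqrt{S(S+1)-m_i(m_i\pm 1)}\), each bounded by \(S(S+1)\leq(S+1)^2\); and the bosonic operators \(\nnalt+c\) and \((\nnalt+c)^2\) act diagonally with eigenvalues \(n+c\leq M+c\) and \((n+c)^2\leq (M+c)^2\). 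The bounds \(|m|\leq S\) and \(n\leq M\) turn every such expression into a polynomial in \(S+1\) and \(M+c\) of the appropriate degree.

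The sums over \(n\) are then controlled via the estimate \(|\tilde c_{S,M,n}^\sigma|^2\leq 1\) stated in~\cref{eq:normalized-coefficients-approximation}, which side-steps any need to evaluate the Bethe-ansatz coefficients themselves. Each norm \(\norm{A\,\tcpsinormed{S}{M}{\sigma}}\) is thereby bounded by the sum of at most \(K_{S,M}\) identical maxima, producing an extra combinatorial factor \(\sqrt{K_{S,M}}\leq\sqrt{2(S+1)}\) in each Pythagorean sum (or, equivalently, an extra factor \(K_{S,M}\leq 2(S+1)\) via a triangle-inequality-first argument). Inserting these per-norm bounds into the explicit expressions for \(f_C^\pm\) and \(f_L\) and collecting leading powers in \(S+1\) and \(M\) produces precisely the scalings \((S+1)^2\sqrt{M+2}\) and \((S+1)^{7/2}(M+2)\) that appear in~\cref{eq:scaling-S-M-bound}.

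The main obstacle is bookkeeping: since both \(f_C^\pm\) and \(f_L\) mix square roots and products of the per-norm estimates, the exponents of \(S+1\) coming from the per-operator bounds and from the combinatorial factors \(\sqrt{K_{S,M}}\) (or \(K_{S,M}\)) must be tracked carefully in order to collapse to the clean exponents \(2\) and \(7/2\) in the final statement. Likewise, reproducing the numerical constants \(2\) and \(18\) requires some care with the four spin contributions inside \(f_L\) and the prefactor \(3\) already present in \cref{thm:general-bound}. Apart from this bookkeeping, the entire reduction is mechanical and adds no conceptually new ingredient beyond the decomposition and the estimate \(|\tilde c_{S,M,n}^\sigma|^2\leq 1\).
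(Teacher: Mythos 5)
Your proposal follows essentially the same route as the paper: specialize \cref{thm:general-bound} to \(\tcpsinormed{S}{M}{\sigma}\), expand it in the orthogonal Dicke--Fock product basis, discard the Bethe coefficients via \(\abs{\tilde c_{S,M,n}^\sigma}^2\leq 1\), and track the powers of \(S\) and \(M\) produced by the ladder and number operators. The only real divergence is that the paper evaluates each sum \(\sum_n(\cdot)^2\) in closed form as an explicit polynomial in \(S\) and \(M\) and only then certifies the constants \(a=2,\dots,f=2\) (hence the \(2\) and \(18=3\cdot 6\)), whereas your ``maximum times \(K_{S,M}\)'' shortcut yields the right exponents but leaves precisely that constant verification open---and a naive implementation of it can overshoot the stated prefactors at small \(S\) and \(M\), so that final bookkeeping step is not as mechanical as you suggest.
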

The proof of this result is in the appendix at page~\pageref{proof:scaling-S-M-bound}. Note that the obtained bound does not directly depend on the number of spins \(N\).

As the set of all eigenstates \(\tcpsi{S}{M}{\sigma}\) of the Tavis--Cummings Hamiltonian is a complete orthogonal set for \(\HH\)~\cite{Bogoliubov_Bullough_Timonen_1996,bogoliubovTimeEvolutionAtomic2017} we can in particular write any state \(\psi \in\denseD\) as a linear combination of eigenstates \(\tcpsi{S}{M}{\sigma}\).
This means that the same scaling with the total angular momentum and the total number of excitations found in \cref{thm:scaling-S-M-bound} also applies to more general states in the finite photon space \(\denseD\), as long as
they can be represented by a \emph{finite} liner combination of the eigenstates.
\begin{corollary}\label{thm:scaling-S-M-bound-extended}
	Let \(\psi \in \denseD\) be a normalized state in the form
    \begin{align}
		\psi = 
        \sum_{S\primed \in \mathcal{M}_S} \sum_{M\primed\in \mathcal{M}_M}\sum_{\sigma\primed \in \mathcal{M}_\sigma} \alpha_{S\primed,M\primed,\sigma} \tcpsinormed{S\primed}{M\primed}{\sigma}
	\end{align}
	for \(\alpha_{S\primed,M\primed,\sigma\primed} \in \C\setminus\cof{0}\) and with \(S = \max \mathcal{M}_S\), \(M = \max \mathcal{M}_M\).
    Then
    \begin{align}
        \norm[\big]{ \of[\big]{\mrm{e} ^ {-\iu t H_{\mrm{TC}}} - \mrm{e} ^ {-\iu t H_{\mrm{D}}}}\psi}
        \!\begin{multlined}[t]
        \leq \abs*{\mathcal{M}_S}\abs*{\mathcal{M}_M}\abs*{\mathcal{M}_\sigma} 
        \left(\vphantom{\frac{\lambda^2}{\omega}}2\frac{\lambda}{\omega} \abs*{\sin(\omega t)}\of*{S+1}^2 \sqrt{M + 2}\right.
        \\
        \left. + 18\frac{\lambda^2}{\omega} \abs{t} \of*{S + 1}^{\nicefrac{7}{2}} \of*{M + 2}\right)
        ,
        \end{multlined}
    \end{align}
    where \(\abs*{\mathcal{M}_\#}\) denotes the number of elements in the set \(\mathcal{M}_\#\).
\end{corollary}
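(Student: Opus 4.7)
The plan is to reduce the claim to \Cref{thm:scaling-S-M-bound} by expanding $\psi$ in the eigenbasis of \(H_{\mrm{TC}}\) and controlling the coefficients. First I would apply the triangle inequality to the finite expansion
\begin{equation}
    \of[\big]{\mrm{e}^{-\iu t H_{\mrm{TC}}} - \mrm{e}^{-\iu t H_{\mrm{D}}}}\psi
    = \sum_{S\primed\in\mathcal{M}_S}\sum_{M\primed\in\mathcal{M}_M}\sum_{\sigma\primed\in\mathcal{M}_\sigma} \alpha_{S\primed,M\primed,\sigma\primed}\of[\big]{\mrm{e}^{-\iu t H_{\mrm{TC}}} - \mrm{e}^{-\iu t H_{\mrm{D}}}}\tcpsinormed{S\primed}{M\primed}{\sigma\primed},
\end{equation}
so that it suffices to bound each individual term by the bound provided by \Cref{thm:scaling-S-M-bound}, multiplied by $\abs{\alpha_{S\primed,M\primed,\sigma\primed}}$.

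Next, I would note that the family of normalised eigenstates $\tcpsinormed{S\primed}{M\primed}{\sigma\primed}$ is pairwise orthonormal: distinct values of $S\primed$ and of $M\primed$ give orthogonality via the commuting self-adjoint operators ${\spinalt{}}^2$ and $\M$, while distinct values of $\sigma\primed$ for fixed $(S\primed, M\primed)$ correspond to different eigenenergies of $H_{\mrm{TC}}$, which is self-adjoint. Consequently, Parseval's identity applied to the normalised $\psi$ yields $\sum_{S\primed,M\primed,\sigma\primed}\abs{\alpha_{S\primed,M\primed,\sigma\primed}}^2 = 1$, and in particular $\abs{\alpha_{S\primed,M\primed,\sigma\primed}} \leq 1$ for every index appearing in the sum.

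The remaining step is to apply \Cref{thm:scaling-S-M-bound} to each term and exploit monotonicity. The right-hand side of \cref{eq:scaling-S-M-bound} is a sum of nonnegative quantities that are strictly increasing in $S\primed$ and in $M\primed$, so I would uniformly replace $S\primed$ by $S = \max\mathcal{M}_S$ and $M\primed$ by $M = \max\mathcal{M}_M$. Combining with $\abs{\alpha_{S\primed,M\primed,\sigma\primed}}\leq 1$, every term is bounded by the same expression, and the total number of terms in the triple sum is at most $\abs{\mathcal{M}_S}\abs{\mathcal{M}_M}\abs{\mathcal{M}_\sigma}$, which yields the asserted inequality.

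The only nontrivial point is the orthonormality argument needed to guarantee $\abs{\alpha_{S\primed,M\primed,\sigma\primed}} \leq 1$; everything else is a direct application of \Cref{thm:scaling-S-M-bound} together with monotonicity and counting. Since the bound is state-independent beyond the maximal values $(S, M)$, no finer information about the coefficients is needed.
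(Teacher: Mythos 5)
Your proposal is correct and follows essentially the same route as the paper, whose proof is a one-line appeal to the triangle inequality, \cref{thm:scaling-S-M-bound}, and the monotonicity estimates \(S\primed\leq S\), \(M\primed\leq M\). You are in fact slightly more explicit than the paper in justifying \(\abs{\alpha_{S\primed,M\primed,\sigma\primed}}\leq 1\) via the orthogonality of the Tavis--Cummings eigenstates (which the paper takes from the cited literature rather than from a distinct-eigenvalue argument), but this does not change the substance of the argument.
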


\begin{proof}
	This is an immediate consequence of \cref{thm:scaling-S-M-bound}, the triangle inequality, and the estimates \(S\primed \leq S\) and \(M\primed \leq M\).
\end{proof}

This result applies, for instance, to Dicke states: we have
\begin{align}
	\ket{S,m} \otimes \phi_n
	= \sum_{\sigma=1}^{K} \tilde c_{S,S+n+m,n}^\sigma \tcpsinormed{S}{n+S+m}{\sigma}
	,\quad K = {\min\of*{2S,S+n+m}+1}
    ,
\end{align}
whence, by \cref{thm:scaling-S-M-bound-extended},
\begin{multline}\label{eq:scaling-S-M-bound-dicke-fock}
	\norm[\big]{ \of[\big]{\mrm{e} ^ {-\iu t H_{\mrm{TC}}} - \mrm{e} ^ {-\iu t H_{\mrm{D}}}}\ket{S,m}\otimes\phi_n}
        \leq K\left(\vphantom{\frac{\lambda^2}{\omega}}2\frac{\lambda}{\omega} \abs*{\sin(\omega t)}\of*{S+1}^2 \sqrt{S+n+m + 2}\right.
        \\
        \left. + 18\frac{\lambda^2}{\omega} \abs{t} \of*{S + 1}^{\nicefrac{7}{2}} \of*{S+n+m + 2}\right)
\end{multline}
which underpins the dependence of the bound on the magnetic number \(m\) for such states. 

Let us comment on \cref{thm:scaling-S-M-bound,thm:scaling-S-M-bound-extended}, respectively.
In contrast to the worst--case bound presented in \cref{thm:worst-case-bound}, here we see that, instead of the mere number of photons, the \emph{total} number of excitations, \(M\), which includes the excitations in the spin-component of the initial state, plays a central role for the validity of the RWA.
The bound~\eqref{eq:scaling-S-M-bound} indicates that the RWA becomes less accurate with a rising total number of excitations---regardless of these excitations originating in the field or in the spin component of the system.
In addition, in \cref{thm:scaling-S-M-bound} the role of the number of spins \(N\) is being replaced by the total angular momentum \(S \leq N/2\): the quality of the RWA depends crucially on the total angular momentum in the initial state.

This has important consequences. Even when a large number of spins collectively interact with the field, the initial state may have low total angular momentum \(S\), in which case the rotating-wave approximation (RWA) remains valid—as long as the other parameters, \(\omega\), \(\lambda\), and \(M\), lie in appropriate regimes. The only role of \(N\) in this context is to set an upper bound on the possible values of \(S\). As a result, the RWA can still be considered valid in the joint limit of high frequency \(\omega \to \infty\) and large particle number \(N \to \infty\), the latter corresponding to the thermodynamic limit:

\begin{corollary}
    \label{thm:thermodynamic-limit-convergence}
    Let \(M \in N\) be fixed and \((S_i)_{i\in \N}\subset \N/2\), \((N_i)_{i \in \N}\subset \N\), \((\lambda_i)_{i \in \N}\subset \R\), and \((\omega_i)_{i \in \N}\subset \R\) be sequences with \(N_i/2 \geq S_i\) for every \(i\in \N\).
    
    If the sequences \(S_i\), \(\lambda_i\) and \(\omega_i\) satisfy
    \begin{equation}
        \label{eq:thermodynamic-limit-convergence-condition}
        {S_i^{\nicefrac{7}{2}}} \lambda_i^2 / {\omega_i} \to 0 
    \end{equation}
    we have
    \begin{equation}\label{eq:thermodynamic-limit-convergence}
        \norm[\big]{ \of[\big]{\mrm{e} ^ {-\iu t H_{\mrm{TC}}(N_i, \omega_i)} - \mrm{e} ^ {-\iu t H_{\mrm{D}}(N_i, \omega_i)}}\tcpsinormed{S_i}{M}{\sigma}(N_i)} \to 0
    \end{equation}
    for \(i \to \infty\), where we made the dependences of \(H_{\mrm{D}}\), \(H_{\mrm{TC}}\), and \(\tcpsi{S}{M}{\sigma}\) on \(N\) and \(\omega\) explicit. 
\end{corollary}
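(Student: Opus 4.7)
The plan is to apply \Cref{thm:scaling-S-M-bound} termwise along the sequence indexed by $i$ and show that the resulting upper bound vanishes as $i\to\infty$. After substituting the $i$-indexed parameters $(S_i,N_i,\lambda_i,\omega_i)$ and the fixed total-excitation number $M$ into \cref{eq:scaling-S-M-bound}, the norm in \cref{eq:thermodynamic-limit-convergence} is bounded, for each $i$, by
\begin{equation}
2\frac{\lambda_i}{\omega_i}\abs*{\sin(\omega_i t)}(S_i+1)^2\sqrt{M+2} + 18\frac{\lambda_i^2}{\omega_i}\abs{t}(S_i+1)^{7/2}(M+2),
\end{equation}
in which the $M$-dependent factors are $i$-independent constants since $M$ is held fixed throughout.

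It then remains to verify that both summands converge to zero. The linearly-growing-in-$t$ second summand is exactly the target of the hypothesis: using the elementary estimate $(S_i+1)^{7/2}\leq 2^{7/2}\max(1,S_i^{7/2})$, the assumption $S_i^{7/2}\lambda_i^2/\omega_i\to 0$ forces this summand to vanish uniformly on compact sets of $t$. For the bounded-in-$t$ first summand, which is not directly governed by the hypothesis, I would exploit the factorization
\begin{equation}
\bigg(\frac{\lambda_i(S_i+1)^2}{\omega_i}\bigg)^{\!2} = \frac{\lambda_i^2(S_i+1)^{7/2}}{\omega_i}\cdot\frac{(S_i+1)^{1/2}}{\omega_i},
\end{equation}
so that vanishing follows as soon as the second factor stays asymptotically bounded---which is automatic in the intended thermodynamic-limit regime where $\omega_i$ is bounded below and $S_i$ does not outgrow $\omega_i^2$. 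Triangle inequality on the two summands then concludes.

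All the operator-theoretic heavy lifting having already been done in \Cref{thm:scaling-S-M-bound}, no substantial obstacle remains; this corollary is essentially a limit statement reading off the $i\to\infty$ behaviour of an already-quantitative bound. The only point that requires a sliver of care is precisely the one above: the stated hypothesis is calibrated to the long-time-dominant second summand, and one must verify via the factorization that the oscillating first summand is controlled in the intended joint high-frequency/large-$N$ regime.
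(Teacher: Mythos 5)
Your route is the same as the paper's: the entire published proof of this corollary reads ``This is a direct consequence of \cref{thm:scaling-S-M-bound}'', so substituting the $i$-indexed parameters into \cref{eq:scaling-S-M-bound} and passing to the limit is exactly the intended argument, and your treatment of the second summand (the one proportional to $\lambda_i^2 S_i^{7/2}/\omega_i$) is correct. The value of your write-up is that you have put your finger on a real problem that the one-line proof in the paper glosses over: the hypothesis \eqref{eq:thermodynamic-limit-convergence-condition} controls only the second summand of \cref{eq:scaling-S-M-bound}. The first summand is governed by $\lambda_i (S_i+1)^2/\omega_i$, and your factorization $\bigl(\lambda_i(S_i+1)^2/\omega_i\bigr)^2 = \bigl(\lambda_i^2(S_i+1)^{7/2}/\omega_i\bigr)\cdot\bigl((S_i+1)^{1/2}/\omega_i\bigr)$ shows that it vanishes only if the \emph{additional} condition that $(S_i+1)^{1/2}/\omega_i$ stays bounded (equivalently $S_i = \mathcal{O}(\omega_i^2)$) is imposed. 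That condition is nowhere in the statement of the corollary. Concretely, taking $\omega_i = S_i^{1/4}$ and $\lambda_i = S_i^{-7/4}$ with $S_i\to\infty$ satisfies \eqref{eq:thermodynamic-limit-convergence-condition} (the quantity equals $S_i^{-1/4}$), yet the first summand of the bound equals $2\abs{\sin(S_i^{1/4}t)}\sqrt{M+2}$, which does not tend to zero for generic $t$. So neither your argument nor the paper's proves the corollary \emph{as stated}; what is actually proven is the statement with the supplementary hypothesis $\lambda_i S_i^2/\omega_i \to 0$ (or, using $\abs{\sin(\omega_i t)}\le\omega_i\abs{t}$, alternatively $\lambda_i S_i^2\to 0$).

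Where your proposal falls short of being a fix is in declaring the extra condition ``automatic in the intended thermodynamic-limit regime'': the corollary makes no assumption that $\omega_i$ is bounded below or that $S_i$ does not outgrow $\omega_i^2$, so this is an appeal to intent rather than to the hypotheses. Note that the paper itself implicitly concedes the point: in the rescaled discussion immediately following the corollary, condition \eqref{eq:thermodynamic-limit-convergence-rescaled} consists of \emph{two} requirements, $N_i^{5/2}\lambda_i^2/\omega_i\to 0$ and $N_i^{3/2}\lambda_i/\omega_i\to 0$, precisely one per summand of \cref{eq:scaling-S-M-bound}. The clean resolution is to add the analogous second requirement, $S_i^2\lambda_i/\omega_i\to 0$, to \eqref{eq:thermodynamic-limit-convergence-condition}; with that amendment your argument (and the paper's) closes without further work.
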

\begin{proof}
    This is a direct consequence of \cref{thm:scaling-S-M-bound}.
\end{proof}

Importantly, we have the convergence in \cref{eq:thermodynamic-limit-convergence} if \(S_i = S\) is fixed and \(N_i \to \infty\) as well as \(\omega_i\to \infty\), i.~e. for initial states with fixed total angular momentum the RWA is valid also within the thermodynamic limit \(N \to \infty\). 
On the other hand, if \(S_i = N_i/2\) is maximal for each \(N_i\) and we rescale the coupling strength \(\lambda_i \mapsto \lambda_i/\sqrt{N_i}\), as common in the study of the thermodynamic limit~\cite{Hepp_Lieb_1973}, instead of \cref{eq:thermodynamic-limit-convergence-condition} we have the following condition for convergence:
\begin{equation}
    \label{eq:thermodynamic-limit-convergence-rescaled}
    {N_i^{\nicefrac{5}{2}}} \lambda_i^2 / {\omega_i} \to 0 
    \quad \text{and}\quad
    {N_i^{\nicefrac{3}{2}}} \lambda_i / {\omega_i} \to 0
    .
\end{equation}
This imposes a condition on the relation of the rates at which \(N\) and \(\omega\) jointly approach infinity that is sufficient for the validity of the RWA within the thermodynamic limit.

\subsection{Comparison with numerical results}\label{ch:numerics}

\begin{figure}[t]
    \centering
    \begin{subfigure}[][][t]{0.49\linewidth}
        \caption{}
        \centering
        \includegraphics[width=\linewidth]{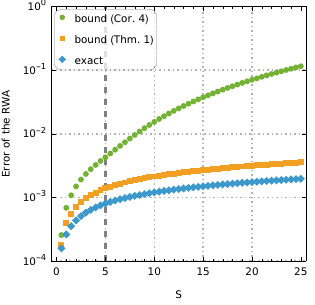}
        \label{fig:S-dependence}
    \end{subfigure}
    \hfill
    \begin{subfigure}[][][t]{0.49\linewidth}
        \caption{}
        \centering
        \includegraphics[width=\linewidth]{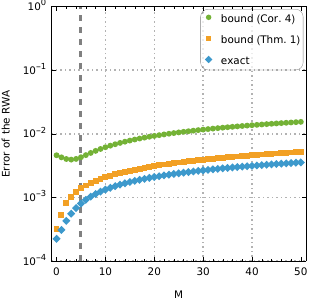}
        \label{fig:M-dependence}
    \end{subfigure}
    \caption{
    Dependence of the RWA error \(\norm*{\of*{\e^{-\iu t H_{\mathrm{D}}} - \e^{-\iu t H_{\mathrm{TC}}}}\tcpsi{S}{M}{\sigma}}\) on (a) the total angular momentum quantum number \(S\), and (b) the total number of excitations \(M\).
    Plotted are the exact numerical error (\cref{eq:exact-error-numerical}, blue rhombuses), along with the corresponding values of the error bounds from \cref{thm:general-bound} (orange squares) and \cref{thm:scaling-S-M-bound} (green circles), using a logarithmic scale on the vertical axis.
    The dashed vertical line marks the reference value \(M = 5\) or \(S = 5\) used in the complementary panel.
    Other parameters are \(\omega = 3000\), \(\lambda = 0.3\), and \(t = \frac{\pi}{4\omega}\).
    }
    \label{fig:S-M-dependence}
\end{figure}

\begin{figure}[t]
    \centering
    \begin{subfigure}[][][t]{0.49\linewidth}
        \caption{}
        \centering
        \includegraphics[width=\linewidth]{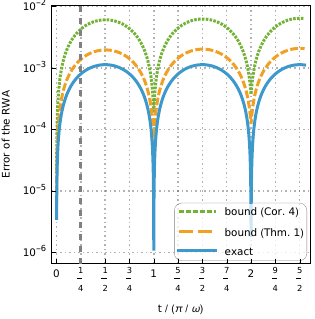}
        \label{fig:t-dependence}
    \end{subfigure}
    \hfill
    \begin{subfigure}[][][t]{0.49\linewidth}
        \caption{}
        \centering
        \includegraphics[width=\linewidth]{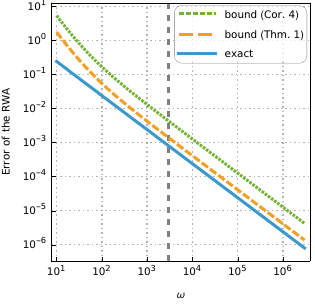}
        \label{fig:freq-dependence}
    \end{subfigure}
    \caption{
    Dependence of the RWA error \(\norm*{\of*{\e^{-\iu t H_{\mathrm{D}}} - \e^{-\iu t H_{\mathrm{TC}}}}\tcpsi{S}{M}{\sigma}}\) on (a) the evolution time \(t\) (in units of \(\pi/\omega\)) and (b) the field frequency \(\omega\).
    Plotted are the exact error (\cref{eq:exact-error-numerical}, solid blue), the error bound from \cref{thm:general-bound} (dashed orange), and the analytical bound from \cref{thm:scaling-S-M-bound} (dotted green). 
    The vertical axis uses a logarithmic scale in both panels; in panel~(b), the horizontal axis is logarithmic as well.
    A dashed vertical line marks the value \(t = \tfrac{1}{4} \cdot (\pi/\omega)\) in (a) and \(\omega = 3000\) in (b), consistent with other plots.
    Parameters: \(\lambda = 0.3\), \(S = 5\), \(M = 5\).
    }
    \label{fig:t-w-dependence}
\end{figure}

To assess whether the aforementioned dependencies on the total angular momentum quantum number \(S\) and the total number of excitations \(M\) are also reflected in the actual dynamics, we performed numerical simulations using \emph{Mathematica}~\cite{Mathematica}.

We begin by outlining our numerical procedure. Starting from an initial guess \(\tcparvec{}_{0} = (a + \iu b, a + 2 \iu b, \dots, a + M \iu b)\), with \(a = \sqrt{2S}/2\) and \(b = \sqrt{2S}/M\), we solve the Bethe equations~\eqref{eq:TC-bethe-equations} for the solution \(\tcparvec{\sigma}\) closest to \(\tcparvec{}_{0}\). This particular choice of initial point was motivated by its effectiveness in previous work by Bogoliubov et al.~\cite{bogoliubovTimeEvolutionAtomic2017}. 
Once a solution \(\tcparvec{\sigma}\) is obtained, we construct a finite-dimensional approximation \(\tilde\psi_{S,M,n_{\mathrm{max}}}^{\sigma}\) of the corresponding normalized eigenvector \(\tcpsinormed{S}{M}{\sigma}\), using \cref{eq:TC-eigenstates} and restricting to the subspace spanned by \(\ket{S,m} \otimes \phi_n\) for \(\abs{m} \leq S\) and \(n \leq n_{\mathrm{max}} = 2(M + 2S)\). The state is normalized via \cref{eq:tc-eigenstates-norm,eq:tc-eigenstates-coefficients}, using the specific Bethe solution \(\tcparvec{\sigma}\).

Using the same basis, we construct finite-dimensional approximations \(H_{\mrm{D},n_{\mathrm{max}}}\) and \(H_{\mrm{TC},n_{\mathrm{max}}}\) of the Dicke and Tavis--Cummings Hamiltonians, respectively. We then compute the corresponding matrix exponentials and evaluate the norm difference
\begin{align}
    \label{eq:exact-error-numerical}
    \norm[\big]{ \of[\big]{\mrm{e} ^ {-\iu t H_{\mrm{TC},n_{\mathrm{max}}}} - \mrm{e} ^ {-\iu t H_{\mrm{D},n_{\mathrm{max}}}}}\tilde\psi_{S,M,n_{\mathrm{max}}}^{\sigma}}
    .
\end{align}

We compare the numerical results with the error bound stated in \cref{thm:general-bound}, evaluated in two different ways.  
The first approach is numerical: we compute the right-hand side of \cref{eq:general-bound} using the same finite-dimensional approximation \(\tilde\psi_{S,M,n_{\mathrm{max}}}^{\sigma}\) employed in the simulation of the exact dynamics. 
The second approach follows the ideas developed in \cref{thm:scaling-S-M-bound}. Specifically, it involves estimating the bound in \cref{eq:general-bound} using the approximation for the normalized coefficients provided in \cref{eq:normalized-coefficients-approximation}, without relying on the explicit form of the eigenstate \(\tcpsi{S}{M}{\sigma}\). Note that we do not use the final expression in \cref{eq:scaling-S-M-bound} for the comparison, as it involves further simplifications intended for analytical transparency rather than numerical tightness. Instead, the version we compare is the intermediate expression obtained by inserting \cref{eq:f-C-explicit-eval,eq:f-L-explicit-eval} into the general bound from \cref{thm:general-bound}. In summary, the first evaluation method yields a tighter bound, while the second provides a fully analytic estimate.

\Cref{fig:S-M-dependence} displays the numerical results for varying values of \(S\) and \(M\), alongside the analytic error bounds from \cref{thm:general-bound} and \cref{thm:scaling-S-M-bound}.  
The bound provided by \cref{thm:general-bound} consistently overestimates the exact error by an approximately constant factor.  
A separate analysis suggests that this factor converges to a value in the range \([1.4, 1.9]\) as either \(S\) or \(M\) increases.  
Both the \(S\)-dependence and \(M\)-dependence of the error are accurately reflected in the bound of \cref{thm:general-bound}.

When estimating the bound analytically rather than through direct numerical evaluation—i.e., when using the bound from \cref{thm:scaling-S-M-bound}—the deviation from the exact error increases due to the additional approximation introduced in \cref{eq:normalized-coefficients-approximation}.  
In this case, the dependence on \(M\) is still well captured up to an approximately constant factor, which appears to approach a value around \(4.3\).  
However, the relative deviation grows gradually with increasing \(S\), indicating a reduced accuracy of the bound in that regime.  
Nevertheless, in both the analytical and numerical evaluations, the overall qualitative behaviour—namely, the growth of the RWA error with increasing \(S\) and \(M\)—is consistently reflected.  
Unlike \cref{thm:scaling-S-M-bound}, the direct evaluation of the bound in \cref{thm:general-bound}, as well as the exact numerical error, demonstrate that the dependence of the RWA error on \(S\) is of comparable magnitude to its dependence on \(M\).

\Cref{fig:t-w-dependence} presents a similar comparison, this time focusing on the dependence of the RWA error on the evolution time \(t\) and the field frequency \(\omega\), while keeping \(S\) and \(M\) fixed.  
The exact error displays characteristic oscillations, with local minima occurring at times \(t \approx n\pi/\omega\) and local maxima near \(t \approx n\pi/(2\omega)\) for \(n \in \N\).  
Both error bounds—the one from \cref{thm:general-bound} and the analytical version from \cref{thm:scaling-S-M-bound}—accurately reflect this oscillatory behaviour.  
While the bound from \cref{thm:scaling-S-M-bound} is looser due to additional approximations, it still captures the correct functional dependence on time.  
The same applies to the dependence on \(\omega\): both bounds reproduce the correct convergence behaviour as \(\omega \to \infty\), albeit with an offset given by a constant multiplicative factor relative to the exact error.

\section{Conclusion and outlooks}\label{ch:conclusion}

In this paper, we provided a rigorous proof to the validity of the rotating-wave approximation for the Dicke model with $N\geq1$ spins, generalizing previous results for the specific case $N=1$. We have shown that, similarly as in the single-spin case, the RWA converges in the \emph{strong dynamical sense}---the difference between the evolutions of any state generated by the Dicke model and its rotating-wave approximation (namely, the Tavis--Cummings model) converges to zero for any time $t$ as the field frequency $\omega$ goes to infinity. To the best of our knowledge, this serves as the first rigorous proof of the validity of RWA for the Dicke model. Besides, the convergence of the RWA in the thermodynamic limit $N\to\infty$ was also proven for a specific class of states. 

Furthermore, our analysis is fundamentally \emph{quantitative}: for sufficiently well-behaved initial states—specifically, those whose field component contains a finite number \(n\) of photons—we explicitly compute the norm difference between the exact and approximate evolutions. This enables a detailed investigation of how the RWA error depends on both the model parameters—most notably, the number \(N\) of spins—and the properties of the initial state. A key observation is that the error depends not on \(N\) itself, but only on the total angular momentum \(S \leq N/2\) encoded in the initial state. This is especially relevant in regimes where direct numerical simulations become infeasible, such as for large \(N\). In addition, our methods capture essential features of the \emph{time dependence} of the RWA error, which turns out to be non-monotonic.

Also, the fundamental \emph{state-dependent} nature of the bound and thus of the convergence rate of the RWA---a phenomenon typical of infinite-dimensional Hilbert spaces---here acquires a particular importance: different states with different physical properties can behave in a starkly different way in relation to the validity of the RWA.
Again, a good example is the case of large $N$: the rate at which the ratio of \(\omega\) and \(N\) needs to approach $0$ to have convergence of the RWA depends on the specific choice of the initial state.

On the technical side, the error bounds are derived using a version of the integration-by-parts method introduced in~\cite{burgarthOneBoundRule2022}, adapted here to unbounded operators—similarly to~\cite{burgarthTamingRotatingWave2024}, which treated the special case \(N = 1\). Unlike that work, which relied on an explicit computation of the Jaynes--Cummings dynamics—an approach that becomes intractable for \(N > 1\) or for more general models—we adopt a new strategy that avoids such calculations. Instead, we exploit the existence of a complete set of eigenstates of the Tavis--Cummings Hamiltonian. 
While these eigenstates are known in terms of solutions to ordinary equations, our method does not rely on their explicit form. 
This opens the door to similar treatments in more abstract settings, leveraging the additional symmetry introduced by the RWA. 
In particular, the worst-case bound discussed in \cref{ch:worst-case-bound} can be readily extended to more complex models that share structural similarities with the Dicke model. Indeed, this bound does not rely on the existence of a complete set of eigenstates, making it amenable to more abstract and general arguments.
Crucially, our abstract results only require that the initial state lies in a suitable invariant domain: the space of states with a photon component containing a finite (though arbitrarily large) number of excitations. 
This choice was key to overcoming the technical challenges posed by the multi-spin case and, we believe, provides a natural foundation for proving the validity of the RWA—and other approximations—in more complex models of matter–field interaction relevant to applications.

In this regard, a natural research direction would be analysing the behaviour of the RWA in the case of non-monochromatic---possibly even continuous---boson fields. 
In such a scenario, one should compare the evolution groups generated by the multimode generalizations of the Rabi model \cite{fannes1988equilibrium,amann1991ground,hirokawa1999expression,arai1990asymptotic,hubner1995spectral,bach2017existence,reker2020existence,dam2020asymptotics,hasler2021existence}) and the Jaynes--Cummings \cite{hirokawa2001remarks,jakvsic2006mathematical,lonigroGeneralizedSpinbosonModels2022} or Tavis--Cummings one \cite{,lonigroSelfAdjointnessClassMultiSpinBoson2023}. 
Such cases, however, are expected to exhibit novel complications either due to the multimode nature of the boson field---which makes the very identification of the correct rotating frame nontrivial \emph{a priori}---or to the possible presence of infrared or ultraviolet divergences \cite{bach2017existence,dam2020asymptotics,lonigroSelfAdjointnessClassMultiSpinBoson2023} that could cause a breakdown of the RWA.    
Future research will be devoted to such generalizations.

A comparison between the error bounds and the exact numerical error shows that the qualitative behaviour of the bounds closely mirrors that of the actual error. In particular, the dependence on the total number of excitations in the initial state, the evolution time, and the field frequency is well captured quantitatively. Based on this agreement, we conclude that even for large values of \(S\) and \(M\)—where direct numerical simulation is impractical—the RWA error is expected to grow with \(S\) and \(M\) in a similar manner.

Applying more sophisticated numerical methods could strengthen some of the heuristic conclusions made towards the end of \cref{ch:specific-bounds}.
Analytically finding lower bounds on the error of the RWA for the Dicke model, similarly to what was obtained in ref.~\cite{burgarthTamingRotatingWave2024}, could potentially give a rigorous argument for the bounds mirroring the actual dynamics. 
Moreover, relating the spin operators appearing in our bound to a suitable entanglement measure could provide an even deeper understanding of the connection between the validity of the RWA and the initial spin-state.

\section*{Acknowledgments}

The authors thank Robin Hillier for fruitful discussions.
D.L. acknowledges financial support by Friedrich-Alexander-Universit\"at Erlangen-N\"urnberg through the funding program ``Emerging Talent Initiative'' (ETI), and was partially supported by the project TEC-2024/COM-84 QUITEMAD-CM.

\appendix 
\setcounter{equation}{0}
\renewcommand{\theequation}{A.\arabic{equation}}
\section*{Appendix}
In this appendix, we provide the proofs that were omitted from the main text.

\begin{proof}[{Proof of \cref{thm:intermediate-bound}}]\label{proof:intermediate-bound}
	Let \(\psi\in\denseD\) and \(\psi_t = U_2(t)\psi\) for some \(t\in\R\).
	By the triangle inequality and \cref{thm:partial-integration}, we have
	\begin{equation}\label{eq:dicke-tc-general-bound-proof1}
		\norm{\of{U_2(t) - U_1(t)}\psi} 
        \leq \norm{S_{21}(t)\psi_t} + \int_0^t \norm{\of{S_{21}(s)H_2 - H_1(s) S_{21}(s)}\psi_t} \dd s.
	\end{equation}
	We will bound both terms of \cref{eq:dicke-tc-general-bound-proof1} separately.
	For the first term of \cref{eq:dicke-tc-general-bound-proof1}, we use \cref{eq:integrated-action} for the integrated action \(S_{21}(t)\), and distribute the addends in the inner product:
	\begin{align}
		\norm{S_{21}(t)\psi_t}^2 
		&= \innerp{S_{21}(t)\psi_t}{S_{21}(s)\psi_t}
		\\
		&=\begin{aligned}[t]
			\frac{\lambda^2}{\omega^2}\sin^2(\omega t) 
			\biggl[
			&\innerp[\Big]{ \mrm{e} ^ {+\iu\omega t} (\spin+\otimes\ad) \psi_t }{ \mrm{e} ^ {+\iu\omega t}(\spin+\otimes\ad) \psi_t }
			\\
 			&+\innerp[\Big]{ \mrm{e} ^ {-\iu\omega t} (\spin-\otimes\an) \psi_t }{ \mrm{e} ^ {-\iu\omega t}(\spin-\otimes\an) \psi_t }
			\\
			&+\innerp[\Big]{ \mrm{e} ^ {+ \iu\omega t} (\spin+\otimes\ad) \psi_t }{ \mrm{e} ^ {-\iu\omega t}(\spin-\otimes\an) \psi_t }
			\\
			&+\innerp[\Big]{ \mrm{e} ^ {-\iu\omega t} (\spin-\otimes\an) \psi_t }{ \mrm{e} ^ {+\iu\omega t}(\spin+\otimes\ad) \psi_t }
			\biggr],
		\end{aligned}
		\intertext{
			collect operators on one side of the inner-product in the first two summands, and identify \(\Re(z)= 2^{-1}(z+\bar z)\) in the last two summands:
		}
		&=\begin{aligned}[t]
		\frac{\lambda^2}{\omega^2}\sin^2(\omega t) 
		\biggl[
			&\innerp[\Big]{ \psi_t }{ (\spin-\spin+\otimes\an\ad) \psi_t }
			\\
 			&+ \innerp[\Big]{ \psi_t }{ (\spin+\spin-\otimes\ad\an) \psi_t }
			\\%
			&+ 2\Re\of*{\mrm{e} ^ {+\iu2\omega t}\innerp[\Big]{ (\spin-\otimes\an) \psi_t }{(\spin+\otimes\ad) \psi_t }}
			\biggr],
		\end{aligned}
		\raisetag{7em}
		\intertext{
			bound \(\Re(z)\leq \abs{z} = \sqrt{\Re(z)^2+\Im(z)^2}\), and separate the spin operators from the bosonic operators in the last line:
		}
			&\leq\begin{aligned}[t]
			\frac{\lambda^2}{\omega^2}\sin^2(\omega t)
			\biggl[
			&\innerp[\Big]{ \psi_t }{ \of*{\spin-\spin+\otimes\of{\ad\an+\one}} \psi_t }
			\\
 			&+ \innerp[\Big]{ \psi_t }{ \of*{\spin+\spin-\otimes\ad\an} \psi_t }
 			\\
			&+ 2\abs*{\innerp[\Big]{\of*{(\spin-)^2\otimes\one} \psi_t }{\of*{\one\otimes(\ad)^2} \psi_t }}
			\biggr],
            \label{eq:dicke-tc-general-bound-proof-SpSm-choice}
		\end{aligned}
		\intertext{
			use the Cauchy--Schwarz inequality and the positivity of \(\spin+\spin-\):
		}
			&\leq\begin{aligned}[t]
			\frac{\lambda^2}{\omega^2}\sin^2(\omega t)	
			\biggl[
			&\innerp[\Big]{ \psi_t }{ \of*{\spin-\spin+\otimes\of{\ad\an+\one}} \psi_t }
			\\
 			&+ \innerp[\Big]{ \psi_t }{ (\spin+\spin-\otimes\ad\an) \psi_t }
 			+ \innerp[\Big]{\psi_t }{(\spin+\spin-\otimes\one) \psi_t }
 			\\
			&+ 2 \norm[\Big]{\of*{(\spin-)^2\otimes\one} \psi_t }\, \norm[\Big]{\of*{\one\otimes(\ad)^2} \psi_t }
			\biggr],
		\end{aligned}
		\intertext{
		      collect the first three summands and apply \cref{eq:bosonic-monomial-bound} in the last summand:
		}
			&\leq\begin{aligned}[t]
			\frac{\lambda^2}{\omega^2}\sin^2(\omega t)
			\biggl[
			&\innerp[\Big]{ \psi_t }{\of*{\of{\spin-\spin+ + \spin+\spin-}\otimes\of{\ad\an+\one}} \psi_t }
 			\\
			&+ 2 \norm[\Big]{\of*{(\spin-)^2\otimes\one} \psi_t }\, \norm[\Big]{\of[\big]{\one\otimes\sqrt{(\nn+1)(\nn+2)}} \psi_t }
			\biggr],
		\end{aligned}
		\raisetag{4.6em}
		\intertext{
            use \cref{eq:SplusSminusCommutator} and \(\nn+1\leq\nn+2\):
		}
			&\leq\begin{aligned}[t]
			\frac{\lambda^2}{\omega^2}\sin^2(\omega t)
			\biggl[
			&\innerp[\Big]{\psi_t }{\of*{2\of*{{\spin{}}^2 - \Sz^2}\otimes\of{\ad\an +\one}}\psi_t }
			\\
			&+ 2 \norm[\Big]{\of*{(\spin-)^2\otimes\one} \psi_t }\, \norm[\Big]{\of[\big]{\one\otimes (\nn+2)} \psi_t }
			\biggr],
		\end{aligned}
        \intertext{
            use \({\spin{}}^2 - \Sz^2 \geq 0\) and \(\nn + \one \geq 0\):
		}
			&=\begin{aligned}[t]
			\frac{\lambda^2}{\omega^2}\sin^2(\omega t)
			\biggl[
			&\abs*{\innerp[\Big]{2\of*{{\spin{}}^2 - \Sz^2}\otimes \one \psi_t }{\one\otimes\of{\ad\an +\one}\psi_t }}
			\\
			&+ 2 \norm[\Big]{\of*{(\spin-)^2\otimes\one} \psi_t }\, \norm[\Big]{\of[\big]{\one\otimes (\nn+2)} \psi_t }
			\biggr],
		\end{aligned}
		\intertext{
			and finally use the Cauchy-Schwarz inequality:
		}
			&\leq\begin{aligned}[t]
			\frac{\lambda^2}{\omega^2}\sin^2(\omega t)
			\biggl[
			&2\norm[\Big]{\of*{{\spin{}}^2 - \Sz^2}\otimes \one \psi_t} \norm[\Big]{\one\otimes\of{\nn +1}\psi_t}
			\\
			&+ 2 \norm[\Big]{\of*{(\spin-)^2\otimes\one} \psi_t }\, \norm[\Big]{\of[\big]{\one\otimes (\nn+2)} \psi_t }
			\biggr],
		\end{aligned}
        \\
        &\leq\begin{aligned}[t]
			\frac{\lambda^2}{\omega^2}\sin^2(\omega t)
			\biggl[
			&2\norm[\Big]{\of*{{\spin{}}^2 - \Sz^2}\otimes \one \psi_t}
			\\
			&+ 2 \norm[\Big]{\of*{(\spin-)^2\otimes\one} \psi_t }
			\biggr]\norm[\Big]{\of[\big]{\one\otimes (\nn+2)} \psi_t }
		\end{aligned}
        \\
        &= \frac{\lambda^2}{\omega^2}\sin^2(\omega t)\of*{f^{\pm}_C(\psi_t)}^2
            .
		\end{align}    
  Note that in the last term of \cref{eq:dicke-tc-general-bound-proof-SpSm-choice}, we chose to move the spin operators to the left side of the inner product and the bosonic ones to the right. This leads to the appearance of \(\norm{\of{\spinalt-}^2\psi}\) in the final bound. Alternatively, one could reverse this separation, placing the bosonic operators to the left and the spin operators to the right. Since the quadratic bosonic term is in both cases estimated using the worst-case bound \cref{eq:bosonic-monomial-bound}, the only difference in the final expression is that \(\norm{\of{\spinalt+}^2\psi}\) would appear instead.
Taking the square root and minimizing over the two possible choices yields	
\begin{equation}\label{eq:dicke-tc-general-bound-proof5}
		\norm[\Big]{S_{21}(t)U_2(t)\psi} 
		\leq \frac{\lambda}{\omega}\abs*{\sin(\omega t)}\min_\pm f^{\pm}_C(\psi_t)
            .
	\end{equation}
	In the second addend of \cref{eq:dicke-tc-general-bound-proof1} we rewrite 
	\begin{align}
		S_{21}(t)H_2 - H_1(t)S_{21}(t) = \comm{S_{21}(t)}{H_2} + \of[\big]{H_2-H_1(t)}S_{21}(t)
        .
		\label{eq:general-bound-proof2}
	\end{align}
	Using the CCR (\cref{eq:CCR}) and the ladder algebra (\cref{eq:ladder-algebra}) respectively, we get
	\begin{align}
		\comm{S_{21}(t)}{H_2} 
		&= -\frac{\lambda^2}{\omega}\sin(\omega t) \comm[\Big]{
			\mrm{e} ^ {+\iu\omega t} \spin+\otimes\ad + \mrm{e} ^ {-\iu\omega t}\spin-\otimes\an 
		}{
			\spin+\otimes\an + \spin-\otimes\ad
		}
		\\
		&=\begin{aligned}[t]
			-\frac{\lambda^2}{\omega}\sin(\omega t) 
			\Bigl(
			&\mrm{e} ^ {+\iu\omega t}\of[\big]{
				\of{\spin+}^2\otimes\comm{\ad}{\an} + \comm{\spin+}{\spin-}\otimes\of{\ad}^2
			} 
			\\
			&+ \mrm{e} ^ {-\iu\omega t}\of[\big]{
			 \comm{\spin-}{\spin+}\otimes \an^2  + \of{\spin-}^2\otimes\comm{\an}{\ad}
			}
			\Bigr)
		\end{aligned}
		\\
		&=\begin{aligned}[t]
		 -\frac{\lambda^2}{\omega}\sin(\omega t) 
			\Bigl(
			&\mrm{e} ^ {+\iu\omega t}\of[\big]{
			-\of{\spin+}^2\otimes\one 
			+ 2\SZ\otimes\of{\ad}^2 
			}
			\\
			&+ \mrm{e} ^ {-\iu\omega t}\of[\big]{
			 -2\SZ\otimes \an^2  
			+ \of{\spin-}^2\otimes\one
			}
			\Bigr)
		\end{aligned}
		\\
		&=\begin{aligned}[t]
		-\frac{\lambda^2}{\omega}\sin(\omega t)  
			\Bigl(
			&\s+\s- \otimes \of*{ \mrm{e} ^ {+\iu\omega t}\of{\ad}^2 - \mrm{e} ^ {-\iu\omega t} \an^2 }
			\\
			&-\s-\s+ \otimes \of*{ \mrm{e} ^ {+\iu\omega t}\of{\ad}^2 - \mrm{e} ^ {-\iu\omega t} \an^2 }
			\\
			&+ \mrm{e} ^ {-\iu\omega t} \of[\big]{\s-}^2\otimes\one 
			- \mrm{e} ^ {+\iu\omega t}\of[\big]{\s+}^2\otimes\one
			\Bigl).
		\end{aligned}
	\end{align}
	There, we expanded \(2\SZ=\s+\s- - \s-\s+\) since, for the second addend in \cref{eq:general-bound-proof2}, by \cref{eq:H1,eq:H2,eq:integrated-action} we get
	\begin{align}
		\MoveEqLeft
		\of{H_2-H_1(t)}S_{21}(t) 
		\\
		&\!\begin{multlined}[t]
			=-\frac{\lambda^2}{\omega}\sin(\omega t)\of{
			-\mrm{e} ^ {+\iu2\omega t} \s+\otimes\ad - \mrm{e} ^ {-\iu2\omega t} \s-\otimes\an }
		\\
			\cdot \of{ \mrm{e} ^ {+\iu\omega t} \s+\otimes\ad + \mrm{e} ^ {-\iu\omega t} \s-\otimes\an}
		\end{multlined}
		\\
		&\!\begin{multlined}[t]
		=-\frac{\lambda^2}{\omega}\sin(\omega t)
			\Bigl[
				-\mrm{e} ^ {+\iu 3\omega t}\of[\big]{\s+}^2\otimes\of{\ad}^2
				-\mrm{e} ^ {+\iu \omega t}\s+\s-\otimes\ad\an
		\\
				-\mrm{e} ^ {-\iu \omega t}\s-\s+\otimes\an\ad
				-\mrm{e} ^ {-\iu 3\omega t}\of[\big]{\s-}^2\otimes \an^2
			\Bigr],
		\end{multlined}
	\end{align}
	so that we can collect terms by the respective product of \(\s+\) and \(\s-\).
	By doing so, we obtain
	\begin{align}
	\begin{aligned}[t]
		S_{21}(t)H_2 - H_1(t)S_{21}(t)
		&= -\frac{\lambda^2}{\omega}\sin(\omega t)
		\begin{multlined}[t]
		\Bigl[
		\s+\s- \otimes f^{(2)}_3(\an,\ad)
		-\s-\s+\otimes \tilde f^{(2)}_3(\an,\ad)
		\\
		- \of[\big]{\s+}^2\otimes g^{(2)}_2(\an,\ad)
		+ \of[\big]{\s-}^2\otimes h_2^{(2)}(\an,\ad)
		\Bigr],
		\end{multlined}
	\end{aligned}
	\raisetag{3.8em}
	\end{align}
	where
	\begin{align}
		f^{(2)}_3(\an,\ad) &= \of[\big]{
			\mrm{e} ^ {+\iu\omega t}(\ad)^2 
			- \mrm{e} ^ {-\iu\omega t}\an^2
			- \mrm{e} ^ {+\iu\omega t}\ad\an
		}\\
        \tilde f^{(2)}_3(\an,\ad) &= \of[\big]{
			\mrm{e} ^ {+\iu\omega t}(\ad)^2 
			- \mrm{e} ^ {-\iu\omega t}\an^2
			- \mrm{e} ^ {-\iu\omega t}\an\ad
		}\\
		g^{(2)}_3(\an,\ad) &= \of[\big]{
			\mrm{e} ^ {+\iu\omega t}\an\ad - \mrm{e} ^ {+\iu\omega t}\ad\an + \mrm{e} ^ {+\iu3\omega t}(\ad)^2
		}\\
		h_3^{(2)}(\an,\ad) &= \of[\big]{
			\mrm{e} ^ {-\iu\omega t}\an\ad - \mrm{e} ^ {-\iu\omega t}\ad\an - \mrm{e} ^ {-\iu3\omega t}\an^2
		}.
	\end{align}
	by \cref{eq:general-bound-proof2}.
	In the norm, we can now use the triangle inequality and the \textit{Cauchy--Schwarz trick}~\eqref{eq:Cauchy--Schwarz-trick} to separate the collective spin operators and the field operators.
	By doing this we get
	\begin{align}
		&\norm*{\of[\big]{S_{21}(t)H_2 - H_1(t)S_{21}(t)} \psi_t}
		\\
		&\quad\leq\begin{aligned}[t]
		\frac{\lambda^2}{\omega}
		\biggl[
			&\sqrt{
			\norm[\big]{\of[\big]{\salt+\salt-}^2 \psi_t}
			\norm[\big]{\one\otimes\of[\big]{f^{(2)}_3(\an,\ad)}\adj f^{(2)}_3(\an,\ad) \psi_t}
			}
			\\
			&+ \sqrt{
			\norm[\big]{\of[\big]{\salt-\salt+}^2\psi_t}
			\norm[\big]{\one\otimes\of[\big]{\tilde f^{(2)}_3(\an,\ad)}\adj \tilde f^{(2)}_3(\an,\ad) \psi_t}
			}
			\\
			&+ \sqrt{
			\norm[\big]{ \of[\big]{\salt-}^2 \of[\big]{\salt+}^2 \psi_t }
			\norm[\big]{ \one\otimes\of[\big]{g^{(2)}_3(\an,\ad)}\adj g^{(2)}_3(\an,\ad) \psi_t }
			}
			\\
			&+ \sqrt{
			\norm[\big]{ \of[\big]{\salt-}^2 \of[\big]{\salt+}^2 \psi_t }
			\norm[\big]{ \one\otimes\of[\big]{h_3^{(2)}(\an,\ad)}\adj h_3^{(2)}(\an,\ad) \psi_t }
			}
		\biggr].
	\end{aligned}\label{eq:dicke-tc-general-bound-proof3}
	\end{align}
    Only now we apply the bound for bosonic polynomials (\cref{thm:bosonic-polynomial-bound}) where in \(f^{(2)}_3\), \(g^{(2)}_3\), and \(h^{(2)}_3\) all coefficients are complex phases with \(\abs{\lambda_i} = 1\) and only the bosonic side of the system is considered. For example,
	the product 
     \begin{equation}
        \of[\big]{f^{(2)}_3(\an,\ad)}\adj f^{(2)}_3(\an,\ad)
     \end{equation}
    is a polynomial of \(3\cdot3=9\) addends of order up to \(2+ 2=4\).
	Therefore, \cref{thm:bosonic-polynomial-bound} gives us
	\begin{align}
	\MoveEqLeft
		\norm[\Big]{\one\otimes\of[\big]{f^{(2)}_3(\an,\ad)}\adj f^{(2)}_3(\an,\ad) \psi_t} 
		\\
		&\leq 9\norm*{\sqrt{(\nnalt+1)(\nnalt+2)(\nnalt+3)(\nnalt+4)} \psi_t}
		\\
		&\leq 9 \norm*{\sqrt{(\nnalt+4)(\nnalt+4)(\nnalt+4)(\nnalt+4)} \psi_t}
		\\
		&= 9 \norm*{(\nnalt+4)^2 \psi_t}.
	\end{align}
	Similarly, the other products of bosonic polynomials are also bounded by \(9 \norm{(\nnalt+4)^2 \psi_t}\), as they consist of \(9\) addends of order \(4\).
	Therefore, in \cref{eq:dicke-tc-general-bound-proof3} we get
	\begin{align}
	\MoveEqLeft
		\norm*{\of[\big]{S_{21}(t)H_2 - H_1(t)S_{21}(t)}\psi_t}
		\\
		&\leq 
		\begin{aligned}[t] 
		3\frac{\lambda^2}{\omega}
		\biggl[
			&\norm[\big]{ \of{\salt-}^2 \of{\salt+}^2\psi_t } ^{\nicefrac{1}{2}}
			+\norm[\big]{ \of{\salt-}^2 \of{\salt+}^2\psi_t } ^{\nicefrac{1}{2}}
			\\
			&+ \norm[\big]{\of{\salt+\salt-}^2\psi_t} ^{\nicefrac{1}{2}}
			+ \norm[\big]{\of{\salt-\salt+}^2\psi_t}^{\nicefrac{1}{2}}
		\biggr]
		\cdot\sqrt{\norm{(\nnalt+4)^2 \psi_t}},
		\end{aligned}
	\raisetag{3.75em}
    \\
    &= 3\frac{\lambda^2}{\omega} f_L(\psi_t)
	\label{eq:dicke-tc-general-bound-proof4}
	\end{align}
	Combining \cref{eq:dicke-tc-general-bound-proof1,eq:dicke-tc-general-bound-proof4,eq:dicke-tc-general-bound-proof5} gives \cref{eq:intermediate-bound} and thus concludes the proof.
\end{proof}

\begin{proof}[{Proof of \cref{thm:scaling-S-M-bound}}]\label{proof:scaling-S-M-bound}
We express \(\tcpsi{S}{M}{\sigma}\) in terms of Dicke states \(\ket{S,m}\) and Fock states \(\phi_n\) using \cref{eq:TC-eigenstates}:
    \begin{align}
        \tcpsi{S}{M}{\sigma}
        &= \prod_{i=1}^M \of*{\tcpar{i}{\sigma}\adalt - \salt+} \ket{S,-S}\otimes \phi_0
        \\
        &= \sum_{Z \subset \Set{\tcpar{i}{\sigma}}_{i=1}^{i=M}} (-1)^{M - \abs{Z}} \of[\Big]{\prod_{z\in Z}z} \of[\big]{\adalt}^{\abs{Z}} \of[\big]{\salt+}^{M-\abs{Z}} \ket{S,-S}\otimes \phi_0
        \\
        \label{eq:tcpsi-expansion}
        &= \sum_{n=M-2S}^{M} c_{S,M,n}^\sigma \ket{S, M-n-S}\otimes\phi_n
    \end{align}
    where 
    \begin{align}
        \label{eq:tc-eigenstates-coefficients}
        c_{S,M,n}^\sigma =  (-1)^{M-n} \sqrt{\frac{n!(M-n)!(2S)!}{(2S - M +n)!}} \sum_{\substack{Z \subset \Set{\tcpar{i}{\sigma}}_{i=1}^{i=M}\\ \abs{Z}=n}}\prod_{z\in Z}z
    \end{align}
    and \(\abs{Z}\) is the number of elements in the subset \(Z\).
    Here we used the following property:
    \begin{align}
        \label{eq:Spm-dicke-states}
        \spin\pm \ket{S,m} 
        &= \begin{cases}
            C^\pm_{S,m} \ket{S, m\pm 1}, & \abs{m\pm 1} \leq S
            \\
            0,& \text{else}
        \end{cases}
        \shortintertext{where}
        \label{eq:def-CpmSm}
        C^\pm_{S,m} &= \sqrt{\of{S\mp m} \of{S\pm m +1}},
        \shortintertext{from which we obtain}
        \of[\big]{\spin+}^n \ket{S,-S} 
        &= \prod_{k=0}^{n-1} C^+_{S,-S+k}\ket{S, -S + n}
        \\
        &= \of*{ \prod_{k=0}^{n-1} (2S-k)(k+1) }^{\nicefrac{1}{2}} \ket{S, -S + n}
        \\
        &= \of*{ \prod_{k=0}^{n-1} (2S-k)}^{\nicefrac{1}{2}} \sqrt{n!} \ket{S, -S + n}
        \\
        &= \sqrt{\frac{(2S)!\, n!}{(2S-n)!}}  \ket{S, -S + n}
    \end{align}
    for \(n\leq 2S\) and \(\of[\big]{\spin+}^n \ket{S,-S} = 0\) for \(n>2S\).
    As the summands in \cref{eq:tcpsi-expansion} are mutually orthogonal, we immediately obtain
    \begin{gather}
        \label{eq:tc-eigenstates-norm}
        \norm{\tcpsi{S}{M}{\sigma}} 
        = \of*{\sum_{n = M-2S}^M \abs{c_{S,M,n}^\sigma}^2}^{\nicefrac{1}{2}}
        .
    \end{gather}
    
    \Cref{eq:tcpsi-expansion} also allows us to evaluate the remaining terms in the bound~\eqref{eq:general-bound} when taking into account 
    \begin{gather}
    \psi 
    = \tcpsinormed{S}{M}{\sigma}
    =\frac{\tcpsi{S}{M}{\sigma}}{\norm[\big]{\tcpsi{S}{M}{\sigma}}}
    = \sum_{n=M-2S}^{M} \tilde c_{S,M,n}^\sigma \ket{S, M-n-S}\otimes\phi_n
    ,
    \\ \\
    \tilde c_{S,M,n}^\sigma = \frac{c_{S,M,n}^\sigma}{\of[\Big]{\sum_{n = M-2S}^M \abs{c_{S,M,n}^\sigma}^2}^{\nicefrac{1}{2}}}
    .
    \end{gather}
    For example, we have
    \begin{align}
    	\norm{\salt+\salt+ \tcpsinormed{S}{M}{\sigma}}
    	&= \norm*{\sum_{n=M-2S}^{M} \tilde c_{S,M,n} ^\sigma \salt+\salt+\ket{S, M-n-S}\otimes\phi_n}
    	\\
        \label{eq:scaling-S-M-proof-1}
    	&= \left(\sum_{n = M-2S+2}^{M}\right. \underbrace{\abs{\tilde c_{S,M,n} ^\sigma}^2}_{\leq 1}\left.\vphantom{\sum_{n = M-2S+2}^{M}} \of*{C^+_{S,M-n-S+1} C^+_{S,M-n-S}}^2 \right)^{\nicefrac{1}{2}}
        \\
        &\leq \of*{\sum_{n = M-2S+2}^{M} \of*{C^+_{S,M-n-S+1} C^+_{S,M-n-S}}^2 }^{\nicefrac{1}{2}}
        \\
        &\!\begin{multlined}[t]
            = \left(\sum_{n = M-2S+2}^{M} \Bigl(\of{S- (M-n-S+1)} \of{S+ (M-n-S+1) +1}\cdot\right.
            \\
            \left.\cdot\of{S- (M-n-S)} \of{S+ (M-n-S) +1}\Bigr) \vphantom{\sum_{n = M-2S+2}^{M}}\right)^{\nicefrac{1}{2}}
        \end{multlined}
        \\
        &\leq \of*{\sum_{n = M-2S+2}^{M} \of*{n-(M-2S)}^2 \of*{M+2-n}^2 } ^{\nicefrac{1}{2}}
        \\
        &= \of*{\sum_{n = 0}^{2S-2} \of*{n+2}^2 \of*{2S-n}^2 } ^{\nicefrac{1}{2}}
        \\
        &= \sqrt{\frac{2}{15}}\of*{8 S^5+20 S^4+10 S^3-5 S^2-3 S}^{\nicefrac{1}{2}}
    	.
    \end{align}
    Analogously, we get
    \begin{align}
    	\norm*{\salt-\salt- \tcpsinormed{S}{M}{\sigma}}
    	&\leq \of*{\sum_{n = M-2S}^{M-2} C^-_{S,M-n-S-2} C^-_{S,M-n-S-1} }^{\nicefrac{1}{2}}
        \\
        &\leq \frac{2}{\sqrt{15}}\of*{8 S^5+20 S^4+10 S^3-5 S^2-3 S}^{\nicefrac{1}{2}}
    	,
        \\
    	\norm*{\of*{{\spinalt{}}^2 - \boldsymbol{S}_z^2} \tcpsinormed{S}{M}{\sigma}}
    	&\leq \of*{\sum_{n = M-2S}^{M}  \of*{S(S+1)-(M-n-S)^2}^2}^{\nicefrac{1}{2}}
        \\
        &= \sqrt{\frac{1}{15
        }}\of*{16 S^5+40 S^4+30 S^3+5 S^2-S}^{\nicefrac{1}{2}}
        ,
        \\
    	\norm*{\of*{\salt+\salt-}^2 \tcpsinormed{S}{M}{\sigma}}
    	&\leq \of*{\sum_{n = M-2S}^{M-1}  \of*{M-n}^4\of*{2S - M+n+1}^4}^{\nicefrac{1}{2}}
        \\
        &\!\begin{multlined}[t]
            = \frac{4}{{3 \sqrt{35}}} \bigl(16 S^9+72 S^8+144 S^7 +168 S^6+126 S^5
            \\
            +63 S^4+26 S^3+12 S^2+3S\bigr)^{\nicefrac{1}{2}}
        \end{multlined}
    	,
        \\
    	\norm*{\of*{\salt+}^2\of*{\salt-}^2 \tcpsinormed{S}{M}{\sigma}}
    	&\leq \of*{\sum_{n = M-2S}^{M-2}  \of*{M-n}^4\of*{2S-M+n+2}^4}^{\nicefrac{1}{2}}
        \\
        &\!\begin{multlined}[t]
            = \frac{1}{3\sqrt{35}}\bigl(256 S^9+2304 S^8+9216 S^7+21504 S^6+32256 S^5
            \\
            +22176 S^4+1424S^3-5664 S^2-2517 S-315\bigr)^{\nicefrac{1}{2}}
        \end{multlined}
    	,
        \\
    	\norm*{\of*{\salt-\salt+}^2 \tcpsinormed{S}{M}{\sigma}}
    	&\leq \of*{\sum_{n = M-2S+1}^{M} \of*{M-n+1}^4\of*{2S - M+n}^4}^{\nicefrac{1}{2}}
        \\
        &\!\begin{multlined}[t]
            = \frac{4}{3 \sqrt{35}} \bigl(16 S^9+72 S^8+144 S^7+168 S^6+126 S^5
            \\
            +63 S^4+26 S^3+12 S^2+3
            S\bigr)^{\nicefrac{1}{2}}
        \end{multlined}
        ,
        \\
    	\norm*{\of*{\salt-}^2\of*{\salt+}^2 \tcpsinormed{S}{M}{\sigma}}
    	&\leq \of*{\sum_{n = M-2S+2}^{M}  \of*{M-n+2}^4 \of*{2S-M+n+1}^4}^{\nicefrac{1}{2}}
        \\
        &\!\begin{multlined}[t]
            =\frac{4}{3\sqrt{35}} \bigl(16 S^9+216 S^8+1296 S^7+4536 S^6+10206 S^5
            \\
            +9639 S^4+2714S^3-1476 S^2-1317 S-315\bigr)^{\nicefrac{1}{2}}
        \end{multlined}
        ,
        \\
        \norm[\big]{\of{\nnalt + 2 } \tcpsinormed{S}{M}{\sigma}}
        &\leq \of*{\sum_{n = M-2S}^{M} (n+2)^2}^{\nicefrac{1}{2}}
        \\
        &\!\begin{multlined}[t]
            =\frac{1}{\sqrt{3}}\bigl(8S^3 +3M^2 +6M^2S -12MS^2 +18MS  
            \\
            +12M -18S^2 +13S +12\bigr)^{\nicefrac{1}{2}}
        \end{multlined}
        ,
        \shortintertext{and}
        \norm[\big]{\of{\nnalt + 4}^2 \tcpsinormed{S}{M}{\sigma}}
        &\leq \of*{\sum_{n = M-2S}^{M} (n+4)^4}^{\nicefrac{1}{2}}
        \\
        &\!\begin{multlined}[t]
            = \frac{1}{\sqrt{15}}\bigl(
            96S^5 +15M^4 +30M^4S -120M^3S^2 +420M^3S 
            \\
            +240M^2S^3 -1260M^2S^2 +2190M^2S +240M^3 +1440M^2
            \\
            -240MS^4 +1680MS^3 -4380MS^2 +5040MS +3840M
            \\
             -840S^4 +2920S^3 -5040S^2 +4319S +3840
            \bigr)^{\nicefrac{1}{2}}
        \end{multlined}
        ,
    \end{align}
    where we also used
    \begin{align}
    	C^+_{S,m-1}C^-_{S,m} 
    	&= \of*{C^-_{S,m}}^2
    	.
    \end{align}
    Inserting these estimates into \cref{eq:def-fC-fL} and simplifying the resulting expression gives
    \begin{align}
        \label{eq:f-C-explicit-eval}
        f_C^\pm\of*{\tcpsi{S}{M}{\sigma}}
        &\leq\!\begin{multlined}[t] 
           \frac{\sqrt{2}}{\sqrt[4]{45}}\of*{(2 S+1) \left(3 M^2-6 M (S-2)+4 S^2-11S+12\right)}^{\nicefrac{1}{4}}
            \cdot
           \\
           \cdot
            \left(
            \sqrt{2} \sqrt{S \left(8 S^4+20 S^3+10 S^2-5 S-3\right)}
            \right.
            \\
            \left.
            +\sqrt{S\left(16 S^4+40 S^3+30 S^2+5 S-1\right)}
            \right)^{\nicefrac{1}{2}}
        \end{multlined}
        \shortintertext{and}
        \label{eq:f-L-explicit-eval}
        f_L\of*{\tcpsi{S}{M}{\sigma}}
        &\leq\!\begin{multlined}[t] 
            \frac{2}{\sqrt[4]{4725}}
            \Bigg(
                2 \left(
                    16 S^9+72 S^8+144 S^7+168 S^6+126 S^5+63 S^4+26 S^3+12S^2
                \right.
                \\
                \left.
                +3S \vphantom{12S^2}
                \right)^{\nicefrac{1}{4}}
                +2^{3/4} 
                \left(
                    32 S^9+144 S^8+240 S^7+168 S^6+42 S^5+21 S^4
                \right.
                \\
                \left.
                    +10S^3-18 S^2-9S
                \right)^{\nicefrac{1}{4}}
            \Bigg)
                (2 S+1)^{\nicefrac{1}{4}}
            \Bigg(
                    15 M^4-60 M^3 (S-4)+1682S^2
                \\
                    +30 M^2 
                    \left(
                        4S^2-23 S+48
                    \right)
                    -60 M \left(
                        2 S^3-15 S^2+44 S-64
                    \right)
            \\
                    +48 S^4-444 S^3
                    -3361 S+3840
            \Bigg)^{\nicefrac{1}{4}}
            .
        \end{multlined}
    \end{align}
    By considering the highest order terms in each of these expressions, we infer 
    \begin{align}
        f_C^{\pm}\of*{\tcpsi{S}{M}{\sigma}} 
        &= \mathcal{O}\of*{S^2 M^{\nicefrac{1}{2}}}
        \\
        f_L\of*{\tcpsi{S}{M}{\sigma}}
        &= \mathcal{O}\of*{S^{\nicefrac{7}{2}} M}
        .
    \end{align}
    This means there exist \(a,b,c,d,e,f \in \R\) s.t. 
    \begin{align}
        f_C^{\pm}\of*{\tcpsi{S}{M}{\sigma}}
        \leq a (S+b)^2 (M+c)^{\nicefrac{1}{2}}
        \label{eq:scaling-S-M-proof-3}
        \\
        f_l\of*{\tcpsi{S}{M}{\sigma}}
        \leq d (S+e)^{\nicefrac{7}{2}} (M+f)
        \label{eq:scaling-S-M-proof-4}
        .
    \end{align}
   One possible choice of such coefficients is \(a=2,\, b=1,\, c=2,\, d=6,\, e=1,\, f=2\).
Inserting \cref{eq:scaling-S-M-proof-3,eq:scaling-S-M-proof-4} into \cref{eq:general-bound} with this choice of parameters yields \cref{eq:scaling-S-M-bound}.
\end{proof}

\setlength{\emergencystretch}{1em}
\bibliographystyle{myquantum}
\bibliography{CONTENT/references}
\end{document}